\tikzset{me/.style={to path={
\pgfextra{% 
 \pgfmathsetmacro{\startf}{-(#1-1)/2}  
 \pgfmathsetmacro{\endf}{-\startf} 
 \pgfmathsetmacro{\stepf}{\startf+1}}
 \ifnum 1=#1 -- (\tikztotarget)  \else
     let \p{mid}=($(\tikztostart)!0.5!(\tikztotarget)$) 
         in
\foreach \i in {\startf,\stepf,...,\endf}
    {%
     (\tikztostart) .. controls ($ (\p{mid})!\i*6pt!90:(\tikztotarget) $) .. (\tikztotarget)
      }
      \fi   
     \tikztonodes
}}}
\definecolor{color1}{RGB}{128,0,0}     %maroon
\definecolor{color2}{RGB}{170,110,40}   %brown
\definecolor{color3}{RGB}{128,128,0}    %olive
\definecolor{color4}{RGB}{0,128,128}     %teal
\definecolor{color5}{RGB}{0,0,128}       %navy
\definecolor{color6}{RGB}{0,0,0}        %black
\definecolor{color7}{RGB}{230,25,75}      %red
\definecolor{color8}{RGB}{245,130,48}  %orange
\definecolor{color9}{RGB}{255,225,25}  %yellow
\definecolor{color10}{RGB}{210,245,60}    %lime
\definecolor{color11}{RGB}{60,180,75}    %green
\definecolor{color12}{RGB}{70,240,240}   %cyan
\definecolor{color13}{RGB}{0,130,200}    %blue
\definecolor{color14}{RGB}{145,30,180}   %purple
\definecolor{color15}{RGB}{240,50,230}   %magenta
\definecolor{color16}{RGB}{128,128,128}  %grey
\definecolor{color17}{RGB}{250,190,212}  %pink
\definecolor{color18}{RGB}{215,215,180}  %apricot
\definecolor{color19}{RGB}{255,250,200}  %beige
\definecolor{color20}{RGB}{170,255,195}  %mint
\definecolor{color21}{RGB}{220,190,255}  %lavender
\definecolor{color22}{RGB}{255,255,255}  %white
\newcommand{\disjointTSP}{\ensuremath{\texorpdfstring{\textsc{Disjoint-TSP}_2}{TSP2}}}
\newcommand{\disjointSHP}{\ensuremath{\texorpdfstring{\textsc{Disjoint-Path TSP}_2}{Path TSP2}}}
\newcommand{\uniformLine}{\ensuremath{\texorpdfstring{\mathcal{R}_n^1}{R1n}}}
\newcommand{\uniformLineGeomGraph}{\ensuremath{\texorpdfstring{R_n^1}{R1n}}}
\newcommand{\uniformCircle}{\ensuremath{\texorpdfstring{\mathcal{S}_n^1}{S1n}}}
\newcommand{\uniformCircleGeomGraph}{\ensuremath{\texorpdfstring{S_n^1}{S1n}}}
\newcommand{\realLine}{\ensuremath{\texorpdfstring{\mathbb{R}^1}{R1}}}
\newcommand{\realCircle}{\ensuremath{\texorpdfstring{\mathbb{S}^1}{S1}}}
\newcommand{\generalClassGeomGraphs}{\ensuremath{\texorpdfstring{\mathcal{M}}{M}}}
\begin{document}
\title{Disjoint Tours and the Price of Diversity\thanks{This research was supported by the European Union’s Horizon 2020 research and innovation programme under the Marie Skłodowska-Curie grant agreement no. 945045, and by the NWO Gravitation project NETWORKS under grant no. 024.002.003.}
}
\author{Mark de Berg \and
Andr\'es {L\'opez Mart\'inez} \and
Frits Spieksma}
\authorrunning{M. de Berg et al.}
\institute{Department of Mathematics and Computer Science\\TU Eindhoven}
\maketitle              
\begin{abstract}

%We study how the demand for multiple diverse solutions influences tour quality in the Traveling Salesman Problem (TSP) with triangle inequality. Specifically, we consider the problem of finding two edge-disjoint tours that minimize their \textit{bottleneck} cost (the length of the longer tour) and compare this with the cost of an optimal single tour. We prove that the worst-case ratio over all TSP instances, which we term the \textit{Price of Diversity} (PoD), is asymptotically $\frac{8}{5}$ in a (special) one-dimensional case and exactly 2 in general. Our analysis builds on a related problem, finding a pair of edge-disjoint $s$-$t$ paths, for which we establish similar bounds. More broadly, we place our results within a general framework for studying the PoD, providing a formal tool for quantifying the trade-off between diversity and optimality in combinatorial optimization problems.
We study a variant of the Traveling Salesman Problem, where instead of finding a single tour, we want to find a pair of two edge-disjoint tours whose longer tour is as short as possible. We investigate the \emph{Price of Diversity (PoD)} for this problem, which is the ratio of the cost of the longer of the two tours and the cost of a single optimal tour, in the worst case over all possible instances. We prove (almost) tight bounds on this quantity for a special 1-dimensional scenario and for general metric spaces. We believe that the Price-of-Diversity framework that we introduce is interesting in its own right, and may lead to follow-up work on other problems as well.

\keywords{Diverse Solutions \and TSP \and Path TSP \and Price of Diversity}
\end{abstract}
\section{Introduction}
\label{sec:intro} 
%We explore how imposing \textit{diversity} among solutions in a discrete optimization problem affects their individual costs. Simply put, diversity comes at a price. In this paper, we study this phenomenon in the context of the Traveling Salesman Problem (TSP), one of the most fundamental and widely studied problems in combinatorial optimization, and introduce the \textit{price of diversity} (PoD) as a formal measure to quantify the resulting trade-off. 

Traditionally, algorithms for combinatorial optimization problems focus on computing a single, optimal solution. There are many situations, however, where it is desirable to compute multiple solutions that are sufficiently distinct. For example, a decision maker who lacks complete information about the costs or feasibility of candidate solutions may prefer to keep multiple alternatives open until the missing details are revealed, reducing the risk of committing to a suboptimal choice. Likewise, in time-sensitive settings where last-minute constraints can arise, a diverse solution set increases the likelihood that at least one feasible option is available.

Typically it is not possible to find multiple distinct solutions that are all optimal. Thus, requiring a diverse set of solutions comes at a price. In this paper, we study this phenomenon for the Traveling Salesman Problem (TSP): given an edge-weighted graph, find a shortest tour visiting all its vertices. Suppose that instead of computing a single optimal tour, we must produce $k$ tours subject to a diversity requirement, while minimizing their \textit{bottleneck cost}---the length of the longest tour. If too few short tours satisfy this requirement, longer tours must be included, and the longest of the $k$ tours may end up much more expensive than the optimal single tour. 

To measure the trade-off between solution quality and the desired number of solutions, we analyze the ratio between the length of a longest tour in the best set of $k$ diverse tours and the length of an optimal single tour. Taking the worst-case value of this ratio over all instances provides a measure of how much the demand for diversity impacts the cost of solutions. We call this measure the \emph{Price of Diversity (PoD)}, which we introduce more formally and more generally in Section~\ref{section.PoD}. 

\subsubsection{Our problems and results.} \label{sec:not+ter}
We study the price of diversity in the context of the TSP in a metric space (i.e., edge costs satisfy the triangle inequality). Our main focus is on the setting where diversity is enforced through edge-disjointness and $k$, the number of required solutions, equals 2. This is one of the simplest and most natural cases, yet deriving tight bounds is already nontrivial. More precisely, we consider the following problems. 

\begin{extthm}[\disjointSHP{}] 
    Given a weighted graph $G = (V, E)$ and two specified vertices $s, t \in V$, find a pair $(H_1, H_2)$ of edge-disjoint Hamiltonian $(s, t)$-paths that minimizes $cost(H_1, H_2):=\max(c(H_1), c(H_2))$. 
\end{extthm}

\begin{extthm}[\disjointTSP{}] 
    Given a weighted graph $G = (V, E)$, find a pair $(T_1, T_2)$ of edge-disjoint tours (that is, Hamiltonian cycles) that minimizes $cost(T_1, T_2):=\max(c(T_1), c(T_2))$. 
\end{extthm}

We establish the PoD of \disjointSHP{} and \disjointTSP{} for different metric spaces. Already in one dimension, these problems exhibit nontrivial behavior. In Section~\ref{sec.1}, we consider \disjointSHP{} on the line with uniformly spaced vertices and show that the PoD equals $\tfrac{13}{7}$ (approaching $\tfrac{8}{5}$ asymptotically). In Section~\ref{sec.2}, we extend the analysis to \disjointTSP{} on the circle with uniformly spaced vertices. %Parts of these proofs, specifically the cases with few vertices, rely on exhaustive computer-aided case analysis.

%The key idea behind both results is to analyze how often a unit interval, or \textit{segment}, between consecutive vertices is \textit{covered} by a pair of disjoint paths or tours. Building on this, 
In Section~\ref{sec.3}, we turn to arbitrary metrics and show that the price of diversity of \disjointSHP{} is exactly $3$, while the PoD of \disjointTSP{} is exactly $2$. 
Each of these bounds is matched by a constructive algorithm that produces a pair of edge-disjoint solutions whose bottleneck cost achieves the claimed PoD. Note that we are, in general, not interested in the running time of the algorithms; %but in a measure we call the \textit{loss ratio}. We introduce this measure in Section~\ref{section.PoD} and 
we briefly discuss efficiency considerations in Section \ref{sec.4}.  

\subsubsection{Related Work.} 
The generation of multiple solutions has been studied in many areas of computer science, including constraint programming \cite{hebrard2005finding,petit2015finding}, mixed integer optimization \cite{ahanor2024diversitree,danna2009select,glover2000scatter}, $k$-best enumeration \cite{eppstein2016k,hamacher1985k}, genetic algorithms \cite{do2022niching,gabor2018preparing}, social choice \cite{boehmer2021broadening}, and computational economics \cite{gupta2025balancing,Paris2016}. More recently, attention has shifted to producing not just many but \textit{diverse} solutions \cite{baste2019fpt,fomin2024diverse,hanaka2021finding,misra2024parameterized,shida2024finding}.

There are many ways to specify a diversity requirement for a collection of solutions. Of particular relevance in this work is the case when diversity is specified in terms of disjointness. For problems where the output is a set of edges, such as matching, spanning tree, or minimum $s$-$t$ cut, edge-disjointness serves as a natural diversity criterion. Recent works have addressed such settings \cite{de2023finding,fomin2024diverse,hanaka2021finding}.  

In the context of the TSP, several works have considered the problem of finding edge-disjoint tours in specific graph classes. We mention the works of Rowley and Bose~\cite{rowley1991edge} on De Bruijn graphs, L\"u and Wu~\cite{lu2019edge} on balanced hypercubes, and Alspach~\cite{alspach2008wonderful} on complete graphs. The peripatetic $k$-TSP~\cite{krarup1995peripatetic} similarly requires multiple edge-disjoint tours, aiming to minimize their total length. Another related problem is Two-TSP, where $2n$ points in the plane must be partitioned into two equal-sized sets, and the objective is to minimize the maximum TSP tour length over the two parts \cite{bereg2022two}. These works focus on computing multiple non-overlapping solutions but do not quantify the cost overhead incurred by the disjointness requirement.

In this work, we introduce the price of diversity (PoD) as a measure of the tradeoff between solution quality and the requirement to compute multiple diverse solutions. The term has appeared before, especially in computational social choice, but usually refers to the cost of enforcing diversity within a \emph{single} solution---for example, in matching, assignment, clustering, or committee selection \cite{ijcai2017p6,benabbou2020price,thejaswi2021diversity,bredereck2018multiwinner}. In contrast, we analyze the PoD in the setting where a \textit{collection} of diverse solutions is required. To our knowledge, this is the first formal treatment of this perspective.

\section{Preliminaries} \label{section.PoD}

\subsubsection{The PoD Framework.} Let $\Pi$ be a minimization problem---like TSP---where each instance $I$ has an associated set of \textit{feasible solutions} $\Gamma(I)$. Each solution $s \in \Gamma(I)$ has a cost $c(s)$, and the objective is to find an optimal solution; that is, a feasible solution with minimum cost, $\mbox{OPT}(I)$. Consider now the following \textit{diverse} variant of $\Pi$, called simply $\Pi_k$, which, given an instance $I$ of problem $\Pi$ and a fixed integer $k > 0$, is tasked with finding a \textit{minimum-cost} set of $k$ feasible solutions to $I$ that satisfies a specified \textit{diversity requirement}. 

\begin{extthm}[\textsc{Problem} $\Pi_k$]
    Given an instance $I$ of $\Pi$, find a set $S$ of $k$ feasible solutions that (i) satisfies a \textit{diversity requirement} and (ii) has minimum \textit{cost}. 
\end{extthm}
        
Depending on the problem, there are many ways to define a diversity requirement. For example, for problems where solutions are subsets of a larger set, one could measure the diversity of a set of solutions as the cardinality of their union, the average size of their pairwise symmetric differences, or the minimum size among these differences. A requirement can then be imposed for the diversity to be at least a given value $t$; compare this with the edge-disjointness requirement for \textsc{TSP} in our earlier example. Similarly, the cost of a collection of feasible solutions can be defined in different ways. For instance, in \textsc{TSP}, one might measure cost as the length of the longest tour in the collection, or as the average length of all tours.

Returning to problem $\Pi_k$, let $\mathcal{S}(I) \in \Gamma(I)^k$ represent the collection of all $k$-sized sets of feasible solutions for instance $I$ that satisfy the diversity requirement, and let $cost(S)$ denote the cost of a set $S \in \mathcal{S}(I)$ of feasible solutions. Further, let $\mathcal{I}(\Pi)$ denote the set of all instances of problem $\Pi$. 
We define the Price of Diversity of problem $\Pi_k$, abbreviated with $PoD(\Pi_k)$, as 
\begin{equation} \label{eq.PoD}
    PoD(\Pi_k) = \underset{I \in \mathcal{I}(\Pi)}{\mbox{sup}} \left\{ \frac{ \min_{S \in \mathcal{S}(I)} cost(S) }{OPT(I)}\right\}.
\end{equation}

In plain words, $PoD(\Pi_k)$ represents the supremum, taken over all instances of problem $\Pi$, on the ratio between the minimum cost among diverse sets of feasible solutions and the cost of a single optimal solution. Note that we have assumed that $\Pi$ is a minimization problem, but this is without loss of generality as a similar measure to \eqref{eq.PoD} can be defined for maximization problems in a straightforward manner.

In some cases, as we shall see in Sections \ref{sec.1} and \ref{sec.2}, analyzing the \textit{asymptotic} worst-case behavior of $PoD(\Pi_k)$ provides a more meaningful perspective on the trade-off between diversity and solution quality. This is because the supremum of the ratio in \eqref{eq.PoD} may be dictated by a small instance, making $PoD(\Pi_k)$ less representative of its behavior across all instances. To address this, we define the \textit{asymptotic price of diversity} $PoD^\infty(\Pi_k)$ as

\begin{equation} \label{eq.PoDinf}
    PoD^\infty(\Pi_k)=\lim\limits_{n \to \infty}\sup\limits_{I\in \mathcal{I}(\Pi), |I|=n} \left\{  \frac{\min_{S \in\mathcal{S}(I)}cost(S)}{OPT(I)}\right\}.
\end{equation}

While $PoD$ and $PoD^\infty$ are properties corresponding to a particular problem (including a specification of diversity), we also want to be able to express the quality of an algorithm that outputs $k$ diverse solutions in terms of the quality of the set of diverse solutions as used above. Indeed, let $A$ be an algorithm for problem $\Pi_k$, and let $A(I)$ denote the set of $k$ solutions that algorithm $A$ produces when given instance $I$ of problem $P$. We define both the \textit{loss ratio} $LR_A$, and the \textit{asymptotic loss ratio} $LR^\infty_A$, of algorithm $A$ as 
\[
    LR_A = \underset{I \in \mathcal{I}(\Pi)}{\mbox{sup}} \left\{ \frac{cost(A(I))}{OPT(I)} \right\} \text{ and } LR^\infty_A = \lim_{n \rightarrow \infty} \sup\limits_{I\in \mathcal{I}(\Pi), |I|=n} \left\{ \frac{cost(A(I))}{OPT(I)} \right\}.
\]

\subsubsection{Notation.}
For a weighted graph $G = (V, E)$, let $n := |V|$ and $m := |E|$ denote the number of vertices and edges, respectively. If $G$ is a complete graph whose nodes correspond to points in a metric space and whose edge weights correspond to the distances between these points, then we call $G$ a \emph{metric graph}. We use $\mathcal{M}$ to denote the class of all metric graphs. Note that this is simply the class of complete weighted graphs whose edge weights satisfy the triangle inequality (and are non-negative). We are especially interested in $\mathcal{R}$, the class of metric graphs in $\mathbb{R}^1$, and in $\mathcal{S}$, the class of metric graphs in $\mathbb{S}^1$, where distances are measured using standard Euclidean geometry. (Here $\mathbb{S}^1$ denotes the circular 1-dimensional space.) We define $R_n^1\in \mathcal{R}$, the \emph{uniform metric graph on $n$ vertices in $\mathbb{R}^1$}, to be the metric graph whose vertices correspond to the set $[n] \subset \mathbb{R}^1$. Similarly, we define $\uniformCircleGeomGraph{} \in \mathcal{S}$, the \emph{uniform metric graph on $n$ vertices in $\mathbb{S}^1$}, to be the metric graph whose vertices correspond to evenly spaced points in a circle of length $n$, such that the angle between consecutive vertices is $2\pi/n$ and the distance (arc length) between them is 1. 
For a metric graph in $\mathcal{R}$ (or $\mathcal{S}$), we refer to the edges that connect consecutive vertices along $\mathbb{R}^1$ (or $\mathbb{S}^1$) as \textit{segments}.

\section{The Price of Diversity of \disjointSHP{} in \uniformLine{}} \label{sec.1}

In this section, we establish both the regular and the asymptotic price of diversity of \disjointSHP{} for the class of uniform metric graphs in $\mathbb{R}^1$, specifically when $s = 1$ and $t = n$, thereby proving the following. 

\begin{restatable}[]{theorem}{mainTheoremTwo} \label{thm:2}
    $PoD(\disjointSHP{}) = \frac{13}{7}$, and $PoD^\infty(\textsc{Disjoint-Path}$\\ $\textsc{TSP}_2) = \frac{8}{5}$ for the class of uniform metric graphs in $\mathbb{R}^1$ when $s = 1$ and $t = n$. 
\end{restatable}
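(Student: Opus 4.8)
On the uniform geometric graph $\uniformLineGeomGraph{}$ with $s=1$ and $t=n$ the unique shortest Hamiltonian $(1,n)$-path is $(1,2,\dots,n)$, so $OPT(I)=n-1$; I will write $f(n)$ for the minimum, over all pairs of edge-disjoint Hamiltonian $(1,n)$-paths, of the length of the longer path (finite exactly for $n\ge 6$), and the goal is to show $\sup_{n\ge 6} f(n)/(n-1)=13/7$ and $\lim_{n\to\infty} f(n)/(n-1)=8/5$. The main device will be a cut argument: for a Hamiltonian $(1,n)$-path $P$ and $1\le i\le n-1$, let $x_i(P)$ count the edges of $P$ crossing the cut that separates $\{1,\dots,i\}$ from $\{i+1,\dots,n\}$. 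Each $x_i(P)$ is odd and positive; since an edge of length $\ell$ crosses exactly $\ell$ cuts, $|P|=\sum_{i=1}^{n-1} x_i(P)$; and $x_i(P)=1$ precisely when the first $i$ vertices of $P$ equal $\{1,\dots,i\}$. Put $S(P)=\{\,i:x_i(P)=1\,\}$.

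The crux of the lower bound is the claim that for edge-disjoint $P_1,P_2$ the set $T:=S(P_1)\cap S(P_2)$ contains no two integers at distance $2$ or $3$, whence $|T|\le\tfrac25(n-1)+2$. Indeed, if $i,i+2\in T$ then in each $P_j$ the two vertices $i+1,i+2$ occupy positions $i+1$ and $i+2$, forcing both paths to use the edge $\{i+1,i+2\}$; and if $i,i+3\in T$ then in each $P_j$ the vertices $i+1,i+2,i+3$ occupy positions $i+1,i+2,i+3$, and any two Hamiltonian paths on a $3$-vertex set share an edge --- both contradicting edge-disjointness. The density bound is then elementary, since any five consecutive integers contain at most two elements of such a set. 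Writing $c_i:=x_i(P_1)+x_i(P_2)$, each $c_i$ is even, equals $2$ iff $i\in T$, and is $\ge 4$ otherwise, so $|P_1|+|P_2|=\sum_i c_i\ge 4(n-1)-2|T|\ge\tfrac{16}{5}(n-1)-4$; hence $f(n)\ge\tfrac12(|P_1|+|P_2|)\ge\tfrac85(n-1)-2$ and $PoD^\infty\ge 8/5$.

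For the matching upper bound $PoD^\infty\le 8/5$ (which also gives $f(n)\le\tfrac{13}{7}(n-1)$ for all large $n$, as $8/5<13/7$), the plan is to exhibit an explicit family of pairs $P_1,P_2$, periodic with period $5$, whose $S$-sets realize the extremal profile of the lower bound: $T=S(P_1)\cap S(P_2)$ at the full $2/5$ density (blocks of two consecutive cuts separated by gaps of three), with $P_1$ monotone across the blocks and absorbing all its backtracking into the gaps on which only $P_2$ is monotone, and symmetrically for $P_2$; each path then has average edge length $8/5$, hence length $\tfrac85(n-1)+O(1)$. Checking Hamiltonicity, edge-disjointness, and the length bound for this family is routine but is the calculation-heaviest and most error-prone step.

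Finally, for the exact value $PoD=13/7$, the upper bound $f(n)\le\tfrac{13}{7}(n-1)$ is finished by handling the finitely many small $n$ not covered by the construction (e.g.\ $f(6)=9$, $f(7)=10$), and the lower bound is witnessed at $n=8$: the pair $(1,2,4,3,5,7,6,8)$, $(1,3,2,5,6,4,7,8)$ is edge-disjoint with lengths $11$ and $13$, so $f(8)\le 13$, while $f(8)\ge 13$ follows from a finite case analysis --- since path lengths have the parity of $n-1$, any pair of value at most $11$ must have both paths of length exactly $11$, and the edge-length counting above then forces their edge-length multisets to be $\{1,1,1,2,2,2,2\}$ and $\{1,1,1,1,2,2,3\}$, using between them all seven unit edges and all six length-$2$ edges, after which a short degree/adjacency argument rules out every such pair. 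The step I expect to be the main obstacle is establishing the combinatorial claim about $T$ with the sharp constant $2/5$; the explicit period-$5$ construction and the $n=8$ case analysis are the secondary hurdles.
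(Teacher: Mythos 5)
Your proposal reaches both bounds and is correct in substance, but the core lower bound travels a genuinely different route from the paper. Where the paper characterizes cut-points via 3-pieces of total depth less than 10 (Lemma \ref{lemma.1}), relies on computer-verified small cases (Claim \ref{claim:1}), and then runs an induction that splits the instance at a cut-point (Lemma \ref{theorem.2}) to obtain the exact bound $\tfrac{16}{5}(n-1)$ for all $n \ge 6$, you instead bound directly the density of the set $T$ of cuts at which both paths have depth~1, via the forbidden distances 2 and 3. This avoids the induction and the computer-checked base cases for the asymptotic statement, at the cost of an additive constant ($|P_1|+|P_2| \ge \tfrac{16}{5}(n-1)-4$), which is harmless for $PoD^\infty$ and is compensated at $n=8$ by a separate finite check. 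Your identity $c(H)=\sum_i x_i(H)$ together with the oddness of each $x_i$ is the paper's depth formalism (Observation \ref{fact:2}) in different clothing; your periodic-gadget upper bound is in the same spirit as Algorithm \ref{algo.approxPaths}, and your explicit $n=8$ pair (lengths 11 and 13, edge-disjoint) correctly witnesses $f(8)\le 13$, matching Figure \ref{fig:OptPathsN8}.

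Two steps need repair, both local. First, a construction that is literally periodic with period 5 in which both paths have average edge length $\tfrac{8}{5}$ is impossible by your own parity observation: over any 5 consecutive segments each path's depths sum to an odd number, so the extremal per-period total of 16 must split as $7+9$, and the roles of $P_1$ and $P_2$ must alternate between consecutive blocks; the true period is therefore 10, which is exactly the paper's $n=11$ gadget of cost 16 per 10 segments (Figure \ref{fig:OptPathsN11}). Second, in the $n=8$ lower bound, parity alone does not force ``both paths of length exactly 11''; a $(9,11)$ pair is not excluded that way. The cleanest fix stays inside your framework: $1,7\in T$ always; $2\in T$ would force both paths to begin $(1,2,\ldots)$ and share the edge $\{1,2\}$, and symmetrically $6\in T$ would force sharing $\{7,8\}$; and $3,4,5$ are excluded by your distance rules. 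Hence $T\subseteq\{1,7\}$, so $|P_1|+|P_2|\ge 2\cdot 2+5\cdot 4=24$, the longer path has length at least 12, and parity lifts this to 13. With these repairs the plan is sound; for the exact value $PoD=\tfrac{13}{7}$ you still need the finitely many small-$n$ checks you mention (your values $f(6)=9$ and $f(7)=10$ agree with Figures \ref{fig:OptPathsN6} and \ref{fig:OptPathsN7}), which is where the paper itself resorts to computer verification.
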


Throughout this section, we take the uniform metric graph $R_n^1 = (V, E)$ as the input graph and define $P_n$ as the path graph formed by the vertex set $V$ and the segments of $R_n^1$. We say that $P_n$ is the path graph induced by $R_n^1$. We assume that the vertices of $P_n$ are labeled from left to right $(v_1, v_2, \ldots, v_n)$, with $s = v_1$ and $t = v_n$. Since we assume that all Hamiltonian $(s, t)$-paths in this section have the same endpoints, we omit explicit reference to $s$ and $t$ when referring to a Hamiltonian path. For a subset $X \subseteq V$ of vertices, we use $R_n^1[X]$ to denote the subgraph of $R_n^1$ induced by the vertex set $X$. For a Hamiltonian path $H$ in $R_n^1$, we use $H[X]$ (with a slight abuse of notation) to denote the (possibly disconnected) subgraph induced by $X$ of the path graph defined by $H$.

We prove Theorem \ref{thm:2} as follows. In Section~\ref{subsec.2.2}, we first establish a lower bound of $\frac{8}{5}$ for $PoD(\disjointSHP{})$. Then, in Section~\ref{subsec.2.1}, we develop an algorithm whose asymptotic loss ratio matches this bound, thus showing that $PoD^\infty(\disjointSHP{}) = \frac{8}{5}$. Also in Section~\ref{subsec.2.1}, we establish a loss ratio of $\frac{13}{7}$ for the algorithm and argue that this coincides with the price of diversity of \disjointSHP{}, thereby completing the proof. 

\subsection{A Lower Bound} \label{subsec.2.2}
We prove that $\frac{8}{5}$ is a lower bound to $PoD(\disjointSHP{})$ for the class of uniform metric graphs in $\mathbb{R}^1$. First, we introduce some new terminology. Let $v_i$ be a vertex in a path $H$ on $\uniformLineGeomGraph$. We say that $v_i$ is \textit{covered} by $H$ if there is an edge $(v_j, v_k) \in H$ such that $j < i < k$. Conversely, the edge $(v_j, v_k)$ is said to \textit{cover} the vertex $v_i$. The concept extends naturally to segments. An edge $(v_j, v_k) \in H$ is said to cover the segment $(v_i, v_{i+1}) \in P_n$ if $j \leq i$ and $i+1 \leq k$. The \textit{depth} $\mu_H(s)$ of a segment $s$ in $P_n$ w.r.t. path $H$ is the number of edges from $H$ that cover it. Note that the cost of $H$ is equal to the sum of the depths of each segment w.r.t. $H$. 
%
% Two simple observations follow; a proof of the first is deferred to Appendix~\ref{appendix.sec2}; the second is immediate.

% \begin{observation} \label{fact:2}
%     The depth of a segment of $P_n$ w.r.t. any ($s$, $t$)-path $H$ is odd. 
% \end{observation}

% \begin{observation} \label{claim:0}
%     Let $H$ be a Hamiltonian path. The segments $(s, v_2)$ and $(v_{n-1}, t)$ each have depth 1 w.r.t. $H$.
% \end{observation}
%
We now introduce the notions of \textit{$\ell$-piece} and \textit{cut-point}, which play a key role in the proof of the lower bound. Any sequence of $\ell$ consecutive segments in $P_n$ is called an $\ell$-piece. The depth of a piece (w.r.t. a path $H$) is the sum of the depths of its constituting segments. A pair of edge-disjoint Hamiltonian paths $(H_1, H_2)$ is said to have a cut-point $v$ if $v$ is not covered by $H_1$ nor by $H_2$.\footnote{Not to be confused with a \textit{cut-vertex}, which, in standard graph theory terminology, refers to a vertex whose removal results in a disconnected graph.} The \textit{total depth} of an $\ell$-piece w.r.t. the pair $(H_1, H_2)$ is the sum of depths w.r.t. $H_1$ and $H_2$, respectively. 
%
% To establish the main result of this section, we rely on the following three claims. For space reasons, their proofs are deferred to Appendix~\ref{appendix.sec2}.

% \begin{claim} \label{claim:6}
%     Let $\gamma = (s_1, s_2, s_3)$ be a 3-piece that has total depth 3 w.r.t. some Hamiltonian path $H$. Then the middle segment $s_2$ is an edge of $H$.  
% \end{claim}

% \begin{claim} \label{claim:7}
%     Let $\gamma = (s_1, s_2, s_3)$ be a 3-piece that has total depth 5 w.r.t. some Hamiltonian path $H$. If the middle segment $s_2$ has depth 3 w.r.t. $H$, then $s_2$ is an edge of $H$.  
% \end{claim}

% \begin{claim} \label{lemma.1}
%     Any pair of disjoint Hamiltonian paths on the uniform metric graph $\uniformLineGeomGraph$ has a cut-point if and only if there is a 3-piece with a total depth of less than 10. 
% \end{claim}
% 
%Using these four claims, we can show that there is a connection between the notion of cut-point and the total cost of a pair of Hamiltonian paths in \uniformLineGeomGraph. This is established in the following lemma (see proof in Appendix \ref{appendix.sec2}).

With this terminology in place, we can establish the following connection between cut-points and the total cost of a pair of Hamiltonian paths. Due to space limitations, the proof is deferred to Appendix~\ref{appendix.sec2.Lemma.1}.

\begin{restatable}[]{lemma}{lemmaNoCutpoint} \label{lemma.2}
    Consider a pair of disjoint Hamiltonian paths on the uniform metric graph \uniformLineGeomGraph. If they have no cut-point, then the sum of their costs is at least $\frac{16}{5} (n-1)$ for $n \geq 6$. 
\end{restatable}

In addition, we have verified by computer the following claim for small uniform metric graphs in $\realLine$.\footnote{Code available at \hyperlink{https://github.com/andoresu47/Disjoint-Tours-and-the-PoD}{https://github.com/andoresu47/Disjoint-Tours-and-the-PoD}.} Part (i) of the observation is also stated by Ageev \textit{et al.}~\cite{ageev20072}.

\begin{observation} \label{claim:1}
Consider the uniform metric graph $\uniformLineGeomGraph$ for some $n \leq 8$. Then (i) there is no pair of edge-disjoint Hamiltonian paths when $n \leq 5$, and (ii) there is no pair of edge-disjoint Hamiltonian paths with total cost less than $\frac{16(n - 1)}{5}$ when $n \in \{6, 7, 8\}$. 
\end{observation}

With Lemma~\ref{lemma.2} and Observation \ref{claim:1}, we are ready to prove the section’s main result. 

\begin{lemma} \label{theorem.2}
    There is no pair of disjoint Hamiltonian paths in the uniform metric graph $\uniformLineGeomGraph$ such that the sum of their costs is less than $\frac{16}{5} (n-1)$ for $n \geq 6$. 
\end{lemma}
\begin{proof}
By induction on $n$. For the base case, $n = 6$, Observation \ref{claim:1} shows that the statement is true. Now, assume that the statement holds for any integer $k$ such that $6 < k < n$, with $n > 6$. We will prove that, under this assumption, the statement is also true for the input graph \uniformLineGeomGraph{} of size $n$. For the sake of contradiction, suppose that there is a pair $(H_1, H_2)$ of edge-disjoint Hamiltonian paths of total cost less than $16(n-1)/5$. By Lemma \ref{lemma.2}, the pair $(H_1, H_2)$ has a cutpoint $v$. We show that this gives a contradiction. 

Let $P_n$ denote the path graph induced by \uniformLineGeomGraph. 
If $H_1$ and $H_2$ have a cutpoint $v$, we can partition $P_n$ into the two path graphs $P_{n_1} = (V_1, E_1)$ and $P_{n_2} = (V_2, E_2)$---on $n_1$ and $n_2$ vertices, respectively---such that $s \in P_{n_1}$, $t \in P_{n_2}$, and $V_1 \cap V_2 = \{v\}$. Note that $n = n_1 + n_2 - 1$. We define the total cost of a pair of paths $(H_1, H_2)$ as $C_\mathrm{tot}(H_1, H_2):=c(H_1) + c(H_2)$, i.e., the sum of the costs of the paths. 
There are two cases to consider: Either $\min(n_1, n_2) < 6$ or $\min(n_1, n_2) \geq 6$. 
\begin{itemize}
        \item[$\bullet$] Let $\min(n_1, n_2) < 6$. W.l.o.g., assume that $n_1 < 6$. In other words, we assume the shorter subpath is $H_1[V_1]$. By part (i) of Observation \ref{claim:1}, we know that no pair of disjoint Hamiltonian paths exists. This means the subpath $H_2[V_1]$ cannot be edge-disjoint with $H_1[V_1]$. But we are given that $H_1 = H_1[V_1] \cup H_1[V_2]$ and $H_2 = H_2[V_1] \cup H_2[V_2]$ are edge-disjoint. Hence, we reach a contradiction. 
        
        \item[$\bullet$] Let $\min(n_1, n_2) \geq 6$. The inductive hypothesis implies that $C_\mathrm{tot}(H_1[V_1], H_2[V_1]) \geq 16(n_1 - 1)/5$ and $C_\mathrm{tot}(H_1[V_2], H_2[V_2]) \geq 16(n_2 - 1)/5$. Then $C_\mathrm{tot}(H_1, H_2) = C_\mathrm{tot}(H_1[V_1], H_2[V_1]) + C_\mathrm{tot}(H_1[V_2], H_2[V_2]) \geq 16(n_1 + n_2 - 2)/5 = 16(n - 1)/5$. But this contradicts our initial assumption that the total cost of $H_1$ and $H_2$ was strictly less than $16(n - 1)/5$.
    \end{itemize}
Both cases result in a contradiction; therefore, the statement is true for $n > 6$ and the theorem is proven. \hfill$\qed$
\end{proof}

Notice that Lemma \ref{theorem.2} implies that in any pair of Hamiltonian paths on the uniform metric graph \uniformLineGeomGraph, the longest of the two paths has a cost of at least $\frac{8}{5} (n-1)$. This can be restated as the following corollary. 

\begin{corollary} \label{cor:lowerBoundPoDPaths}
     $PoD(\disjointSHP{}) \geq \frac{8}{5}$ for uniform metric graphs in $\mathbb{R}^1$ when $s = 1$ and $t = n$. 
\end{corollary}

\subsection{An Algorithm for \disjointSHP{}} \label{subsec.2.1}

%%%%%%%%%%% Shortened version 1

We now present an algorithm for \disjointSHP{} on the uniform metric graph \uniformLineGeomGraph, which achieves a loss ratio of $\tfrac{13}{7}$ and an asymptotic loss ratio of $\tfrac{8}{5}$. The main idea is to \emph{concatenate} small optimal building blocks of disjoint paths. The algorithm is detailed below as Algorithm \ref{algo.approxPaths}. A more formal description is provided in Appendix \ref{appendix.algorithmProofs}. 

When the number of segments, $n-1$, is a multiple of 10, a pair of edge-disjoint Hamiltonian paths of optimal bottleneck cost $\tfrac{8}{5}(n-1)$ can be constructed by concatenating $k$ copies of an optimal solution for $n=11$. One such pair is shown in Figure~\ref{fig:OptPathsN11}. For general $n$, we extend this idea by attaching suitable smaller optimal solutions for $6\leq n\leq 10$ (found by computer-aided exhaustive search) or by slightly augmenting the $n=11$ construction. The illustrations of optimal solutions for $n \in \{6, 7, 8, 9, 10\}$, as well as the augmented constructions for $n \in \{12, 13, 14, 15\}$, are deferred to Appendices \ref{appendix.optPaths} and \ref{appendix.optPaths2}, respectively. 

\begin{algorithm}[!ht]
\caption[Caption for LOF]{Algorithm \texttt{Paths}} \label{algo.approxPaths}
\vspace{.5em}
\textbf{Input:} Uniform metric graph \uniformLineGeomGraph \\ 
\textbf{Output:} A pair of disjoint Hamiltonian paths. 
\algrenewcommand\algorithmiccomment[1]{\hfill {\color{blue} \(\triangleright\) #1}}
\vspace{.5em}
\begin{algorithmic}[1]
\If{$n \leq 11$} 
\State \Return an optimal pair (precomputed).
\Else 
\State Let $(n-1)=10k+\ell$. 
\If{$\ell=0$} 
\State \Return $k$ concatenated copies of the $n=11$ solution of Figure~\ref{fig:OptPathsN11}.
\ElsIf{$5\leq \ell \leq 9$} 
\State Concatenate $k$ copies of the $n=11$ solution with a precomputed\\
\hspace{28pt}(optimal) solution for $n = \ell+1$ (see Appendix \ref{appendix.optPaths}).
\Else \Comment{$\ell \in \{1, 2, 3, 4\}$}
\State concatenate $k - 1$ copies of the $n=11$ solution with a precomputed solution \\
\hspace{28pt}for $n = 11 + \ell$ (see Appendix \ref{appendix.optPaths2}).
\EndIf
\EndIf 
\end{algorithmic}
\vspace{.5em}
\end{algorithm}

\begin{figure}[t]
    \centering
    \begin{tikzpicture}[scale=0.8, transform shape]
\node [draw=black,circle,inner sep=0pt,minimum size=1.25em] (v3) at (0,1.25) {\footnotesize $v_{3}$};
\node [draw=black,circle,inner sep=0pt,minimum size=1.25em] (v4) at (0.75,1.25) {\footnotesize $v_{4}$};
\node [draw=black,circle,inner sep=0pt,minimum size=1.25em] (v5) at (1.5,1.25) {\footnotesize $v_{5}$};
\node [draw=black,circle,inner sep=0pt,minimum size=1.25em] (v6) at (2.25,1.25) {\footnotesize $v_{6}$};
\node [draw=black,circle,inner sep=0pt,minimum size=1.25em] (v2) at (-0.75,1.25) {\footnotesize $v_{2}$};
\node [draw=black,circle,inner sep=0pt,minimum size=1.25em] (v1) at (-1.5,1.25) {\footnotesize $v_{1}$};
\node [draw=black,circle,inner sep=0pt,minimum size=1.25em] (v7) at (3,1.25) {\footnotesize $v_{7}$};
\node [draw=black,circle,inner sep=0pt,minimum size=1.25em] (v8) at (3.75,1.25) {\footnotesize $v_{8}$};
\node [draw=black,circle,inner sep=0pt,minimum size=1.25em] (v9) at (4.5,1.25) {\footnotesize $v_{9}$};
\node [draw=black,circle,inner sep=0pt,minimum size=1em] (v10) at (5.25,1.25) {\footnotesize $v_{10}$};
\node [draw=black,circle,inner sep=0pt,minimum size=1.25em] (v11) at (6,1.25) {\footnotesize $v_{11}$};
\draw  (v1) edge[bend left=45] (v3);
\draw  (v3) edge (v4);
\draw  (v4) edge[bend right=45] (v2);
\draw  (v2) edge[bend right=45] (v5);
\draw  (v5) edge (v6);
\draw  (v6) edge (v7);
\draw  (v7) edge[bend left=45] (v9);
\draw  (v9) edge (v8);
\draw  (v8) edge[bend left=45] (v10);
\draw  (v10) edge (v11);
\node at (-2.5,1.25) {$H_1: $};

\node [draw=black,circle,inner sep=0pt,minimum size=1.25em] (v3) at (0,-0.25) {\footnotesize $v_{3}$};
\node [draw=black,circle,inner sep=0pt,minimum size=1.25em] (v4) at (0.75,-0.25) {\footnotesize $v_{4}$};
\node [draw=black,circle,inner sep=0pt,minimum size=1.25em] (v5) at (1.5,-0.25) {\footnotesize $v_{5}$};
\node [draw=black,circle,inner sep=0pt,minimum size=1.25em] (v6) at (2.25,-0.25) {\footnotesize $v_{6}$};
\node [draw=black,circle,inner sep=0pt,minimum size=1.25em] (v2) at (-0.75,-0.25) {\footnotesize $v_{2}$};
\node [draw=black,circle,inner sep=0pt,minimum size=1.25em] (v1) at (-1.5,-0.25) {\footnotesize $v_{1}$};
\node [draw=black,circle,inner sep=0pt,minimum size=1.25em] (v7) at (3,-0.25) {\footnotesize $v_{7}$};
\node [draw=black,circle,inner sep=0pt,minimum size=1.25em] (v8) at (3.75,-0.25) {\footnotesize $v_{8}$};
\node [draw=black,circle,inner sep=0pt,minimum size=1.25em] (v9) at (4.5,-0.25) {\footnotesize $v_{9}$};
\node [draw=black,circle,inner sep=0pt,minimum size=1.25em] (v10) at (5.25,-0.25) {\footnotesize $v_{10}$};
\node [draw=black,circle,inner sep=0pt,minimum size=1.25em] (v11) at (6,-0.25) {\footnotesize $v_{11}$};
\draw  (v1) edge (v2);
\draw  (v2) edge (v3);
\draw  (v3) edge[bend left=45] (v5);
\draw  (v5) edge (v4);
\draw  (v4) edge[bend left=45] (v6);
\draw  (v6) edge[bend left=45] (v8);
\draw  (v8) edge (v7);
\draw  (v7) edge[bend right=45] (v10);
\draw  (v10) edge (v9);
\draw  (v9) edge[bend left=45] (v11);
\node at (-2.5,-0.25) {$H_2:$};

\end{tikzpicture}
    \caption{An optimal pair of edge-disjoint Hamiltonian paths with $cost(H_1, H_2) = 16$ 
    for $n = 11$.}
    \label{fig:OptPathsN11}
\end{figure}

\paragraph{Analysis.} 
Note that each time the algorithm concatenates pairs of disjoint paths, a cut-point is created, namely, the vertex joining the concatenated paths. Consider the pair of paths returned by the algorithm. By construction, the pair of subpaths induced by the vertices between cutpoints are edge-disjoint. Hence, the paths in the returned pair are also edge-disjoint. 

\begin{lemma} \label{theorem.algo1}
    Algorithm \ref{algo.approxPaths} generates a pair of edge-disjoint Hamiltonian paths. 
\end{lemma}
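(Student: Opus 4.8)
The statement is a correctness‑of‑construction claim, so the plan is a case analysis over the branches of Algorithm~\ref{algo.approxPaths}, reducing everything to a single observation about concatenation. Throughout I assume $n \ge 6$, since for $n \le 5$ there is nothing to construct (part~(i) of Claim~\ref{claim:1}). The first step is to record the atomic building blocks the algorithm uses. For $n \le 11$ the algorithm returns an optimal solution of \disjointSHP{}, which is by definition a pair of edge‑disjoint Hamiltonian $(v_1,v_n)$‑paths, and such a solution exists for $6 \le n \le 11$ (witnessed by Figure~\ref{fig:OptPathsN11} and the figures for $n\in\{6,\dots,10\}$ referenced by Algorithm~\ref{algo.approxPaths}). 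For $n \in \{12,13,14,15\}$ the pair $(H,H')$ drawn in Figures~\ref{fig:algoAddNode}--\ref{fig:algoAddNode15} is, as one checks directly from the figures, again a pair of edge‑disjoint Hamiltonian $(v_1,v_n)$‑paths. Together with the $n=11$ pair $(H_{11},H'_{11})$ of Figure~\ref{fig:OptPathsN11} and the $(\ell+1)$‑vertex optimal pairs for $\ell+1\in\{6,\dots,10\}$, these are the only pieces ever combined.

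Next I would prove the concatenation lemma. The uniform geometric graph $\uniformLineGeomGraph$ is translation invariant: the subgraph induced by $\{v_a,v_{a+1},\dots,v_{a+10}\}$ is isomorphic, via the shift $v_j\mapsto v_{j-a+1}$, to $\uniformLineGeomGraph$ with $n=11$; hence the pair $(H_{11},H'_{11})$ transplanted onto a block $B_i:=\{v_{10i+1},\dots,v_{10i+11}\}$ is a pair of edge‑disjoint Hamiltonian paths of that block with endpoints $v_{10i+1}$ and $v_{10i+11}$. In particular the endpoint label of each piece matches the start label of the next, so every concatenation in the algorithm is well defined. The key claim is: if $(A_1,A_2)$ is a pair of edge‑disjoint Hamiltonian paths on a vertex set $U$ each ending at a vertex $z$, and $(B_1,B_2)$ is a pair of edge‑disjoint Hamiltonian paths on a vertex set $W$ each starting at $z$, with $U\cap W=\{z\}$, then $(A_1\mathbin\Vert B_1,\ A_2\mathbin\Vert B_2)$ is again a pair of edge‑disjoint Hamiltonian paths on $U\cup W$ with endpoints (start of $A_i$, end of $B_i$). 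Indeed each $A_i\mathbin\Vert B_i$ visits every vertex of $U\cup W$ exactly once, and a common edge of $A_1\mathbin\Vert B_1$ and $A_2\mathbin\Vert B_2$ would lie either in $A_1\cap A_2$ or $B_1\cap B_2$ (both impossible by disjointness), or in $A_1\cap B_2$ or $B_1\cap A_2$ (both impossible, since such an edge would have both endpoints in $U\cap W=\{z\}$). The vertex $z$ is exactly the cut‑point created by the concatenation, cf.\ Lemma~\ref{lemma.1} and the discussion preceding the present lemma.

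Finally I would glue the branches. Writing $n-1=10k+\ell$ with $0\le\ell\le 9$: if $\ell=0$ the output tiles $\{v_1,\dots,v_n\}$ by the $k$ blocks $B_0,\dots,B_{k-1}$; if $5\le\ell\le 9$ it uses those $k$ blocks followed by the $(\ell+1)$‑vertex optimal pair on $\{v_{10k+1},\dots,v_n\}$ (here $\ell+1\in\{6,\dots,10\}$, so this pair exists); if $1\le\ell\le 4$ it uses $B_0,\dots,B_{k-2}$ followed by the $(11+\ell)$‑vertex pair $(H,H')$ of Figures~\ref{fig:algoAddNode}--\ref{fig:algoAddNode15} on $\{v_{10(k-1)+1},\dots,v_n\}$, where $n\ge 12$ forces $k\ge 1$ so $k-1\ge 0$ (with $k-1=0$ meaning the output is just $(H,H')$). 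In each case consecutive pieces share exactly their gluing vertex $v_{10i+1}$, non‑consecutive pieces are vertex‑disjoint, and the pieces cover $\{v_1,\dots,v_n\}$; a straightforward induction on the number of pieces, applying the concatenation claim of the previous paragraph at each step, then shows the returned pair is a pair of edge‑disjoint Hamiltonian $(v_1,v_n)$‑paths.

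The main obstacle, such as it is, is purely bookkeeping: getting the index arithmetic right in the three arithmetic cases $\ell=0$, $\ell\in\{1,\dots,4\}$, $\ell\in\{5,\dots,9\}$ so that the blocks overlap in exactly one vertex and exhaust $\{v_1,\dots,v_n\}$, and verifying by inspection of Figures~\ref{fig:OptPathsN11} and \ref{fig:algoAddNode}--\ref{fig:algoAddNode15} (and the computer‑search pairs) that each atomic piece really is an edge‑disjoint Hamiltonian pair with the stated endpoints. There is no conceptual difficulty beyond the one‑line concatenation argument.
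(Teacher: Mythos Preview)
Your proposal is correct and follows essentially the same approach as the paper: the paper's argument (given in the paragraph immediately preceding the lemma rather than in a formal proof environment) is simply that each concatenation creates a cut-point, and since the subpaths between consecutive cut-points are edge-disjoint by construction, the full paths are edge-disjoint. Your version is a careful, fully spelled-out rendering of the same idea, with the concatenation claim and the block-indexing made explicit.
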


We show that its loss ratio is $\tfrac{13}{7}$ and its asymptotic loss ratio is $\tfrac{8}{5}$. Due to space constraints, we defer the proofs to Appendix~\ref{appendix.algorithmProofs}, but we sketch the key ideas here. For small instances, the worst case occurs at $n=8$, where the ratio $\tfrac{13}{7}$ is attained. For larger $n$, the ratio decreases, converging to $\tfrac{8}{5}$ as $n\to\infty$. 

\begin{lemma} \label{theorem:LRPathsCombined}
    Algorithm \ref{algo.approxPaths} has loss ratios $LR_{\texttt{Paths}} = \frac{13}{7}$ and $LR^\infty_{\texttt{Paths}} = \frac{8}{5}$.
\end{lemma}

%The loss ratio of Algorithm \ref{algo.approxPaths} gives an upper bound for $PoD(\textsc{Disjoint-}$\\$\textsc{Path TSP}_2)$---with $s = 1$ and $t = n$. We remark that the proof of Lemma \ref{theorem:LRPaths} actually shows that this bound is tight. This is because the ratio $\frac{\min_{S \in \mathcal{S}(\mathcal{I})} \{\text{cost}(S) \}}{OPT(I)}$ achieved by the provably optimal solution to \disjointSHP{}---with $s = 1$ and $t = n$---when $n = 8$ is an upper bound of the loss ratio achieved by the algorithm for any other value of $n$. 
The loss ratio of Algorithm \ref{algo.approxPaths} provides an upper bound for $PoD(\textsc{Disjoint-}$\\$\textsc{Path TSP}_2)$ with $s=1$ and $t=n$. The proof of Lemma \ref{theorem:LRPathsCombined} actually shows this bound is tight, because the ratio $\frac{\min_{S \in \mathcal{S}(\mathcal{I})} \{\text{cost}(S) \}}{OPT(I)}$ achieved by the optimal solution at $n=8$ is already an upper bound on the loss ratio for all other values of $n$.

\section{The Price of Diversity of \disjointTSP{} in \uniformCircle{}} \label{sec.2}

We now extend the results and concepts of the previous section to \disjointTSP{} in \uniformCircle{}. We establish the following.

\begin{restatable}[]{theorem}{mainTheorem} \label{thm:1}
    $PoD(\disjointTSP{}) = \frac{13}{7}$, and $PoD^\infty(\disjointTSP{}) = \frac{8}{5}$ for the class of uniform metric graphs in $\mathbb{S}^1$.
\end{restatable}

We prove Theorem \ref{thm:1} as follows. In Section \ref{subsec:lowerBoundUnitCycles}, we extend the lower bound established in Corollary \ref{cor:lowerBoundPoDPaths} to pairs of edge-disjoint tours. Subsequently, in Section \ref{subsec:TwoAlgos}, we present an algorithm that builds on the results of Section \ref{sec.1}, generalizing the approach from paths to tours while maintaining the loss ratio. Together, these results provide the proof of Theorem \ref{thm:1}. Recall that a segment is an edge of $G$ that connects consecutive vertices along $\mathbb{S}^1$. Throughout this section, we use $C_n$ to denote the cycle graph formed by the vertex set $V$ and the segments of $G$. We refer to $C_n$ as the cycle graph induced by $G$. 

\subsection{A Lower Bound} \label{subsec:lowerBoundUnitCycles}
We now extend the lower bound results of Section \ref{subsec.2.2}---which focused on Hamiltonian ($s$, $t$)-paths, with $s = 1$ and $t = n$---to tours in uniform metric graphs. To this end, we adopt the notions of \textit{covered} vertices and segments introduced in Section~\ref{subsec.2.2}, modifying them by replacing the Hamiltonian path with a tour in each definition. As before, we denote the depth of a segment $s \in C_n$ w.r.t. a tour $T$ by $\mu_T(s)$. Note that the cost of $T$ is equal to the sum of the depths of all segments in $C_n$ w.r.t. $T$. We begin with the following crucial lemma (see proof in Appendix \ref{appendix.sec3.Lemma.6}). 

 \begin{lemma} \label{lemma:3} 
     In any tour of a metric graph in $\mathbb{S}^1$, all segments have odd depth or all segments have even depth. 
 \end{lemma}

 Consider the case when the input graph $G$ is the uniform metric graph \uniformCircleGeomGraph{}. Lemma \ref{lemma:3} implies that any tour that covers some segment of \uniformCircleGeomGraph{} an even number of times has a total cost that is roughly twice the optimal value. (Note that we treat 0 as an even number.)

 \begin{corollary} \label{corollary:1}
    Any pair of disjoint tours $T_1$ and $T_2$ in $\uniformCircleGeomGraph$ such that one of them has exclusively even depth segments satisfies $\max(c(T_1), c(T_2)) \geq 2 \cdot (n - 1)$. 
 \end{corollary}

 Hence, only pairs of tours in \uniformCircleGeomGraph{} with exclusively odd depth segments may have a bottleneck cost smaller than $2 \cdot (n - 1)$. We call such tours \textit{odd-depth} tours. Similarly, \textit{even-depth} tours are tours with exclusively even-depth segments. In the rest of the section, we restrict our discussion to pairs of \textit{odd-depth} tours. 
 
%  We now state a useful observation for odd-depth tours (see proof in Appendix \ref{appendix.sec3}) along the same vein as Observation \ref{claim:1}, and continue discussing odd-depth tours in %Claim \ref{lemma.1.1} and 
%  Lemma \ref{lemma:4}. 

% \begin{observation} \label{claim:5}
% Consider the uniform metric graph $\uniformCircleGeomGraph$ for some $n \leq 8$. Then (i) there is no pair of edge-disjoint tours when $n \leq 4$, and (ii) there is no pair of odd-depth edge-disjoint tours with total cost less than $\frac{16}{5}n$ when $n \in \{5, 6, 7, 8\}$. 
% \end{observation}

We now extend the notions of \textit{cut-point} and \textit{$\ell$-piece} from Hamiltonian paths to tours. A pair of edge-disjoint tours $(T_1, T_2)$ is said to have a \textit{cut-point} $v$ if $v$ is not covered by $T_1$ nor by $T_2$. Any sequence of $\ell$ consecutive segments in $C_n$ is called an \textit{$\ell$-piece}. The depth of a piece (w.r.t. a tour $T$) is the sum of the depths of its constituting segments. The \textit{total depth} of an $\ell$-piece w.r.t. the pair $(T_1, T_2)$ is the sum of depths w.r.t. $T_1$ and $T_2$, respectively. 

With these definitions, we establish the following analogue of Lemma \ref{lemma.2} for odd-depth tours (see proof in Appendix \ref{appendix.sec3.Lemma.7}). 

% With these definitions, we obtain direct analogs of Claims \ref{claim:6} and \ref{claim:7} for odd-depth tours, which can be used to establish the following result, which mirrors Claim~\ref{lemma.1} from the path setting (see proof in Appendix \ref{appendix.sec3}). 

% \begin{claim} \label{lemma.1.1}
%     Any pair of disjoint tours on the uniform metric graph $\uniformCircleGeomGraph$ has a cut-point if and only if there is a 3-piece with a total depth of less than 10. 
% \end{claim}

% Similarly, we establish the following analogue of Lemma \ref{lemma.2} for odd-depth tours. The proof follows the same structure as that of Lemma~\ref{lemma.2}, but uses Observation~\ref{claim:5} instead of Observation~\ref{claim:1}, and Claim~\ref{lemma.1.1} in place of Claim~\ref{lemma.1}.

\begin{lemma} \label{lemma:4}
    Consider a pair of odd-depth disjoint tours on the uniform metric graph $\uniformCircleGeomGraph$. If they have no cut-point, then the sum of their costs is at least $\frac{16}{5} n$ for $n \geq 5$. 
\end{lemma}

With Lemma \ref{lemma:4}, we have the necessary ingredients to extend the result of Lemma \ref{theorem.2} from pairs of Hamiltonian paths to pairs of odd-depth tours. 

\begin{lemma} \label{theorem.3}
    There is no pair of disjoint odd-depth tours in the uniform metric graph $\uniformCircleGeomGraph$ such that the sum of their costs is less than $\frac{16}{5} n$ for $n \geq 5$. 
\end{lemma}
\begin{proof}
Consider an arbitrary pair of odd-depth tours in $\uniformCircleGeomGraph$. By contrapositive of Lemma \ref{lemma:4}, if the sum of their costs is less than $16n/5$, then either they are not disjoint, or they have a cut-point, or both. We are interested in low-cost pairs of disjoint tours, so we may only consider the case of the tours having a cutpoint. If such a pair $(T_1, T_2)$ of disjoint tours existed, then we could construct two disjoint Hamiltonian $(s, t)$-paths $H_1$ and $H_2$ of total cost less than $16n/5$ as follows. Let $v$ denote the cutpoint of $T_1$ and $T_2$. We construct $H_1$ by creating two copies, $s$ and $t$, of vertex $v$ and replacing the edges $(u_1, v), (v, w_1) \in T_1$ with $(u_1, t)$ and $(s, w_1)$, respectively. We proceed similarly for path $H_2$ and edges $(u_2, v), (v, w_2) \in T_2$. The ($s$, $t$)-paths $H_1$ and $H_2$ just created are edge-disjoint since the edges we replaced from $T_1$ and $T_2$ were already disjoint. Further, the new edges have the same length as the ones replaced. Hence, the total cost of the paths we created is the same as the total cost of the original tours: less than $16n/5$. This is a pair of Hamiltonian paths on the uniform metric graph $\uniformLineGeomGraph$ whose total cost is $16n/5$. 
But, by Theorem \ref{theorem.2}, we know that no such pair of Hamiltonian paths exists. We arrive at a contradiction. Hence, the theorem is proved.   \hfill$\qed$
\end{proof}

Putting Corollary \ref{corollary:1} and Lemma \ref{theorem.3} together, we get the following result for pairs of disjoint tours, regardless of their depth. 

\begin{corollary} \label{lemma:5}
    There is no pair of disjoint tours in the uniform metric graph $\uniformCircleGeomGraph$ such that the cost of the longer tour is less than $\frac{8}{5}n$ for $n \geq 5$. 
\end{corollary}
\begin{proof}
    Let $(T_1, T_2)$ be an arbitrary pair of disjoint tours in $\uniformCircleGeomGraph$. If at least one of $T_1$ or $T_2$ is an even-depth tour, then by Corollary \ref{corollary:1}, we have $cost(T_1, T_2) \geq 2 (n-1)\geq \frac{8}{5}n$ for $n \geq 5$. If both $T_1$ and $T_2$ are odd-depth tours, Lemma \ref{theorem.3} implies $cost(T_1, T_2) \geq \frac{16}{10}n = \frac{8}{5}n$ for for $n \geq 5$ as well. \hfill$\qed$ 
\end{proof}

Finally, we get the following as an immediate consequence of Corollary \ref{lemma:5}. 

\begin{corollary}
$PoD(\disjointTSP{}) \geq \frac{8}{5}$ for the class of uniform metric graphs in $\realCircle$.
\end{corollary}

\subsection{An Algorithm for \disjointTSP{}} \label{subsec:TwoAlgos}

We now describe an algorithm, referred to as \texttt{Tours}, that generates two disjoint tours such that the cost of the longer tour is always within a factor $\frac{13}{7}$ of the cost of an optimal tour, and asymptotically within a factor $\frac{8}{5}$. A more detailed description of the algorithm is deferred to Appendix \ref{appendix.algoToursDescription}. 

The algorithm is an extension of Algorithm \ref{algo.approxPaths} for \disjointSHP{}. It works by first using the earlier algorithm as a subroutine to generate two edge-disjoint Hamiltonian paths, each of length $n$ $\Rightarrow$ $n+1$. Each path is then transformed into a tour by contracting its first and last vertices into a single new vertex. As a result, the cycles each contain exactly $n$ vertices and $n$ edges. Note that the newly created vertex is a cutpoint, ensuring that the cycles remain edge-disjoint, given that the original paths were edge-disjoint. Furthermore, the loss ratio of the algorithm inherits the bounds established for Algorithm \ref{algo.approxPaths}.

\begin{theorem}
    Algorithm \texttt{Tours} has loss ratio $LR_\texttt{Tours} = \frac{13}{7}$ and asymptotic loss ratio $LR^\infty_\texttt{Tours} = \frac{8}{5}$. 
\end{theorem}
\begin{proof}
    Replace $n$ with $n+1$ in the proof of Lemma \ref{theorem:LRPathsCombined}. 
\end{proof}

\section{The PoD of \disjointSHP{} and \disjointTSP{} in \generalClassGeomGraphs{}} \label{sec.3}

We conclude our investigation of the PoD by examining \disjointSHP{} and \disjointTSP{} in general metric graphs. First, in Section \ref{subsec:lowerBoundsGenMetric}, we derive lower bounds of $3 - \varepsilon$ and $2 - \varepsilon$, for any $\epsilon > 0$, on the price of diversity for these problems, respectively. Next, in Section \ref{subsec:algosGenMetric}, we present two simple algorithms whose loss ratios essentially match these lower bounds, thus establishing the price of diversity of the problems. 

\begin{restatable}[]{theorem}{mainTheoremFour} \label{thm:4}
    $PoD(\disjointSHP{}) = 3$ and $PoD(\disjointTSP{}) = 2$ in $\mathcal{M}$.
\end{restatable}

\subsection{Lower bounds} \label{subsec:lowerBoundsGenMetric}

In this section, we present lower bounds for the PoD of both \disjointSHP{} (Section \ref{subsubsection.1}) and \disjointTSP{} (Section \ref{subsubsection.2}) for general metric graphs. 

\subsubsection{\disjointSHP{}.} \label{subsubsection.1}
Consider a pair of edge-disjoint Hamiltonian ($s$, $t$)-paths $H_1$ and $H_2$ in a (not necessarily uniform) metric graph $G = (V, E)$ in $\mathbb{R}^1$, where $s$ and $t$ are the first and last vertices of $G$ along the line $\mathbb{R}^1$. Like in Section \ref{sec.1}, we use $P_n$ to denote the path graph induced by $G$ and assume that its vertices are labeled from left to right $(v_1, v_2, \ldots, v_n)$, with $s = v_1$ and $t = v_n$. We make the following claim (see proof in Appendix \ref{appendix.sec4.Claim.1}). 

\begin{claim} \label{claim:2}
    The depth of the segment $(v_2, v_3)$ (resp. $(v_{n-2}, v_{n-1})$) w.r.t. at least one of $H_1$ and $H_2$ is at least 3. 
\end{claim}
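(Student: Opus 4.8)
The plan is to prove the statement for the segment $(v_2, v_3)$; the case of $(v_{n-2}, v_{n-1})$ is symmetric by reversing the orientation of $\mathbb{R}^1$. First I would recall two structural facts that carry over from the uniform setting: the depth of every segment w.r.t.\ an $(s,t)$-path is odd (Observation \ref{fact:2}, whose proof never uses uniformity), and the depth of the boundary segment $(v_1, v_2)$ w.r.t.\ any Hamiltonian $(s,t)$-path is exactly $1$ (Observation \ref{claim:0}, likewise metric-independent). So for each $i \in \{1,2\}$ we have $\mu_{H_i}(v_1,v_2)=1$, meaning the unique edge of $H_i$ crossing the cut $\{v_1\}\mid\{v_2,\dots,v_n\}$ is the edge incident to $v_1$; in particular $v_1$'s single neighbour in $H_i$ lies to its right.

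The core of the argument is a counting/parity contradiction at the vertex $v_2$. Suppose, for contradiction, that both $\mu_{H_1}(v_2,v_3)\le 2$ and $\mu_{H_2}(v_2,v_3)\le 2$; by the odd-depth observation this forces $\mu_{H_1}(v_2,v_3)=\mu_{H_2}(v_2,v_3)=1$. Combined with $\mu_{H_i}(v_1,v_2)=1$, this says that in each $H_i$ there is exactly one edge crossing the cut left of $v_2$ and exactly one crossing the cut right of $v_2$. I would then examine the two edges of $H_i$ incident to $v_2$ (there are exactly two, since $v_2\neq s,t$). The key observation is that an edge incident to $v_2$ either goes left (to $v_1$, the only vertex left of $v_2$) or goes right; an edge $(v_2, v_j)$ with $j\ge 3$ contributes to $\mu_{H_i}(v_1,v_2)$? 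No — it contributes to $\mu_{H_i}(v_2,v_3)$. Tracking this carefully: if $v_2$'s two neighbours in $H_i$ are both to its right, then $v_2$ contributes $2$ to the depth of $(v_2,v_3)$ already, but then nothing crosses the cut $\{v_1\}\mid\text{rest}$ through $v_2$, and since $\mu_{H_i}(v_1,v_2)=1$ the edge at $v_1$ must go to some $v_k$ with $k\ge 2$; if $k\ge 3$ that edge also covers $(v_2,v_3)$, pushing its depth to at least $3$, contradiction; so the edge at $v_1$ is exactly $(v_1,v_2)$ — but then $v_2$ has a left-neighbour, contradicting that both its neighbours are to the right. Hence exactly one neighbour of $v_2$ in $H_i$ is $v_1$ and the other, $v_2$'s right-neighbour, contributes $1$ to $\mu_{H_i}(v_2,v_3)$; since that depth equals $1$, no other edge of $H_i$ covers $(v_2,v_3)$. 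In particular the edge of $H_i$ incident to $v_1$ is exactly $(v_1,v_2)$. This holds for \emph{both} $i=1$ and $i=2$, so the edge $(v_1,v_2)$ belongs to $H_1$ and to $H_2$, contradicting edge-disjointness.

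I expect the main obstacle to be the bookkeeping in the case analysis at $v_2$: one must rule out, without appeal to uniformity, every way the two edges at $v_2$ and the edge at $v_1$ can be arranged so that both boundary segments keep depth $1$ and $(v_2,v_3)$ keeps depth $\le 2$. The cleanest packaging is probably to argue directly that $\mu_{H_i}(v_2,v_3)=1$ forces $(v_1,v_2)\in H_i$ (since $v_1$ must attach somewhere, its neighbour cannot be any $v_k$ with $k\ge 3$ lest $(v_2,v_3)$ be covered twice, and it cannot be $v_2$'s "far" side), and then invoke disjointness. One should also double-check the degenerate small cases $n=3,4$ separately, and confirm that the symmetric claim for $(v_{n-2},v_{n-1})$ follows verbatim after relabelling $v_i \mapsto v_{n+1-i}$, which swaps the roles of $s$ and $t$ but preserves the hypothesis that the two paths are edge-disjoint Hamiltonian $(s,t)$-paths.
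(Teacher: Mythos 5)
Your proposal is correct and rests on the same ingredients as the paper's proof: the first segment $(v_1,v_2)$ having depth $1$ in both paths (Observation~\ref{claim:0}), the degree-$2$ constraint at $v_2$, oddness of depths (Observation~\ref{fact:2}), and edge-disjointness. The only difference is packaging: you argue by contradiction that depth $1$ on both $(v_1,v_2)$ and $(v_2,v_3)$ forces the edge $(v_1,v_2)$ into both paths, whereas the paper argues directly that the path avoiding $(v_1,v_2)$ must cover $v_2$ and hence give $(v_2,v_3)$ depth at least $3$ --- essentially the contrapositive of the same local argument.
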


We now use this claim to derive a lower bound for the PoD of \disjointSHP{} in a general metric. We show that there is a metric graph on $n$ points in $\realLine$ where the longest path of an arbitrary pair of disjoint Hamiltonian paths has a cost of almost three times the optimal value.  

\begin{lemma} \label{theorem.4}
    For any $\varepsilon > 0$, the PoD of \disjointSHP{} in a general metric is at least $3 - \varepsilon$.
\end{lemma}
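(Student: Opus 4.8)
The plan is to construct an explicit family of geometric graphs on the real line for which every pair of edge-disjoint Hamiltonian $(s,t)$-paths has a longest path of cost approaching $3$ times the optimum, while the optimal single path has cost tending to a constant (or at least a quantity dwarfed by the path costs).

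The construction I would use exploits Claim~\ref{claim:2}: the segment $(v_2,v_3)$ must be covered at least $3$ times by one of $H_1,H_2$, and symmetrically for $(v_{n-2},v_{n-1})$. To leverage this, I would place the "bulk" of the vertices in two tight clusters — one just to the right of $v_2$ and one just to the left of $v_{n-1}$ — separated by one long gap of length (say) $L$, with $v_1=s$ adjacent to the left cluster and $v_n=t$ adjacent to the right cluster, and all inter-cluster distances tiny (on the order of $\delta$). Concretely: $s$ at position $0$, a cluster of $\approx n/2$ points in the interval $[\delta, 2\delta]$, then a large jump, a cluster of $\approx n/2$ points in $[L, L+\delta]$, and $t$ at $L+2\delta$. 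The optimal Hamiltonian path walks left-to-right once, crossing the long gap exactly once, so $\mathrm{OPT}(I) \le L + O(n\delta)$. But any second edge-disjoint path must also get from the left cluster to the right cluster, and by a parity/covering argument the long gap (or the segments flanking it) must be covered at least $3$ times in total across the two paths — forcing one of them to cross the gap (or a segment near it) more than once, at a cost of roughly $3L$. So the longer of the two paths has cost at least $\approx 3L$, and taking $L \to \infty$ with $\delta = o(L/n)$ drives the ratio to $3$.

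The key step is to argue rigorously that the gap-crossing segment has total depth at least $5$ over the pair (not just $3$), so that the *more expensive* path crosses it at least $3$ times. This is where I would invoke the parity observation (Observation~\ref{fact:2}): every segment has odd depth w.r.t. each path, the two paths are edge-disjoint so they cannot both cross the gap exactly once via the same segment-edge, and a crossing-count argument (analogous to Claim~\ref{claim:2} but applied to the critical segment) shows the total depth across the pair is at least $4$, hence at least $6$ by parity — wait, more carefully: total depth is a sum of two odd numbers, so it is even; if it exceeds $2$ it is at least $4$; ruling out $(1,3)$ versus $(3,1)$ would give $(3,3)$ or larger. Actually one only needs that the *maximum* of the two depths on the critical segment is at least $3$, which already follows from Claim~\ref{claim:2} applied at the right location — so the long path crosses the gap $\ge 3$ times, costing $\ge 3L - O(n\delta)$.

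The main obstacle I anticipate is handling the clusters carefully enough that the cheap path really does achieve cost $\approx L$ (i.e.\ verifying a valid Hamiltonian $(s,t)$-path exists that crosses the gap only once, and that a second edge-disjoint one exists at all, so the ratio is well-defined and finite), and ensuring the "wasted" movement inside the clusters and near $v_2, v_{n-1}$ contributes only $O(n\delta)$ rather than something comparable to $L$. A clean way to sidestep delicate existence questions is to make the clusters themselves large enough (say, complete geometric graphs on $\ge 12$ points within each cluster, using the earlier existence results like Claim~\ref{claim:1}) so that edge-disjoint path pairs within each cluster exist, then stitch them together across the gap. Once the construction is pinned down, the ratio computation is routine: $\lim_{L\to\infty} \frac{3L - O(n\delta)}{L + O(n\delta)} = 3$, and choosing $L$ large relative to $n\delta$ makes the ratio exceed $3-\varepsilon$ for any prescribed $\varepsilon > 0$.
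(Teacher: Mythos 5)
There is a genuine gap: the forcing step in your construction fails. You place the expensive gap in the \emph{middle} of the line, with about $n/2$ vertices on each side, and then assert that the gap segment must have total depth at least $3$ (hence that one path crosses it at least $3$ times). Nothing forces this for a middle segment: Observation~\ref{fact:2} only gives that each path crosses it an odd number of times, and both paths can cross it exactly once, via two \emph{different} crossing edges. Concretely, take two edge-disjoint Hamiltonian paths of the left cluster, both starting at $s$ and ending at distinct vertices $a_1 \neq a_2$ (these exist once the cluster has at least five points, e.g.\ from two edge-disjoint Hamiltonian cycles of the complete graph by deleting one edge at $s$ from each), two edge-disjoint Hamiltonian paths of the right cluster ending at $t$ and starting at $b_1 \neq b_2$, and join them with the crossing edges $(a_1,b_1)$ and $(a_2,b_2)$. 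The result is a pair of edge-disjoint Hamiltonian $(s,t)$-paths, each of cost $L+O(n\delta)$, so on your instance the ratio tends to $1$, not $3$. The parity sketch in your third paragraph (total depth $\geq 3$, hence $\geq 4$, ruling out $(1,3)$, etc.) is precisely the step with no justification, and Claim~\ref{claim:2} cannot be ``applied at the right location'': it holds only for the segments $(v_2,v_3)$ and $(v_{n-2},v_{n-1})$, because the degree-$1$ endpoint and the depth-$1$ first segment force at least one of the two paths to jump over $v_2$, whereupon the two path-edges at $v_2$ also cross $(v_2,v_3)$, giving depth $3$. A middle gap has no such endpoint structure.

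The repair is to put the heavy segment exactly where Claim~\ref{claim:2} applies, which is what the paper does: on $n>5$ collinear points, set $w(v_i,v_{i+1})=1$ for $i\neq 2$ and $w(v_2,v_3)=W$. Every Hamiltonian $(s,t)$-path crosses $(v_2,v_3)$ at least once, so $OPT = W+(n-2)$, while by Claim~\ref{claim:2} one path of any edge-disjoint pair covers $(v_2,v_3)$ at least three times and thus costs at least $3W+(n-2)$; letting $W$ grow (e.g.\ $W=\frac{(2-\varepsilon)(n-2)}{\varepsilon}$) yields the bound $3-\varepsilon$. Incidentally, your hope that $OPT$ ``tends to a constant'' cannot be realized in any such construction, since every $(s,t)$-path must pay the long distance at least once; that part is harmless, but the triple-crossing forcing is essential, and your placement of the gap loses it.
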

\begin{proof}
    Consider the metric graph $G = (V, E)$ in $\realLine$ of size $n > 5$, where $V = \{v_1, \ldots, v_n\}$ and edge weights are defined as follows: $w(v_i, v_{i+1}) = 1$ for all $i \neq 2$, and $w(v_i, v_{i+1}) = W$ for $i = 2$. 
    Now, let $H_1$ and $H_2$ be a pair of edge-disjoint Hamiltonian ($s$, $t$)-paths in $G$, with $s = v_1$ and $t = v_n$. By Claim \ref{claim:2}, the depth of segment $(v_2, v_3)$ is at least 3 in one of the two paths, which we assume w.l.o.g. to be $H_1$. Then, $c(H_1) \geq 3 \cdot W + (n - 2)$, while $OPT = W + (n - 2)$. Hence
    \begin{equation*}
        PoD(\disjointSHP{}) \geq \frac{3W+ (n - 2)}{W + (n - 2)} = 3 - \frac{2 n - 4}{W + n - 2}.
    \end{equation*}
    Thus, for any $\varepsilon > 0$, we can obtain a PoD of $3 - \varepsilon$ by setting $W:= \frac{(2- \varepsilon)(n-2)}{\varepsilon}$. \hfill$\qed$
\end{proof}

\begin{corollary}
    There is no algorithm for \disjointSHP{} in metric graphs with a loss ratio strictly less than 3. 
\end{corollary}

\subsubsection{\disjointTSP{}.} \label{subsubsection.2}

We now turn our attention to \disjointTSP{} in general metric graphs. We show that there is a metric graph in $\realCircle$ where the longest tour of an arbitrary pair of disjoint tours has a cost that is (almost) twice the optimal value. The proof is long and technical, so we provide only a sketch here; the full argument appears in Appendix~\ref{appendix.proofOfLemmaPoDTSP2}. 

\begin{restatable}[]{lemma}{poDofTSPTwo} \label{lemma:poDofTSP2}
    For any $\varepsilon > 0$, the PoD of \disjointTSP{} in a general metric is at least $2 - \varepsilon$. 
\end{restatable}
\begin{proofSketch}
The construction is similar in spirit to Lemma~\ref{theorem.4} for the path case. 
We consider a cycle graph on $n$ vertices where two specific edges are made very expensive, of weight $W$, while all other edges have unit weight. 
Any single optimal tour has cost roughly $2W$, but when two edge-disjoint tours are required, one of them is forced to traverse both expensive edges, incurring a cost of about $4W$. 
This yields a ratio approaching~2 as $W$ grows. 

The main technical challenge is to show that, regardless of whether the tours are odd- or even-depth, one of the two tours must indeed pick up both heavy edges. This is established through a series of structural claims about depth patterns of consecutive segments. \hfill$\qed$
% The full case analysis is somewhat technical and deferred to Appendix~\ref{appendix.sec4}. 
\end{proofSketch}

\begin{corollary}
    There is no algorithm for \disjointTSP{} in $\mathcal{M}$ with a loss ratio strictly less than 2. 
\end{corollary}

\subsection{Algorithms} \label{subsec:algosGenMetric} 

We now present two simple algorithms, one for \disjointSHP{} and another for \disjointTSP{}, that achieve loss ratios of 3 and 2, respectively, in any weighted graph. 

\subsubsection{Algorithm for \disjointSHP{}.}  Let us first consider \textsc{Disjoint-Path} $\textsc{TSP}_2$, where the goal is to find a pair of edge-disjoint Hamiltonian $(s, t)$-paths in an (arbitrarily) weighted graph $G$. Let $H_\mathrm{opt} = (v_1, v_2, \ldots, v_n)$ be an optimal single Hamiltonian $(s, t)$-path in $G$. Consider now a metric graph $G'$ in $\realLine$ with the same vertex set as $G$, where the vertices are placed in the order given by $H_\mathrm{opt}$ and spaced so that the distance between $v_i$ and $v_{i+1}$ in $G'$ is equal to the weight of the segment $(v_i, v_{i+1})$ in $H_\mathrm{opt}$. Using Algorithm \ref{algo.approxPaths}, we can obtain a pair of edge-disjoint Hamiltonian $(s, t)$-paths in $G'$. 

What happens if we use the same pair of solutions but with the edge weight of the original graph $G$? We claim that the loss ratio of this algorithm is $3$. Indeed, this follows from the fact that the solutions generated by Algorithm \ref{algo.approxPaths} never cover any segment more than three times. Since the segments are defined by the optimal solution $H_\mathrm{opt}$, each of our edge-disjoint paths will have a cost at most three times the cost of $H_\mathrm{opt}$. 

\begin{lemma}
    The described algorithm has loss ratio 3. 
\end{lemma}

\subsubsection{An algorithm for \disjointTSP{}.} 
We now present a simple algorithm for \disjointTSP{} that achieves a loss ratio of 2. This algorithm is listed below as Algorithm \ref{algo.approxNaive}. %\footnote{A similar algorithm appears in the work of Della Croce et al \cite{della2010approximating} on approximating 2-PSP.} 
We omit a detailed description as the provided pseudocode is self-explanatory. Figures \ref{fig:AlgoToursN7} and \ref{fig:AlgoToursN8} in Appendix \ref{appendix.toursAlgo} illustrate tours produced by the algorithm.
We defer the proof of Lemma \ref{lemma:6} below to Appendix \ref{appendix.sec4.Lemma.12}. 

\begin{algorithm}[!ht]
\caption[Caption for LOF]{Na\"ive $2$-loss ratio algorithm} \label{algo.approxNaive}
\vspace{.5em}
\textbf{Input:} Metric graph $G$ on $n > 4$ vertices. \\
\textbf{Output:} A pair $(T_1, T_2)$ of tours of minimum cost $\max(c(T_1), c(T_2))$. 
\vspace{.5em}
\begin{algorithmic}[1]
\newcommand\NoDo{\renewcommand\algorithmicdo{}}
\newcommand\ReDo{\renewcommand\algorithmicdo{\textbf{do}}}
\algrenewcommand\algorithmiccomment[1]{\hfill {\color{blue} \(\triangleright\) #1}}

\algdef{SE}[SUBALG]{Indent}{EndIndent}{}{\algorithmicend\ }%
\algtext*{Indent}
\algtext*{EndIndent}

\State Compute an optimal cost tour $H_\mathrm{opt} = (v_1, v_2, \ldots, v_{n}, v_1)$ in $G$.
\If{$n$ is odd} 
\State $T_1 \leftarrow H_\mathrm{opt}$
\State $T_2 \leftarrow (v_1, v_3, \ldots, v_{n-2}, v_n, v_2, v_4, \ldots, v_{n-1}, v_1)$
\EndIf
\If{$n$ is even} 
\State $T_1 \leftarrow (v_1, v_n, v_2, v_3, \ldots, v_{n-1}, v_1)$
\State $T_2 \leftarrow (v_1, v_3, \ldots, v_{n-1}, v_n, v_{n-2}, \ldots, v_2, v_1)$
\EndIf
\State Return the pair of tours $(T_1, T_2)$
\end{algorithmic}
\vspace{.5em}
\end{algorithm}

\begin{lemma} \label{lemma:6} 
    Algorithm \ref{algo.approxNaive} has a loss ratio of 2. 
\end{lemma}

The loss ratios achieved by the algorithms described in this section, together with the lower bounds of Section \ref{sec.4}, establish the PoD of \disjointSHP{} and \disjointTSP{} in general metric graphs. 

\section{Concluding remarks} \label{sec.4}
We introduced two problems \disjointSHP{} and \disjointTSP{}, and studied them on metric graphs. For the classes of uniform metric graphs in $\realLine$ and $\realCircle$, respectively, we presented $\frac{13}{7}$-loss ratio algorithms for both problems and proved a lower bound of $8/5$ for their PoD. Both of our algorithms are asymptotically tight, as they achieve an asymptotic loss ratio of $8/5$ when $n \rightarrow \infty$. For the class of general metric graphs, for any $\varepsilon > 0$, we show lower bounds of $3$ and $2$ for the PoD of \disjointSHP{} and \disjointTSP{}, respectively. We then provide two algorithms: Algorithm \ref{algo.approxPaths} for \disjointSHP{} and Algorithm \ref{algo.approxNaive} for \disjointTSP{}, whose loss ratios match these lower bounds, respectively. Hence, the PoD of \disjointSHP{} and \disjointTSP{} in a general metric are 3 and 2, respectively.

In the case of \disjointTSP{}, the naive algorithm relies on the computation of an optimal TSP solution, making it impractical in terms of running time. The algorithm can be made efficient (running in polynomial time); however, if one is willing to settle for a starting tour that is only approximate with respect to the optimal tour. For instance, starting the algorithm with a sub-optimal tour, namely one obtained via the Christofides $3/2$-approximation algorithm, results in a naive solution to \disjointTSP{} with approximation ratio 3. The question of whether one can design a polynomial time algorithm with loss-ratio $2 < \alpha < 3$ remains open. We have only presented results for $\textsc{Disjoint-Path TSP}_k$ and $\textsc{Disjoint-TSP}_k$ with $k = 2$. Hence, it is also natural to ask what the price of diversity is for these problems when considering $k > 2$. 

\subsubsection{Acknowledgements.} This research was supported by the European Union’s Horizon 2020 research and innovation programme under the Marie Skłodowska-Curie grant agreement no. 945045, and by the NWO Gravitation project NETWORKS under grant no. 024.002.003. We are grateful to the reviewers for their valuable comments and suggestions. 

\bibliographystyle{splncs04}
\bibliography{biblio}

\renewcommand{\theHsection}{A\arabic{section}}      % To jump to correct place in the appendix. 
\appendix

\section{Proofs of Section \ref{sec.1}} \label{appendix.sec2}

\subsection{Proof of Lemma \ref{lemma.2}} \label{appendix.sec2.Lemma.1}
To prove the lemma, we begin with two simple observations. We only provide a proof of the first, as the second is immediate.

\begin{observation} \label{fact:2}
    The depth of a segment of $P_n$ w.r.t. any ($s$, $t$)-path $H$ is odd. 
\end{observation}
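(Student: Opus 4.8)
The plan is to reinterpret the depth of a segment as the size of an edge cut in $H$ and then apply a handshake (parity) argument. Fix a segment $(v_i, v_{i+1})$ of $P_n$ with $1 \le i \le n-1$, fix an $(s,t)$-path $H$, and write $\mu$ for the depth of this segment w.r.t.\ $H$. Put $L := \{v_1, \dots, v_i\}$. By the definition of covering, an edge $(v_j, v_k) \in H$ with $j < k$ covers $(v_i, v_{i+1})$ exactly when $v_j \in L$ and $v_k \notin L$; hence $\mu$ equals the number of edges of $H$ that have exactly one endpoint in $L$.

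I would then evaluate $\sum_{v \in L} \deg_H(v)$ in two ways. First, since $H$ is a Hamiltonian $(s,t)$-path with $s = v_1 \in L$ and $t = v_n \notin L$, the vertex $v_1$ has $H$-degree $1$ while each of the other $i-1$ vertices of $L$ has $H$-degree $2$, so the sum equals $2i-1$. Second, this same sum counts each edge of $H$ lying inside $L$ twice and each edge of $H$ with exactly one endpoint in $L$ once, so it equals $2a + \mu$ for some integer $a \ge 0$. Comparing the two expressions gives $\mu = 2i - 1 - 2a$, which is odd.

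I do not anticipate a genuine obstacle here: this is essentially a one-line parity statement once the cut reformulation is in place. The only points that merit a moment of care are (i) that the covering condition coincides exactly with ``crossing the cut $(L, V \setminus L)$'', including the degenerate case $i = 1$ where $L = \{v_1\}$, $a = 0$, and the unique edge of $H$ incident to $s$ covers the first segment, giving $\mu = 1$; and (ii) that $s$ and $t$ are the two endpoints of $H$ and hence have $H$-degree $1$ rather than $2$ — it is precisely this odd total degree over $L$ that forces the depth to be odd.
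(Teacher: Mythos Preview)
Your proof is correct. Both you and the paper reduce the statement to the same observation, namely that $\mu$ counts the edges of $H$ crossing the cut $(L,\,V\setminus L)$; the paper then finishes with a path-traversal argument (the walk from $s\in L$ to $t\notin L$ must cross the cut an odd number of times), whereas you finish with a degree-sum/handshake computation, which is an equally standard and equally short way to extract the parity.
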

\begin{proof}
    Recall that $P_n = (v_1, v_2, \ldots, v_{n-1}, v_n)$, where $s = v_1$ and $t = v_n$. Consider an arbitrary segment $(v_i, v_{i+1})$ in $P_n$ with $i \in \{1, \ldots, n - 1\}$. Let $L = \{s,\ldots, v_i\}$ and $R = \{v_{i+1}, \ldots, t\}$. By definition of cover, every occurrence of an edge $e = (u, v)$ in $H$ with $u \in L$ and $v \in R$ means that $e$ covers segment $(v_i, v_{i+1})$. As $H$ starts in $s$ and ends in $t$, this must happen an odd number of times. \hfill$\qed$
\end{proof}

\begin{observation} \label{claim:0}
    Let $H$ be a Hamiltonian path. The segments $(s, v_2)$ and $(v_{n-1}, t)$ each have depth 1 w.r.t. $H$.
\end{observation}

We also require the following three claims. 

\begin{claim} \label{claim:6}
    Let $\gamma = (s_1, s_2, s_3)$ be a 3-piece that has total depth 3 w.r.t. some Hamiltonian path $H$. Then the middle segment $s_2$ is an edge of $H$.  
\end{claim}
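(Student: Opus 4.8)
The plan is to use Observation~\ref{fact:2} to pin down the depths of all three segments, and then run a short degree-counting argument at the two vertices shared by the middle segment. Write $s_2 = (v_\ell, v_{\ell+1})$, so that $s_1 = (v_{\ell-1}, v_\ell)$ and $s_3 = (v_{\ell+1}, v_{\ell+2})$; since $\gamma$ is a 3-piece of $P_n$ we have $2 \le \ell \le n-2$, so $v_\ell$ and $v_{\ell+1}$ are internal vertices of $P_n$ and hence have degree exactly $2$ in the Hamiltonian path $H$. By Observation~\ref{fact:2} the depths $\mu_H(s_1), \mu_H(s_2), \mu_H(s_3)$ are all odd, hence each is at least $1$; as they sum to $3$ by hypothesis, each equals exactly $1$. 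In particular, there is a \emph{unique} edge $e = (v_a, v_b) \in H$ with $a \le \ell$ and $\ell + 1 \le b$, namely the unique edge of $H$ covering $s_2$.

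It then suffices to show $a = \ell$ and $b = \ell+1$, since this forces $e = (v_\ell, v_{\ell+1}) = s_2$, i.e.\ $s_2 \in H$. Suppose $a < \ell$. Then $e$ is not incident to $v_\ell$, so the two edges of $H$ at $v_\ell$ are both different from $e$. If one of them, say $(v_\ell, v_c)$, had $c \ge \ell+1$, it would be a second edge of $H$ covering $s_2$, contradicting $\mu_H(s_2) = 1$; hence both edges at $v_\ell$ lead to vertices of index $\le \ell-1$, but each such edge covers $s_1$, forcing $\mu_H(s_1) \ge 2$, a contradiction. Thus $a = \ell$. Symmetrically, if $b > \ell+1$ then $e$ is not incident to $v_{\ell+1}$, so the two edges of $H$ at $v_{\ell+1}$ both differ from $e$; one leading to a vertex of index $\le \ell$ would give a second edge covering $s_2$, so both lead to vertices of index $\ge \ell+2$, each of which covers $s_3$, forcing $\mu_H(s_3) \ge 2$ — again a contradiction. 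Hence $b = \ell+1$.

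The only delicate point is the use of degree $2$ at $v_\ell$ and $v_{\ell+1}$: this is what makes the pigeonhole on the left/right directions of their incident edges decisive, and it is precisely why we need $\gamma$ to be an \emph{interior} 3-piece, which is guaranteed by $2 \le \ell \le n-2$. Everything else is direct bookkeeping with the definition of ``cover'', and no case analysis beyond the two symmetric cases above is required.
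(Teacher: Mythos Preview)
Your proof is correct and follows essentially the same approach as the paper: use Observation~\ref{fact:2} to force $\mu_H(s_1)=\mu_H(s_2)=\mu_H(s_3)=1$, then run a degree-$2$ argument at the interior vertices $v_\ell$ and $v_{\ell+1}$. The paper phrases this as a contradiction (assume $s_2\notin H$, so one of $v_\ell,v_{\ell+1}$ is covered, forcing a neighbouring segment to depth~$>1$), while you argue directly by pinning down the unique edge covering $s_2$ and showing both endpoints must be $v_\ell,v_{\ell+1}$; these are two presentations of the same idea, and your version makes the pigeonhole at each vertex more explicit.
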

\begin{proof}
 For the sake of contradiction, suppose that $s_2 = (v_i, v_{i+1})$ is not an edge of $H$. Hence, at least one of the vertices $v_i$ or $v_{i+1}$ must be covered by some edge of $H$. But every vertex in $H$ (except for its endpoints) has degree 2. This means that either segment $s_1$ or $s_3$ has a depth greater than 1 w.r.t. $H$. By Observation \ref{fact:2}, the depth of each segment of $\gamma$ w.r.t. $H$ must be odd. Therefore, $\gamma$ has a total depth greater than 3, which is a contradiction. \hfill$\qed$
\end{proof}

\begin{claim} \label{claim:7}
    Let $\gamma = (s_1, s_2, s_3)$ be a 3-piece that has total depth 5 w.r.t. some Hamiltonian path $H$. If the middle segment $s_2$ has depth 3 w.r.t. $H$, then $s_2$ is an edge of $H$.  
\end{claim}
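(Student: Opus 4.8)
The plan is to analyze the $H$-edges incident to the two vertices bounding the middle segment $s_2$, determine their ``directions'', and then rule out the possibility that $s_2 \notin H$ by a counting argument on the depth of $s_2$.

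First I set notation: write $s_2 = (v_i, v_{i+1})$, so that $s_1 = (v_{i-1}, v_i)$ and $s_3 = (v_{i+1}, v_{i+2})$. Since the $3$-piece has total depth $5$ and $\mu_H(s_2) = 3$, we get $\mu_H(s_1) + \mu_H(s_3) = 2$; by Observation~\ref{fact:2} each of these depths is odd, so $\mu_H(s_1) = \mu_H(s_3) = 1$. Moreover $\mu_H(s_2) = 3 \neq 1$, so by Observation~\ref{claim:0} the segment $s_2$ is neither $(s, v_2)$ nor $(v_{n-1}, t)$; hence $v_i$ and $v_{i+1}$ are interior vertices of $H$ and each has degree exactly $2$ in $H$. (This also disposes of the boundary cases $i = 1$ and $i = n-1$.)

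The crux is a local computation at $v_i$. Call an $H$-edge at $v_i$ \emph{left} if it goes to some $v_j$ with $j < i$, and \emph{right} if it goes to some $v_k$ with $k > i$. Directly from the definition of ``cover'': a left edge at $v_i$ covers $s_1$ but not $s_2$; a right edge at $v_i$ covers $s_2$ but not $s_1$; and every $H$-edge not incident to $v_i$ covers $s_1$ if and only if it covers $s_2$. Therefore $\mu_H(s_2) - \mu_H(s_1)$ equals the number of right edges at $v_i$ minus the number of left edges at $v_i$. This difference is $3 - 1 = 2$, and $v_i$ has degree $2$, so both $H$-edges at $v_i$ are right edges, i.e., lead into $\{v_{i+1}, \dots, v_n\}$. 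The mirror-image argument at $v_{i+1}$, using $\mu_H(s_2) - \mu_H(s_3) = 2$, shows that both $H$-edges at $v_{i+1}$ lead into $\{v_1, \dots, v_i\}$.

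Finally, suppose for contradiction that $s_2 \notin H$. Then the two $H$-edges at $v_i$ actually lead into $\{v_{i+2}, \dots, v_n\}$, the two $H$-edges at $v_{i+1}$ lead into $\{v_1, \dots, v_{i-1}\}$, and these four edges are pairwise distinct (the first two contain $v_i$ but not $v_{i+1}$, the last two contain $v_{i+1}$ but not $v_i$). Each of them has one endpoint in $\{v_1, \dots, v_i\}$ and the other in $\{v_{i+1}, \dots, v_n\}$, hence covers $s_2$, giving $\mu_H(s_2) \geq 4$ and contradicting $\mu_H(s_2) = 3$. I expect the only delicate point to be the covering bookkeeping at $v_i$ — namely, pinning down which incident-edge directions cover which of $s_1$ and $s_2$, and verifying that non-incident edges contribute equally to the depths of $s_1$ and $s_2$; once that is set up correctly, the remaining steps are immediate.
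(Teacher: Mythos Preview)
Your proof is correct. Both you and the paper analyze the $H$-edges incident to the two vertices bounding $s_2$, but the bookkeeping and the final contradiction differ. The paper assumes $s_2 \notin H$, argues that some edge at $v_{i+2}$ must reach left past $v_{i+1}$ (else both go right and $\mu_H(s_3) > 1$), and then shows that the edges at $v_{i+1}$ are forced to add extra coverage to $s_1$ or $s_3$, contradicting $\mu_H(s_1) = \mu_H(s_3) = 1$. You instead prove the clean identity $\mu_H(s_2) - \mu_H(s_1) = (\text{right edges at } v_i) - (\text{left edges at } v_i)$, deduce that both edges at $v_i$ go right and both at $v_{i+1}$ go left, and under $s_2 \notin H$ exhibit four distinct edges covering $s_2$, contradicting $\mu_H(s_2) = 3$. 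Your difference formula is a reusable local tool and makes the direction analysis automatic rather than case-based; the paper's argument is slightly shorter but more ad hoc. Both are valid and short.
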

\begin{proof}
    By Observation \ref{fact:2}, the depth of segments $s_1 = (v_i, v_{i+1})$ and $s_3 = (v_{i+2}, v_{i+3})$ must be 1 w.r.t. $H$. For the sake of contradiction, suppose that $s_2 = (v_{i+1}, v_{i+2})$ is not an edge of $H$. Then, without loss of generality, the vertex $v_{i+1}$ is covered by an edge of $v_{i+2}$; otherwise, the segment $s_3$ would have a depth greater than 1 w.r.t. $H$. Since every vertex in $H$ (except for its endpoints) has degree 2, and $v_{i+1}$ is not an endpoint of $H$, any edge of $v_{i+1}$ now covers at least one of the segments $s_1$ or $s_3$. But these were already covered by the edges of $v_{i+2}$. Therefore, at least one of $s_1$ or $s_3$ has a depth greater than 1 w.r.t. $H$, which is a contradiction. \hfill$\qed$
\end{proof}

\begin{figure}[H]
    \centering
    \begin{tikzpicture}[scale = 1.0, transform shape]
    \node[draw=black,circle,inner sep=0pt,minimum size=1.5em] (s) at (-3,1) {\footnotesize $s$};
    \node (v0) at (-2.25,1) {$\dots$};
\node[draw=black,circle,inner sep=0pt,minimum size=1.5em] (v3) at (0.5,1) {\footnotesize $v_{i+2}$};
\node[draw=black,circle,inner sep=0pt,minimum size=1.5em] (v4) at (1.5,1) {\footnotesize $v_{i+3}$};
\node (v5) at (2.25,1) {\footnotesize $\dots$};
\node[draw=black,circle,inner sep=0pt,minimum size=1.5em] (v2) at (-0.5,1) {\footnotesize $v_{i+1}$};
\node[draw=black,circle,inner sep=0pt,minimum size=1.5em] (v1) at (-1.5,1) {\footnotesize $v_{i}$};
\draw  (v1) edge[bend left=45] (v3);
\draw  (v3) edge (v2);
\draw  (v2) edge[bend right=45] (v4);
\node[draw=black,circle,inner sep=0pt,minimum size=1.5em] (s) at (3,1) {\footnotesize $t$};
\end{tikzpicture}
    \caption{A 3-piece with vertices $v_i, , v_{i+1}, v_{i+2}, v_{i+3}$, having total depth 5, where the middle segment has depth 3 w.r.t. a Hamiltonian path.}
    \label{fig:aThreePiece}
\end{figure}

\begin{claim} \label{lemma.1}
    Any pair of disjoint Hamiltonian paths on the uniform metric graph $\uniformLineGeomGraph$ has a cut-point if there is a 3-piece with a total depth of less than 10. 
\end{claim}
\begin{proof}
    Let $H_1$ and $H_2$ be the two disjoint Hamiltonian ($s$, $t$)-paths and let $P_n$ be the path graph induced by \uniformLineGeomGraph. First, assume that there is a 3-piece with a total depth of less than 10. Since, by Observation \ref{fact:2}, each segment can have only odd depth, such a piece must have a total depth of 6 or 8; namely, depth 3 w.r.t. both paths or depth 3 w.r.t. one path and depth 5 w.r.t. the other. The former case can be discarded as $H_1$ and $H_2$ are edge-disjoint. The only case realizing a total depth of 6 is if every segment has depth 1 w.r.t. $H_1$ and $H_2$, respectively. But then the edge corresponding to the middle segment of the 3-piece would be present in both $H_1$ and $H_2$. 
    
    The only remaining possibility is for the piece to have a total depth of 8. This occurs only if, without loss of generality, each segment in the piece has a depth of 1 w.r.t. path $H_1$, while w.r.t. path $H_2$, two segments have a depth of 1 and one segment has a depth of 3. By Claim \ref{claim:6}, the middle segment of the piece is an edge of $H_1$. Then, by Claim \ref{claim:7}, one of the extremal segments of the piece---the leftmost or the rightmost---must have depth 3 w.r.t. $H_2$. In either case, there is always a vertex $v$ that is not covered w.r.t. $H_1$ nor $H_2$. This is because the vertices of the middle segment are not covered by $H_1$, and exactly one of these middle vertices is not covered by $H_2$. Therefore, such a vertex $v$ is a cut-point. \hfill$\qed$
    %To prove the other direction, let $v_i$ be a cut-point w.r.t. $H_1$ and $H_2$. It must be that, in both paths, $v_i$ has edges $(v_i, v_j)$ and $(v_i, v_k)$ where $j < i$ and $i < k$. That is, the neighbors of $v_i$ in $H_1$ (resp. $H_2$) are located at opposite sides of $v_i$ in the canonical path $P_n$. Otherwise, either $v_i$ would be covered or one of $H_1$ and $H_2$ would not be an $s$-$t$ path. Moreover, the neighboring segments of $v_i$ (i.e., $(v_i, v_{i+1})$ and $(v_{i-1}, v_i)$) must be of depth 1 w.r.t. both $H_1$ and $H_2$. Now we claim that the total depths of segments $(v_{i-2}, v_{i-1})$ and $(v_{i+1}, v_{i+2})$ are each equal to 4. (Note that at least one of segments $(v_{i-2}, v_{i-1})$ and $(v_{i+1}, v_{i+2})$ must exist.) If this were not the case for either segment, then $v_i$ would be covered, or the paths $H_1$ and $H_2$ would not be disjoint. But this means that the 3-pieces defined by $(v_{i-2}, v_{i-1}, v_i, v_{i+1})$ and $(v_{i-1}, v_i, v_{i+1}, v_{i+2})$ have each a total depth of 8 < 10 w.r.t. $H_1$ and $H_2$, which is what we set out to prove. Thus, the claim is proven. \hfill$\qed$
\end{proof}

With these ingredients in place, we can now prove the lemma, which we restate for clarity. 

\lemmaNoCutpoint*
\begin{proof}
    Recall that $P_n$ is the path graph formed by the vertex set and the segments of the uniform metric graph $\uniformLineGeomGraph$. 
    We define $m := n - 1$ to be the number of edges in $P_n$, and let $H_1$ and $H_2$ be two edge-disjoint Hamiltonian paths in $\uniformLineGeomGraph$. We analyze three cases based on the number $m$ of edges: 

    \begin{enumerate}[label=(\roman*)]
        \item $m = 3k$ for some $k \in \mathbb{N}$. In this case, we can partition the edges of $P_n$ into $k$ 3-pieces. Because we know that $H_1$ and $H_2$ have no cut-point, by Claim \ref{lemma.1} each 3-piece has a total cost that is greater than or equal to 10, which amounts to a total depth of at least $10k$ for the pair $(H_1, H_2)$. The total cost of the two paths is then at least $\frac{10(n-1)}{3}$, which is greater than $\frac{16(n-1)}{5}$ for $n \geq 1$.

        \item $m = 3k + 1$ for some $k \in \mathbb{N}$. In this case, we can partition $P_n$ into $k$ 3-pieces and one 1-piece. We know that the 1-piece must have a total depth of at least 2. Then, the total depth of all the pieces is at least $10k + 2 = \frac{10(n-1)}{3} - \frac{4}{3}$, where we have used Claim \ref{lemma.1} to bound the total depth of the 3-pieces. Note that $\frac{10(n-1)}{3} - \frac{4}{3} < \frac{16(n-1)}{3}$ iff $n < 11$. Because $k \geq 2$ and $n = m+1 = 3k+2$, we thus have $k = 2$. However, by Observation \ref{claim:1}, no two edge-disjoint paths exist while having a total cost less than $\frac{16(n-1)}{3}$ for $n = 8$. Therefore, we conclude for this case that the total cost of paths $H_1$ and $H_2$ is at least $\frac{16(n-1)}{5}$ for $n \geq 6$. 

        \item $m = 3k + 2$ for some $k \in \mathbb{N}$. In this case, we partition the edges of $P_n$ into $k$ 3-pieces and one 2-piece. Let $s_{i-1}, s_i \in P_n$ be the adjacent segments defining the latter, with $s_{i-1} = (v_{i-1}, v_i)$ and $s_i = (v_i, v_{i+1})$. We now prove that the 2-piece must have a total depth of at least 6. Since the paths $H_1$ and $H_2$ have no cut-point, one of $H_1$ or $H_2$ must cover vertex $v_i$. Then, one of $s_{i-1}$ of $s_i$ must have a total depth greater than 2. Further, because of Observation \ref{fact:2}, such a segment must have a total depth of at least 4. Since the remaining segment has a total depth of at least 2, it follows that the 2-piece defined by $s_{i-1}$ and $s_i$ has a total depth of at least 6.

        Back to our main argument, by Claim \ref{lemma.1}, we then have that the total depth of the edges of $P_n$ is at least $10k+6 = \frac{10(n-1)}{3} - \frac{2}{3}$. This expression is greater than or equal to $\frac{16(n-1)}{3}$ iff $n \geq 6$. Hence, the main claim is also valid in this case. 
    \end{enumerate}
    Having considered all possible cases, the lemma is proved. \hfill$\qed$
\end{proof}

\subsection{Loss ratio analysis of the \texttt{Paths} algorithm} \label{appendix.algorithmProofs}

We first provide a more formal pseudocode description of Algorithm \ref{algo.approxPaths}. For this, we require some notation. Let $X = (x_1, \ldots, x_p)$ and $Y = (y_1, \ldots, y_q)$ be two path graphs. To define their concatenation, we assume that the final vertex of $X$ corresponds to the starting vertex of $Y$, i.e., $x_p = y_1$. The concatenated path, denoted by $X \mathbin\Vert Y$, is defined on the combined vertex set as $X \mathbin\Vert Y = (x_1, \ldots, x_p, y_2, \ldots, y_q)$. We use the notation $X^k$ to denote the concatenation of $k$ copies of $X$. %The updated pseudocode is shown as Algorithm \ref{algo.approxPathsFormal}.

\begin{algorithm}[H]
\caption[Caption for LOF]{Algorithm \texttt{Paths}} \label{algo.approxPathsFormal}
\vspace{.5em}
\textbf{Input:} Uniform metric graph \uniformLineGeomGraph \\ 
\textbf{Output:} A pair of disjoint Hamiltonian paths. 
\vspace{.5em}
\begin{algorithmic}[1]
\newcommand\NoDo{\renewcommand\algorithmicdo{}}
\newcommand\ReDo{\renewcommand\algorithmicdo{\textbf{do}}}
\algrenewcommand\algorithmiccomment[1]{\hfill {\color{blue} \(\triangleright\) #1}}

\algdef{SE}[SUBALG]{Indent}{EndIndent}{}{\algorithmicend\ }%
\algtext*{Indent}
\algtext*{EndIndent}

\State Let $(H_i, H'_i)$ be an optimal solution to \disjointSHP{} when the input graph has $6 \leq i \leq 11$ vertices.
\If{$n \leq 11$}
\State Return $(H_n, H'_n)$.
\Else
\State Identify $k$ and $\ell$ such that $(n - 1) = 10k + \ell$, and let $m = 10k$. 
\State Let $(H_{11}, H'_{11})$ be the pair of paths shown in Figure~\ref{fig:OptPathsN11}.
\If{$\ell = 0$}
\State \Return $((H_{11})^k, (H'_{11})^k)$.
\ElsIf{$\ell \in \{5, 6, 7, 8, 9\}$} 
\State \Return $((H_{11})^k \mathbin\Vert H_{\ell+1}, (H'_{11})^k \mathbin\Vert H'_{\ell+1})$.
\Else \Comment{$\ell \in \{1, 2, 3, 4\}$}
\State \begin{varwidth}[t]{0.9\linewidth}
Let $(H, H')$ be the pair of paths for $12 \leq n \leq 15$ shown in Appendix \ref{appendix.optPaths2}.  
\end{varwidth}
\State \Return $((H_{11})^{k-1} \mathbin\Vert H, (H'_{11})^{k-1} \mathbin\Vert H')$.
\EndIf
\EndIf
\end{algorithmic}
\vspace{.5em}
\end{algorithm}

We split Lemma \ref{theorem:LRPathsCombined} into two parts, proving first the loss ratio of the algorithm and then its asymptotic loss ratio.

\begin{restatable}[]{lemma}{lemmaLossRatioPaths} \label{theorem:LRPaths}
    Algorithm \ref{algo.approxPaths} has loss ratio $LR_{\texttt{Paths}} = \frac{13}{7}$. 
\end{restatable}
\begin{proof}
    Let $n$ denote the size of the input graph. We consider two cases. First, if $n \leq 11$, we can determine that the ratio $\frac{\min_{S \in \mathcal{S}(\mathcal{I})} \{\text{cost}(S) \}}{OPT(I)}$ is maximized when $n = 8$, where an optimal pair of disjoint paths has cost 13 (see Figure \ref{fig:OptPathsN8} in Appendix \ref{appendix.optPaths}) and an optimal single path has cost 7. Thus, for $n = 8$ the loss ratio is $\frac{13}{7}$. The algorithm indeed outputs a disjoint pair of paths achieving this ratio when $n = 8$. Second, if $n > 11$, from the proof of Lemma \ref{theorem:aPoDPaths}, we know that the greatest ratio the algorithm can produce is $\frac{23}{13}$, which occurs when $n = 14$. Since $\frac{13}{7} > \frac{23}{13}$, the maximum ratio over all instances is $\frac{13}{7}$, and thus the required loss ratio is obtained. \hfill$\qed$
\end{proof}

\begin{restatable}[]{lemma}{lemmaALossRatioPaths} \label{theorem:aPoDPaths}
    Algorithm \ref{algo.approxPaths} has asymptotic loss ratio $LR^\infty_{\texttt{Paths}} = \frac{8}{5}$. 
\end{restatable}
\begin{proof}
    Note that the cost of the more expensive of the two paths generated by Algorithm \ref{algo.approxPaths} on an input of size $n = 11 + (k-1) \cdot 10 + \ell$, where $k \geq 1$ and $1 \leq \ell \leq 9$, can be expressed as $16k + f(\ell)$, where $f(\ell)$ is never larger than $2 \ell$ (see Table~\ref{tab:fValues}). (The values in this table follow directly from the constructions in Appendices \ref{appendix.optPaths} and \ref{appendix.optPaths2}.) Similarly, the cost of the optimal path is given by $10k + \ell$. Hence, the asymptotic loss ratio is $\lim_{k \rightarrow \infty} \frac{16k + f(\ell)}{10 k + \ell} = \frac{8}{5}$. \hfill$\qed$
\end{proof}

\begin{table}[H]
    \centering
    \begin{tabular}{|c|ccccccccc|}
    \hline
        $\ell$ & 1 & 2 & 3 & 4 & 5 & 6 & 7 & 8 & 9 \\
        \hline
        $f(\ell)$ & 3 & 4 & 7 & 8 & 9 & 10 & 13 & 14 & 15 \\
        \hline
    \end{tabular}
    \caption{Values for $f(\ell)$ of the proof of Lemma \ref{theorem:LRPaths}.}
    \label{tab:fValues}
\end{table}

\section{Deferred Figures} \label{appendix.figures}

\subsection{Some optimal Hamiltonian \texorpdfstring{($s$, $t$)}{(s, t)}-paths for small \texorpdfstring{$n$}{n}} \label{appendix.optPaths}

\begin{figure}[H]
    \centering
    \begin{tikzpicture}
\node [draw=black,circle,inner sep=0pt,minimum size=1pt] (v3) at (0,1.25) {\footnotesize $v_{3}$};
\node [draw=black,circle,inner sep=0pt,minimum size=1pt] (v4) at (0.75,1.25) {\footnotesize $v_{4}$};
\node [draw=black,circle,inner sep=0pt,minimum size=1pt] (v5) at (1.5,1.25) {\footnotesize $v_{5}$};
\node [draw=black,circle,inner sep=0pt,minimum size=1pt] (v6) at (2.25,1.25) {\footnotesize $v_{6}$};
\node [draw=black,circle,inner sep=0pt,minimum size=1pt] (v2) at (-0.75,1.25) {\footnotesize $v_{2}$};
\node [draw=black,circle,inner sep=0pt,minimum size=1pt] (v1) at (-1.5,1.25) {\footnotesize $v_{1}$};
\draw  (v1) edge[bend left=45] (v3);
\draw  (v3) edge (v2);
\draw  (v2) edge[bend right=45] (v4);
\draw  (v4) edge (v5);
\draw  (v5) edge (v6);
\node at (-2.5,1.25) {$H_1: $};

\node [draw=black,circle,inner sep=0pt,minimum size=1pt] (v3) at (0,-0.25) {\footnotesize $v_{3}$};
\node [draw=black,circle,inner sep=0pt,minimum size=1pt] (v4) at (0.75,-0.25) {\footnotesize $v_{4}$};
\node [draw=black,circle,inner sep=0pt,minimum size=1pt] (v5) at (1.5,-0.25) {\footnotesize $v_{5}$};
\node [draw=black,circle,inner sep=0pt,minimum size=1pt] (v6) at (2.25,-0.25) {\footnotesize $v_{6}$};
\node [draw=black,circle,inner sep=0pt,minimum size=1pt] (v2) at (-0.75,-0.25) {\footnotesize $v_{2}$};
\node [draw=black,circle,inner sep=0pt,minimum size=1pt] (v1) at (-1.5,-0.25) {\footnotesize $v_{1}$};
\draw  (v1) edge (v2);
\draw  (v3) edge (v4);
\draw  (v2) edge[bend left=45] (v5);
\draw  (v4) edge[bend left=45] (v6);
\draw  (v5) edge[bend left=45] (v3);
\node at (-2.5,-0.25) {$H_2:$};

\end{tikzpicture}
    \caption{An optimal pair of edge-disjoint Hamiltonian paths with $cost(H_1, H_2) = 9$ for $n = 6$.}
    \label{fig:OptPathsN6}
\end{figure}

\begin{figure}[H]
    \centering
    \begin{tikzpicture}
\node [draw=black,circle,inner sep=0pt,minimum size=1pt] (v3) at (0,1.25) {\footnotesize $v_{3}$};
\node [draw=black,circle,inner sep=0pt,minimum size=1pt] (v4) at (0.75,1.25) {\footnotesize $v_{4}$};
\node [draw=black,circle,inner sep=0pt,minimum size=1pt] (v5) at (1.5,1.25) {\footnotesize $v_{5}$};
\node [draw=black,circle,inner sep=0pt,minimum size=1pt] (v6) at (2.25,1.25) {\footnotesize $v_{6}$};
\node [draw=black,circle,inner sep=0pt,minimum size=1pt] (v2) at (-0.75,1.25) {\footnotesize $v_{2}$};
\node [draw=black,circle,inner sep=0pt,minimum size=1pt] (v1) at (-1.5,1.25) {\footnotesize $v_{1}$};
\node [draw=black,circle,inner sep=0pt,minimum size=1pt] (v7) at (3,1.25) {\footnotesize $v_{7}$};
\draw  (v1) edge[bend left=45] (v3);
\draw  (v3) edge (v4);
\draw  (v2) edge[bend right=45] (v4);
\draw  (v2) edge[bend left=45] (v5);
\draw  (v5) edge (v6);
\draw  (v6) edge (v7);
\node at (-2.5,1.25) {$H_1: $};

\node [draw=black,circle,inner sep=0pt,minimum size=1pt] (v3) at (0,-0.25) {\footnotesize $v_{3}$};
\node [draw=black,circle,inner sep=0pt,minimum size=1pt] (v4) at (0.75,-0.25) {\footnotesize $v_{4}$};
\node [draw=black,circle,inner sep=0pt,minimum size=1pt] (v5) at (1.5,-0.25) {\footnotesize $v_{5}$};
\node [draw=black,circle,inner sep=0pt,minimum size=1pt] (v6) at (2.25,-0.25) {\footnotesize $v_{6}$};
\node [draw=black,circle,inner sep=0pt,minimum size=1pt] (v2) at (-0.75,-0.25) {\footnotesize $v_{2}$};
\node [draw=black,circle,inner sep=0pt,minimum size=1pt] (v1) at (-1.5,-0.25) {\footnotesize $v_{1}$};
\node [draw=black,circle,inner sep=0pt,minimum size=1pt] (v7) at (3,-0.25) {\footnotesize $v_{7}$};
\draw  (v1) edge (v2);
\draw  (v2) edge (v3);
\draw  (v3) edge[bend left=45] (v6);
\draw  (v4) edge[bend right=45] (v6);
\draw  (v4) edge (v5);
\draw  (v5) edge[bend left=45] (v7);
\node at (-2.5,-0.25) {$H_2:$};

\end{tikzpicture}
    \caption{An optimal pair of edge-disjoint Hamiltonian paths with $cost(H_1, H_2) = 10$ for $n = 7$.}
    \label{fig:OptPathsN7}
\end{figure}

\begin{figure}[H]
    \centering
    \begin{tikzpicture}
\node [draw=black,circle,inner sep=0pt,minimum size=1pt] (v3) at (0,1.25) {\footnotesize $v_{3}$};
\node [draw=black,circle,inner sep=0pt,minimum size=1pt] (v4) at (0.75,1.25) {\footnotesize $v_{4}$};
\node [draw=black,circle,inner sep=0pt,minimum size=1pt] (v5) at (1.5,1.25) {\footnotesize $v_{5}$};
\node [draw=black,circle,inner sep=0pt,minimum size=1pt] (v6) at (2.25,1.25) {\footnotesize $v_{6}$};
\node [draw=black,circle,inner sep=0pt,minimum size=1pt] (v2) at (-0.75,1.25) {\footnotesize $v_{2}$};
\node [draw=black,circle,inner sep=0pt,minimum size=1pt] (v1) at (-1.5,1.25) {\footnotesize $v_{1}$};
\node [draw=black,circle,inner sep=0pt,minimum size=1pt] (v7) at (3,1.25) {\footnotesize $v_{7}$};
\node [draw=black,circle,inner sep=0pt,minimum size=1pt] (v8) at (3.75,1.25) {\footnotesize $v_{8}$};
\draw  (v1) edge[bend left=45] (v3);
\draw  (v3) edge (v4);
\draw  (v2) edge[bend right=45] (v4);
\draw  (v2) edge[bend left=45] (v5);
\draw  (v5) edge (v6);
\draw  (v6) edge (v7);
\draw  (v7) edge (v8);
\node at (-2.5,1.25) {$H_1: $};

\node [draw=black,circle,inner sep=0pt,minimum size=1pt] (v3) at (0,-0.25) {\footnotesize $v_{3}$};
\node [draw=black,circle,inner sep=0pt,minimum size=1pt] (v4) at (0.75,-0.25) {\footnotesize $v_{4}$};
\node [draw=black,circle,inner sep=0pt,minimum size=1pt] (v5) at (1.5,-0.25) {\footnotesize $v_{5}$};
\node [draw=black,circle,inner sep=0pt,minimum size=1pt] (v6) at (2.25,-0.25) {\footnotesize $v_{6}$};
\node [draw=black,circle,inner sep=0pt,minimum size=1pt] (v2) at (-0.75,-0.25) {\footnotesize $v_{2}$};
\node [draw=black,circle,inner sep=0pt,minimum size=1pt] (v1) at (-1.5,-0.25) {\footnotesize $v_{1}$};
\node [draw=black,circle,inner sep=0pt,minimum size=1pt] (v7) at (3,-0.25) {\footnotesize $v_{7}$};
\node [draw=black,circle,inner sep=0pt,minimum size=1pt] (v8) at (3.75,-0.25) {\footnotesize $v_{8}$};
\draw  (v1) edge (v2);
\draw  (v2) edge (v3);
\draw  (v3) edge[bend left=45] (v7);
\draw  (v5) edge[bend right=45] (v7);
\draw  (v4) edge (v5);
\draw  (v4) edge[bend left=45] (v6);
\draw  (v6) edge[bend left=45] (v8);
\node at (-2.5,-0.25) {$H_2:$};

\end{tikzpicture}
    \caption{An optimal pair of edge-disjoint Hamiltonian paths with $cost(H_1, H_2) = 13$ for $n = 8$.}
    \label{fig:OptPathsN8}
\end{figure}

\begin{figure}[H]
    \centering
    \begin{tikzpicture}
\node [draw=black,circle,inner sep=0pt,minimum size=1pt] (v3) at (0,1.25) {\footnotesize $v_{3}$};
\node [draw=black,circle,inner sep=0pt,minimum size=1pt] (v4) at (0.75,1.25) {\footnotesize $v_{4}$};
\node [draw=black,circle,inner sep=0pt,minimum size=1pt] (v5) at (1.5,1.25) {\footnotesize $v_{5}$};
\node [draw=black,circle,inner sep=0pt,minimum size=1pt] (v6) at (2.25,1.25) {\footnotesize $v_{6}$};
\node [draw=black,circle,inner sep=0pt,minimum size=1pt] (v2) at (-0.75,1.25) {\footnotesize $v_{2}$};
\node [draw=black,circle,inner sep=0pt,minimum size=1pt] (v1) at (-1.5,1.25) {\footnotesize $v_{1}$};
\node [draw=black,circle,inner sep=0pt,minimum size=1pt] (v7) at (3,1.25) {\footnotesize $v_{7}$};
\node [draw=black,circle,inner sep=0pt,minimum size=1pt] (v8) at (3.75,1.25) {\footnotesize $v_{8}$};
\node [draw=black,circle,inner sep=0pt,minimum size=1pt] (v9) at (4.5,1.25) {\footnotesize $v_{9}$};
\draw  (v1) edge[bend left=45] (v3);
\draw  (v5) edge (v4);
\draw  (v3) edge[bend left=45] (v5);
\draw  (v4) edge[bend left=45] (v2);
\draw  (v2) edge[bend left=45] (v6);
\draw  (v8) edge (v9);
\draw  (v6) edge (v7);
\draw  (v7) edge (v8);
\node at (-2.5,1.25) {$H_1: $};

\node [draw=black,circle,inner sep=0pt,minimum size=1pt] (v3) at (0,-0.25) {\footnotesize $v_{3}$};
\node [draw=black,circle,inner sep=0pt,minimum size=1pt] (v4) at (0.75,-0.25) {\footnotesize $v_{4}$};
\node [draw=black,circle,inner sep=0pt,minimum size=1pt] (v5) at (1.5,-0.25) {\footnotesize $v_{5}$};
\node [draw=black,circle,inner sep=0pt,minimum size=1pt] (v6) at (2.25,-0.25) {\footnotesize $v_{6}$};
\node [draw=black,circle,inner sep=0pt,minimum size=1pt] (v2) at (-0.75,-0.25) {\footnotesize $v_{2}$};
\node [draw=black,circle,inner sep=0pt,minimum size=1pt] (v1) at (-1.5,-0.25) {\footnotesize $v_{1}$};
\node [draw=black,circle,inner sep=0pt,minimum size=1pt] (v7) at (3,-0.25) {\footnotesize $v_{7}$};
\node [draw=black,circle,inner sep=0pt,minimum size=1pt] (v8) at (3.75,-0.25) {\footnotesize $v_{8}$};
\node [draw=black,circle,inner sep=0pt,minimum size=1pt] (v9) at (4.5,-0.25) {\footnotesize $v_{9}$};
\draw  (v1) edge (v2);
\draw  (v2) edge (v3);
\draw  (v3) edge (v4);
\draw  (v4) edge[bend left=45] (v6);
\draw  (v6) edge[bend left=45] (v8);
\draw  (v8) edge[bend left=45] (v5);
\draw  (v5) edge[bend left=45] (v7);
\draw  (v7) edge[bend left=45] (v9);
\node at (-2.5,-0.25) {$H_2:$};

\end{tikzpicture}
    \caption{An optimal pair of edge-disjoint Hamiltonian paths with $cost(H_1, H_2) = 14$ for $n = 9$.}
    \label{fig:OptPathsN9}
\end{figure}

\begin{figure}[H]
    \centering
    \begin{tikzpicture}
\node [draw=black,circle,inner sep=0pt,minimum size=1.25em] (v3) at (0,1.25) {\footnotesize $v_{3}$};
\node [draw=black,circle,inner sep=0pt,minimum size=1.25em] (v4) at (0.75,1.25) {\footnotesize $v_{4}$};
\node [draw=black,circle,inner sep=0pt,minimum size=1.25em] (v5) at (1.5,1.25) {\footnotesize $v_{5}$};
\node [draw=black,circle,inner sep=0pt,minimum size=1.25em] (v6) at (2.25,1.25) {\footnotesize $v_{6}$};
\node [draw=black,circle,inner sep=0pt,minimum size=1.25em] (v2) at (-0.75,1.25) {\footnotesize $v_{2}$};
\node [draw=black,circle,inner sep=0pt,minimum size=1.25em] (v1) at (-1.5,1.25) {\footnotesize $v_{1}$};
\node [draw=black,circle,inner sep=0pt,minimum size=1.25em] (v7) at (3,1.25) {\footnotesize $v_{7}$};
\node [draw=black,circle,inner sep=0pt,minimum size=1.25em] (v8) at (3.75,1.25) {\footnotesize $v_{8}$};
\node [draw=black,circle,inner sep=0pt,minimum size=1.25em] (v9) at (4.5,1.25) {\footnotesize $v_{9}$};
\node [draw=black,circle,inner sep=0pt,minimum size=1em] (v10) at (5.25,1.25) {\footnotesize $v_{10}$};
\draw  (v1) edge[bend left=45] (v3);
\draw  (v3) edge[bend left=45] (v5);
\draw  (v5) edge[bend left=45] (v2);
\draw  (v2) edge[bend left=45] (v4);
\draw  (v4) edge[bend left=45] (v6);
\draw  (v9) edge (v10);
\draw  (v8) edge (v9);
\draw  (v6) edge (v7);
\draw  (v7) edge (v8);
\node at (-2.5,1.25) {$H_1: $};

\node [draw=black,circle,inner sep=0pt,minimum size=1.25em] (v3) at (0,-0.25) {\footnotesize $v_{3}$};
\node [draw=black,circle,inner sep=0pt,minimum size=1.25em] (v4) at (0.75,-0.25) {\footnotesize $v_{4}$};
\node [draw=black,circle,inner sep=0pt,minimum size=1.25em] (v5) at (1.5,-0.25) {\footnotesize $v_{5}$};
\node [draw=black,circle,inner sep=0pt,minimum size=1.25em] (v6) at (2.25,-0.25) {\footnotesize $v_{6}$};
\node [draw=black,circle,inner sep=0pt,minimum size=1.25em] (v2) at (-0.75,-0.25) {\footnotesize $v_{2}$};
\node [draw=black,circle,inner sep=0pt,minimum size=1.25em] (v1) at (-1.5,-0.25) {\footnotesize $v_{1}$};
\node [draw=black,circle,inner sep=0pt,minimum size=1.25em] (v7) at (3,-0.25) {\footnotesize $v_{7}$};
\node [draw=black,circle,inner sep=0pt,minimum size=1.25em] (v8) at (3.75,-0.25) {\footnotesize $v_{8}$};
\node [draw=black,circle,inner sep=0pt,minimum size=1.25em] (v9) at (4.5,-0.25) {\footnotesize $v_{9}$};
\node [draw=black,circle,inner sep=0pt,minimum size=1.25em] (v10) at (5.25,-0.25) {\footnotesize $v_{10}$};
\draw  (v1) edge (v2);
\draw  (v2) edge (v3);
\draw  (v3) edge (v4);
\draw  (v4) edge (v5);
\draw  (v8) edge[bend right=45] (v6);
\draw  (v6) edge[bend right=45] (v9);
\draw  (v8) edge[bend left=45] (v10);
\draw  (v5) edge[bend left=45] (v7);
\draw  (v7) edge[bend left=45] (v9);
\node at (-2.5,-0.25) {$H_2:$};

\end{tikzpicture}
    \caption{An optimal pair of edge-disjoint Hamiltonian paths with $cost(H_1, H_2) = 15$ for $n = 10$.}
    \label{fig:OptPathsN10}
\end{figure}

\begin{figure}[H]
    \centering
    \begin{tikzpicture}
\node [draw=black,circle,inner sep=0pt,minimum size=1.25em] (v3) at (0,1.25) {\footnotesize $v_{3}$};
\node [draw=black,circle,inner sep=0pt,minimum size=1.25em] (v4) at (0.75,1.25) {\footnotesize $v_{4}$};
\node [draw=black,circle,inner sep=0pt,minimum size=1.25em] (v5) at (1.5,1.25) {\footnotesize $v_{5}$};
\node [draw=black,circle,inner sep=0pt,minimum size=1.25em] (v6) at (2.25,1.25) {\footnotesize $v_{6}$};
\node [draw=black,circle,inner sep=0pt,minimum size=1.25em] (v2) at (-0.75,1.25) {\footnotesize $v_{2}$};
\node [draw=black,circle,inner sep=0pt,minimum size=1.25em] (v1) at (-1.5,1.25) {\footnotesize $v_{1}$};
\node [draw=black,circle,inner sep=0pt,minimum size=1.25em] (v7) at (3,1.25) {\footnotesize $v_{7}$};
\node [draw=black,circle,inner sep=0pt,minimum size=1.25em] (v8) at (3.75,1.25) {\footnotesize $v_{8}$};
\node [draw=black,circle,inner sep=0pt,minimum size=1.25em] (v9) at (4.5,1.25) {\footnotesize $v_{9}$};
\node [draw=black,circle,inner sep=0pt,minimum size=1em] (v10) at (5.25,1.25) {\footnotesize $v_{10}$};
\node [draw=black,circle,inner sep=0pt,minimum size=1.25em] (v11) at (6,1.25) {\footnotesize $v_{11}$};
\draw  (v1) edge[bend left=45] (v3);
\draw  (v3) edge (v4);
\draw  (v4) edge[bend right=45] (v2);
\draw  (v2) edge[bend right=45] (v5);
\draw  (v5) edge (v6);
\draw  (v6) edge (v7);
\draw  (v7) edge[bend left=45] (v9);
\draw  (v9) edge (v8);
\draw  (v8) edge[bend left=45] (v10);
\draw  (v10) edge (v11);
\node at (-2.5,1.25) {$H_1: $};

\node [draw=black,circle,inner sep=0pt,minimum size=1.25em] (v3) at (0,-0.25) {\footnotesize $v_{3}$};
\node [draw=black,circle,inner sep=0pt,minimum size=1.25em] (v4) at (0.75,-0.25) {\footnotesize $v_{4}$};
\node [draw=black,circle,inner sep=0pt,minimum size=1.25em] (v5) at (1.5,-0.25) {\footnotesize $v_{5}$};
\node [draw=black,circle,inner sep=0pt,minimum size=1.25em] (v6) at (2.25,-0.25) {\footnotesize $v_{6}$};
\node [draw=black,circle,inner sep=0pt,minimum size=1.25em] (v2) at (-0.75,-0.25) {\footnotesize $v_{2}$};
\node [draw=black,circle,inner sep=0pt,minimum size=1.25em] (v1) at (-1.5,-0.25) {\footnotesize $v_{1}$};
\node [draw=black,circle,inner sep=0pt,minimum size=1.25em] (v7) at (3,-0.25) {\footnotesize $v_{7}$};
\node [draw=black,circle,inner sep=0pt,minimum size=1.25em] (v8) at (3.75,-0.25) {\footnotesize $v_{8}$};
\node [draw=black,circle,inner sep=0pt,minimum size=1.25em] (v9) at (4.5,-0.25) {\footnotesize $v_{9}$};
\node [draw=black,circle,inner sep=0pt,minimum size=1.25em] (v10) at (5.25,-0.25) {\footnotesize $v_{10}$};
\node [draw=black,circle,inner sep=0pt,minimum size=1.25em] (v11) at (6,-0.25) {\footnotesize $v_{11}$};
\draw  (v1) edge (v2);
\draw  (v2) edge (v3);
\draw  (v3) edge[bend left=45] (v5);
\draw  (v5) edge (v4);
\draw  (v4) edge[bend left=45] (v6);
\draw  (v6) edge[bend left=45] (v8);
\draw  (v8) edge (v7);
\draw  (v7) edge[bend right=45] (v10);
\draw  (v10) edge (v9);
\draw  (v9) edge[bend left=45] (v11);
\node at (-2.5,-0.25) {$H_2:$};

\end{tikzpicture}
    \caption{An optimal pair of edge-disjoint Hamiltonian paths with $cost(H_1, H_2) = 16$ 
    for $n = 11$. (Same as Figure \ref{fig:OptPathsN11}.)}
    \label{fig:OptPathsN11_appendix}
\end{figure}

\subsection{Precomputed Hamiltonian \texorpdfstring{($s$, $t$)}{(s, t)}-paths for \texorpdfstring{$12 \leq n \leq 15$}{13 <= n <= 15}} \label{appendix.optPaths2}

Red edges indicate modified or newly added edges compared to the optimal solution of Figure~\ref{fig:OptPathsN11} for $n = 11$, while gray nodes highlight the newly added vertices.

\begin{figure}
 \centering
     \begin{tikzpicture}[scale=1.0, transform shape]
\node [draw=black,circle,inner sep=0pt,minimum size=1.25em] (v3) at (0,1.25) {\footnotesize $v_{3}$};
\node [draw=black,circle,inner sep=0pt,minimum size=1.25em] (v4) at (0.75,1.25) {\footnotesize $v_{4}$};
\node [draw=black,circle,inner sep=0pt,minimum size=1.25em] (v5) at (1.5,1.25) {\footnotesize $v_{5}$};
\node [draw=black,circle,inner sep=0pt,minimum size=1.25em] (v6) at (2.25,1.25) {\footnotesize $v_{6}$};
\node [draw=black,circle,inner sep=0pt,minimum size=1.25em] (v2) at (-0.75,1.25) {\footnotesize $v_{2}$};
\node [draw=black,circle,inner sep=0pt,minimum size=1.25em] (v1) at (-1.5,1.25) {\footnotesize $v_{1}$};
\node [draw=black,circle,inner sep=0pt,minimum size=1.25em] (v7) at (3,1.25) {\footnotesize $v_{7}$};
\node [draw=black,circle,inner sep=0pt,minimum size=1.25em] (v8) at (3.75,1.25) {\footnotesize $v_{8}$};
\node [draw=black,circle,inner sep=0pt,minimum size=1.25em] (v9) at (4.5,1.25) {\footnotesize $v_{9}$};
\node [draw=black,circle,inner sep=0pt,minimum size=1em] (v10) at (5.25,1.25) {\footnotesize $v_{10}$};
\node [draw,circle,inner sep=0pt,minimum size=1.25em] (v11) at (6,1.25) {\footnotesize $v_{11}$};
\node [draw=red,circle,inner sep=0pt,minimum size=1.25em, fill=black!20] (v12) at (6.75,1.25) {\footnotesize $v_{12}$};
\draw  (v1) edge[bend left=45] (v3);
\draw  (v3) edge (v4);
\draw  (v4) edge[bend right=45] (v2);
\draw  (v2) edge[bend right=45] (v5);
\draw  (v5) edge (v6);
\draw  (v6) edge (v7);
\draw  (v7) edge[bend left=45] (v9);
\draw  (v9) edge (v8);
\draw[red]  (v8) edge[bend right=45] (v11);
\draw  (v10) edge (v11);
\draw[red]  (v10) edge[bend left=45] (v12);
\node at (-2.5,1.25) {$H_1: $};

\node [draw=black,circle,inner sep=0pt,minimum size=1.25em] (v3) at (0,-0.25) {\footnotesize $v_{3}$};
\node [draw=black,circle,inner sep=0pt,minimum size=1.25em] (v4) at (0.75,-0.25) {\footnotesize $v_{4}$};
\node [draw=black,circle,inner sep=0pt,minimum size=1.25em] (v5) at (1.5,-0.25) {\footnotesize $v_{5}$};
\node [draw=black,circle,inner sep=0pt,minimum size=1.25em] (v6) at (2.25,-0.25) {\footnotesize $v_{6}$};
\node [draw=black,circle,inner sep=0pt,minimum size=1.25em] (v2) at (-0.75,-0.25) {\footnotesize $v_{2}$};
\node [draw=black,circle,inner sep=0pt,minimum size=1.25em] (v1) at (-1.5,-0.25) {\footnotesize $v_{1}$};
\node [draw=black,circle,inner sep=0pt,minimum size=1.25em] (v7) at (3,-0.25) {\footnotesize $v_{7}$};
\node [draw=black,circle,inner sep=0pt,minimum size=1.25em] (v8) at (3.75,-0.25) {\footnotesize $v_{8}$};
\node [draw=black,circle,inner sep=0pt,minimum size=1.25em] (v9) at (4.5,-0.25) {\footnotesize $v_{9}$};
\node [draw=black,circle,inner sep=0pt,minimum size=1.25em] (v10) at (5.25,-0.25) {\footnotesize $v_{10}$};
\node [draw=black,circle,inner sep=0pt,minimum size=1.25em] (v11) at (6,-0.25) {\footnotesize $v_{11}$};
\node [draw=red,circle,inner sep=0pt,minimum size=1.25em,fill=black!20] (v12) at (6.75,-0.25) {\footnotesize $v_{12}$};
\draw  (v1) edge (v2);
\draw  (v2) edge (v3);
\draw  (v3) edge[bend left=45] (v5);
\draw  (v5) edge (v4);
\draw  (v4) edge[bend left=45] (v6);
\draw  (v6) edge[bend left=45] (v8);
\draw  (v8) edge (v7);
\draw  (v7) edge[bend right=45] (v10);
\draw  (v10) edge (v9);
\draw  (v9) edge[bend left=45] (v11);
\draw[red]  (v11) edge (v12);
\node at (-2.5,-0.25) {$H_2:$};

\end{tikzpicture}
    \caption{Pair of edge-disjoint Hamiltonian paths for $n=12$ with $cost(H_1, H_2) = 19$.}
    \label{fig:algoAddNodes12_appendix}
 \end{figure}
 
 \begin{figure}
 \centering
     \begin{tikzpicture}[scale=1.0, transform shape]
\node [draw=black,circle,inner sep=0pt,minimum size=1.25em] (v3) at (0,1.25) {\footnotesize $v_{3}$};
\node [draw=black,circle,inner sep=0pt,minimum size=1.25em] (v4) at (0.75,1.25) {\footnotesize $v_{4}$};
\node [draw=black,circle,inner sep=0pt,minimum size=1.25em] (v5) at (1.5,1.25) {\footnotesize $v_{5}$};
\node [draw=black,circle,inner sep=0pt,minimum size=1.25em] (v6) at (2.25,1.25) {\footnotesize $v_{6}$};
\node [draw=black,circle,inner sep=0pt,minimum size=1.25em] (v2) at (-0.75,1.25) {\footnotesize $v_{2}$};
\node [draw=black,circle,inner sep=0pt,minimum size=1.25em] (v1) at (-1.5,1.25) {\footnotesize $v_{1}$};
\node [draw=black,circle,inner sep=0pt,minimum size=1.25em] (v7) at (3,1.25) {\footnotesize $v_{7}$};
\node [draw=black,circle,inner sep=0pt,minimum size=1.25em] (v8) at (3.75,1.25) {\footnotesize $v_{8}$};
\node [draw=black,circle,inner sep=0pt,minimum size=1.25em] (v9) at (4.5,1.25) {\footnotesize $v_{9}$};
\node [draw=black,circle,inner sep=0pt,minimum size=1em] (v10) at (5.25,1.25) {\footnotesize $v_{10}$};
\node [draw,circle,inner sep=0pt,minimum size=1.25em] (v11) at (6,1.25) {\footnotesize $v_{11}$};
\node [draw=red,circle,inner sep=0pt,minimum size=1.25em,fill=black!20] (v12) at (6.75,1.25) {\footnotesize $v_{12}$};
\node [draw=red,circle,inner sep=0pt,minimum size=1.25em,fill=black!20] (v13) at (7.5,1.25) {\footnotesize $v_{13}$};
\draw  (v1) edge[bend left=45] (v3);
\draw  (v3) edge (v4);
\draw  (v4) edge[bend right=45] (v2);
\draw  (v2) edge[bend right=45] (v5);
\draw  (v5) edge (v6);
\draw  (v6) edge (v7);
\draw  (v7) edge[bend left=45] (v9);
\draw  (v9) edge (v8);
\draw[red]  (v8) edge[bend right=45] (v11);
\draw  (v10) edge (v11);
\draw[red]  (v10) edge[bend left=45] (v12);
\draw[red]  (v12) edge (v13);
\node at (-2.5,1.25) {$H_1: $};

\node [draw=black,circle,inner sep=0pt,minimum size=1.25em] (v3) at (0,-0.25) {\footnotesize $v_{3}$};
\node [draw=black,circle,inner sep=0pt,minimum size=1.25em] (v4) at (0.75,-0.25) {\footnotesize $v_{4}$};
\node [draw=black,circle,inner sep=0pt,minimum size=1.25em] (v5) at (1.5,-0.25) {\footnotesize $v_{5}$};
\node [draw=black,circle,inner sep=0pt,minimum size=1.25em] (v6) at (2.25,-0.25) {\footnotesize $v_{6}$};
\node [draw=black,circle,inner sep=0pt,minimum size=1.25em] (v2) at (-0.75,-0.25) {\footnotesize $v_{2}$};
\node [draw=black,circle,inner sep=0pt,minimum size=1.25em] (v1) at (-1.5,-0.25) {\footnotesize $v_{1}$};
\node [draw=black,circle,inner sep=0pt,minimum size=1.25em] (v7) at (3,-0.25) {\footnotesize $v_{7}$};
\node [draw=black,circle,inner sep=0pt,minimum size=1.25em] (v8) at (3.75,-0.25) {\footnotesize $v_{8}$};
\node [draw=black,circle,inner sep=0pt,minimum size=1.25em] (v9) at (4.5,-0.25) {\footnotesize $v_{9}$};
\node [draw=black,circle,inner sep=0pt,minimum size=1.25em] (v10) at (5.25,-0.25) {\footnotesize $v_{10}$};
\node [draw=black,circle,inner sep=0pt,minimum size=1.25em] (v11) at (6,-0.25) {\footnotesize $v_{11}$};
\node [draw=red,circle,inner sep=0pt,minimum size=1.25em,fill=black!20] (v12) at (6.75,-0.25) {\footnotesize $v_{12}$};
\node [draw=red,circle,inner sep=0pt,minimum size=1.25em,fill=black!20] (v13) at (7.5,-0.25) {\footnotesize $v_{13}$};
\draw  (v1) edge (v2);
\draw  (v2) edge (v3);
\draw  (v3) edge[bend left=45] (v5);
\draw  (v5) edge (v4);
\draw  (v4) edge[bend left=45] (v6);
\draw  (v6) edge[bend left=45] (v8);
\draw  (v8) edge (v7);
\draw  (v7) edge[bend right=45] (v10);
\draw  (v10) edge (v9);
\draw[red]  (v9) edge[bend left=45] (v12);
\draw[red]  (v11) edge (v12);
\draw[red]  (v11) edge[bend right=45] (v13);
\node at (-2.5,-0.25) {$H_2:$};

\end{tikzpicture}
    \caption{Pair of edge-disjoint Hamiltonian paths for $n=13$ with $cost(H_1, H_2) = 20$.}
    \label{fig:algoAddNode13}
 \end{figure}

 \begin{figure}
 \centering
     \begin{tikzpicture}[scale=1.0, transform shape]
\node [draw=black,circle,inner sep=0pt,minimum size=1.25em] (v3) at (0,1.25) {\footnotesize $v_{3}$};
\node [draw=black,circle,inner sep=0pt,minimum size=1.25em] (v4) at (0.75,1.25) {\footnotesize $v_{4}$};
\node [draw=black,circle,inner sep=0pt,minimum size=1.25em] (v5) at (1.5,1.25) {\footnotesize $v_{5}$};
\node [draw=black,circle,inner sep=0pt,minimum size=1.25em] (v6) at (2.25,1.25) {\footnotesize $v_{6}$};
\node [draw=black,circle,inner sep=0pt,minimum size=1.25em] (v2) at (-0.75,1.25) {\footnotesize $v_{2}$};
\node [draw=black,circle,inner sep=0pt,minimum size=1.25em] (v1) at (-1.5,1.25) {\footnotesize $v_{1}$};
\node [draw=black,circle,inner sep=0pt,minimum size=1.25em] (v7) at (3,1.25) {\footnotesize $v_{7}$};
\node [draw=black,circle,inner sep=0pt,minimum size=1.25em] (v8) at (3.75,1.25) {\footnotesize $v_{8}$};
\node [draw=black,circle,inner sep=0pt,minimum size=1.25em] (v9) at (4.5,1.25) {\footnotesize $v_{9}$};
\node [draw=black,circle,inner sep=0pt,minimum size=1em] (v10) at (5.25,1.25) {\footnotesize $v_{10}$};
\node [draw,circle,inner sep=0pt,minimum size=1.25em] (v11) at (6,1.25) {\footnotesize $v_{11}$};
\node [draw=red,circle,inner sep=0pt,minimum size=1.25em,fill=black!20] (v12) at (6.75,1.25) {\footnotesize $v_{12}$};
\node [draw=red,circle,inner sep=0pt,minimum size=1.25em,fill=black!20] (v13) at (7.5,1.25) {\footnotesize $v_{13}$};
\node [draw=red,circle,inner sep=0pt,minimum size=1.25em,fill=black!20] (v14) at (8.25,1.25) {\footnotesize $v_{14}$};
\draw  (v1) edge[bend left=45] (v3);
\draw  (v3) edge (v4);
\draw  (v4) edge[bend right=45] (v2);
\draw  (v2) edge[bend right=45] (v5);
\draw  (v5) edge (v6);
\draw  (v6) edge (v7);
\draw  (v7) edge[bend left=45] (v9);
\draw  (v9) edge (v8);
\draw[red]  (v8) edge[bend right=45] (v11);
\draw  (v10) edge (v11);
\draw[red]  (v10) edge[bend left=45] (v13);
\draw[red]  (v12) edge (v13);
\draw[red]  (v12) edge[bend right=45] (v14);
\node at (-2.5,1.25) {$H_1: $};

\node [draw=black,circle,inner sep=0pt,minimum size=1.25em] (v3) at (0,-0.25) {\footnotesize $v_{3}$};
\node [draw=black,circle,inner sep=0pt,minimum size=1.25em] (v4) at (0.75,-0.25) {\footnotesize $v_{4}$};
\node [draw=black,circle,inner sep=0pt,minimum size=1.25em] (v5) at (1.5,-0.25) {\footnotesize $v_{5}$};
\node [draw=black,circle,inner sep=0pt,minimum size=1.25em] (v6) at (2.25,-0.25) {\footnotesize $v_{6}$};
\node [draw=black,circle,inner sep=0pt,minimum size=1.25em] (v2) at (-0.75,-0.25) {\footnotesize $v_{2}$};
\node [draw=black,circle,inner sep=0pt,minimum size=1.25em] (v1) at (-1.5,-0.25) {\footnotesize $v_{1}$};
\node [draw=black,circle,inner sep=0pt,minimum size=1.25em] (v7) at (3,-0.25) {\footnotesize $v_{7}$};
\node [draw=black,circle,inner sep=0pt,minimum size=1.25em] (v8) at (3.75,-0.25) {\footnotesize $v_{8}$};
\node [draw=black,circle,inner sep=0pt,minimum size=1.25em] (v9) at (4.5,-0.25) {\footnotesize $v_{9}$};
\node [draw=black,circle,inner sep=0pt,minimum size=1.25em] (v10) at (5.25,-0.25) {\footnotesize $v_{10}$};
\node [draw=black,circle,inner sep=0pt,minimum size=1.25em] (v11) at (6,-0.25) {\footnotesize $v_{11}$};
\node [draw=red,circle,inner sep=0pt,minimum size=1.25em,fill=black!20] (v12) at (6.75,-0.25) {\footnotesize $v_{12}$};
\node [draw=red,circle,inner sep=0pt,minimum size=1.25em,fill=black!20] (v13) at (7.5,-0.25) {\footnotesize $v_{13}$};
\node [draw=red,circle,inner sep=0pt,minimum size=1.25em,fill=black!20] (v14) at (8.25,-0.25) {\footnotesize $v_{14}$};
\draw  (v1) edge (v2);
\draw  (v2) edge (v3);
\draw  (v3) edge[bend left=45] (v5);
\draw  (v5) edge (v4);
\draw  (v4) edge[bend left=45] (v6);
\draw  (v6) edge[bend left=45] (v8);
\draw  (v8) edge (v7);
\draw  (v7) edge[bend right=45] (v10);
\draw  (v10) edge (v9);
\draw[red]  (v9) edge[bend left=45] (v12);
\draw[red]  (v11) edge (v12);
\draw[red]  (v11) edge[bend right=45] (v13);
\draw[red]  (v13) edge (v14);
\node at (-2.5,-0.25) {$H_2:$};
\end{tikzpicture}
    \caption{Pair of edge-disjoint Hamiltonian paths for $n=14$ with $cost(H_1, H_2) = 23$.}    \label{fig:algoAddNode14}
 \end{figure}

 \begin{figure}
 \centering
     \begin{tikzpicture}[scale=1.0, transform shape]
\node [draw=black,circle,inner sep=0pt,minimum size=1.25em] (v3) at (0,1.25) {\footnotesize $v_{3}$};
\node [draw=black,circle,inner sep=0pt,minimum size=1.25em] (v4) at (0.75,1.25) {\footnotesize $v_{4}$};
\node [draw=black,circle,inner sep=0pt,minimum size=1.25em] (v5) at (1.5,1.25) {\footnotesize $v_{5}$};
\node [draw=black,circle,inner sep=0pt,minimum size=1.25em] (v6) at (2.25,1.25) {\footnotesize $v_{6}$};
\node [draw=black,circle,inner sep=0pt,minimum size=1.25em] (v2) at (-0.75,1.25) {\footnotesize $v_{2}$};
\node [draw=black,circle,inner sep=0pt,minimum size=1.25em] (v1) at (-1.5,1.25) {\footnotesize $v_{1}$};
\node [draw=black,circle,inner sep=0pt,minimum size=1.25em] (v7) at (3,1.25) {\footnotesize $v_{7}$};
\node [draw=black,circle,inner sep=0pt,minimum size=1.25em] (v8) at (3.75,1.25) {\footnotesize $v_{8}$};
\node [draw=black,circle,inner sep=0pt,minimum size=1.25em] (v9) at (4.5,1.25) {\footnotesize $v_{9}$};
\node [draw=black,circle,inner sep=0pt,minimum size=1em] (v10) at (5.25,1.25) {\footnotesize $v_{10}$};
\node [draw,circle,inner sep=0pt,minimum size=1.25em] (v11) at (6,1.25) {\footnotesize $v_{11}$};
\node [draw=red,circle,inner sep=0pt,minimum size=1.25em,fill=black!20] (v12) at (6.75,1.25) {\footnotesize $v_{12}$};
\node [draw=red,circle,inner sep=0pt,minimum size=1.25em,fill=black!20] (v13) at (7.5,1.25) {\footnotesize $v_{13}$};
\node [draw=red,circle,inner sep=0pt,minimum size=1.25em,fill=black!20] (v14) at (8.25,1.25) {\footnotesize $v_{14}$};
\node [draw=red,circle,inner sep=0pt,minimum size=1.25em,fill=black!20] (v15) at (9,1.25) {\footnotesize $v_{15}$};
\draw  (v1) edge[bend left=45] (v3);
\draw  (v3) edge (v4);
\draw  (v4) edge[bend right=45] (v2);
\draw  (v2) edge[bend right=45] (v5);
\draw  (v5) edge (v6);
\draw  (v6) edge (v7);
\draw  (v7) edge[bend left=45] (v9);
\draw  (v9) edge (v8);
\draw[red]  (v8) edge[bend right=45] (v11);
\draw  (v10) edge (v11);
\draw[red]  (v10) edge[bend left=45] (v13);
\draw[red]  (v12) edge (v13);
\draw[red]  (v12) edge[bend right=45] (v14);
\draw[red]  (v14) edge (v15);
\node at (-2.5,1.25) {$H_1: $};

\node [draw=black,circle,inner sep=0pt,minimum size=1.25em] (v3) at (0,-0.25) {\footnotesize $v_{3}$};
\node [draw=black,circle,inner sep=0pt,minimum size=1.25em] (v4) at (0.75,-0.25) {\footnotesize $v_{4}$};
\node [draw=black,circle,inner sep=0pt,minimum size=1.25em] (v5) at (1.5,-0.25) {\footnotesize $v_{5}$};
\node [draw=black,circle,inner sep=0pt,minimum size=1.25em] (v6) at (2.25,-0.25) {\footnotesize $v_{6}$};
\node [draw=black,circle,inner sep=0pt,minimum size=1.25em] (v2) at (-0.75,-0.25) {\footnotesize $v_{2}$};
\node [draw=black,circle,inner sep=0pt,minimum size=1.25em] (v1) at (-1.5,-0.25) {\footnotesize $v_{1}$};
\node [draw=black,circle,inner sep=0pt,minimum size=1.25em] (v7) at (3,-0.25) {\footnotesize $v_{7}$};
\node [draw=black,circle,inner sep=0pt,minimum size=1.25em] (v8) at (3.75,-0.25) {\footnotesize $v_{8}$};
\node [draw=black,circle,inner sep=0pt,minimum size=1.25em] (v9) at (4.5,-0.25) {\footnotesize $v_{9}$};
\node [draw=black,circle,inner sep=0pt,minimum size=1.25em] (v10) at (5.25,-0.25) {\footnotesize $v_{10}$};
\node [draw=black,circle,inner sep=0pt,minimum size=1.25em] (v11) at (6,-0.25) {\footnotesize $v_{11}$};
\node [draw=red,circle,inner sep=0pt,minimum size=1.25em,fill=black!20] (v12) at (6.75,-0.25) {\footnotesize $v_{12}$};
\node [draw=red,circle,inner sep=0pt,minimum size=1.25em,fill=black!20] (v13) at (7.5,-0.25) {\footnotesize $v_{13}$};
\node [draw=red,circle,inner sep=0pt,minimum size=1.25em,fill=black!20] (v14) at (8.25,-0.25) {\footnotesize $v_{14}$};
\node [draw=red,circle,inner sep=0pt,minimum size=1.25em,fill=black!20] (v15) at (9,-0.25) {\footnotesize $v_{15}$};
\draw  (v1) edge (v2);
\draw  (v2) edge (v3);
\draw  (v3) edge[bend left=45] (v5);
\draw  (v5) edge (v4);
\draw  (v4) edge[bend left=45] (v6);
\draw  (v6) edge[bend left=45] (v8);
\draw  (v8) edge (v7);
\draw  (v7) edge[bend right=45] (v10);
\draw  (v10) edge (v9);
\draw[red]  (v9) edge[bend left=45] (v12);
\draw[red]  (v11) edge (v12);
\draw[red]  (v11) edge[bend right=45] (v14);
\draw[red]  (v13) edge (v14);
\draw[red]  (v13) edge[bend left=45] (v15);
\node at (-2.5,-0.25) {$H_2:$};
\end{tikzpicture}
    \caption{Pair of edge-disjoint Hamiltonian paths for $n=15$ with $cost(H_1, H_2) = 24$.}
    \label{fig:algoAddNode15}
 \end{figure}
 %\caption{Pairs of edge-disjoint Hamiltonian paths for $n \in \{12, 13, 14, 15\}$.}
 %\label{fig:algoAddNodes}

\subsection{Pairs of tours generated by Algorithm \ref{algo.approxNaive}} \label{appendix.toursAlgo}
\begin{figure}[H]
  \begin{minipage}[t]{0.45\textwidth} %trying to force figs apart
    \begin{tikzpicture}[scale=0.8]
\node [draw=black,circle,inner sep=0pt,minimum size=1pt] (v3) at (0.75,0.5) {\footnotesize $v_{3}$};
\node [draw=black,circle,inner sep=0pt,minimum size=1pt] (v4) at (1.5,0.5) {\footnotesize $v_{4}$};
\node [draw=black,circle,inner sep=0pt,minimum size=1pt] (v5) at (2.25,0.5) {\footnotesize $v_{5}$};
\node [draw=black,circle,inner sep=0pt,minimum size=1pt] (v2) at (0,0.5) {\footnotesize $v_{2}$};
\node [draw=black,circle,inner sep=0pt,minimum size=1pt] (v1) at (-0.75,0.5) {\footnotesize $v_{1}$};
\node [draw=black,circle,inner sep=0pt,minimum size=1pt] (v7) at (3,0.5) {\footnotesize $v_{6}$};
\node [draw=black,circle,inner sep=0pt,minimum size=1pt] (v8) at (3.75,0.5) {\footnotesize $v_{7}$};

\draw  (v1) edge[bend right=45] (v3);
\draw  (v3) edge[bend right=45] (v5);
\draw  (v5) edge[bend right=45] (v8);
\draw  (v2) edge[bend left=45] (v4);
\draw  (v4) edge[bend left=45] (v7);
\node (t1) at (-2.5,0.5) {$T_2: $};

\node (aux2T1) at (-1.25,0.75) {};
\node (aux5T1) at (4.375,0.875) {};
\node (aux7T1) at (4.375,0.125) {};
\node (aux9T1) at (-1.25,0) {};
\draw (v2) edge[out=130,  in=30] (aux2T1);
\draw (v8) edge[out=50,  in=185] (aux5T1);
\draw (v7) edge[out=320,  in=200] (aux7T1);
\draw (v1) edge[out=230,  in=40] (aux9T1);

\node [draw=black,circle,inner sep=0pt,minimum size=1pt] (v3) at (0.75,2) {\footnotesize $v_{3}$};
\node [draw=black,circle,inner sep=0pt,minimum size=1pt] (v4) at (1.5,2) {\footnotesize $v_{4}$};
\node [draw=black,circle,inner sep=0pt,minimum size=1pt] (v5) at (2.25,2) {\footnotesize $v_{5}$};
\node [draw=black,circle,inner sep=0pt,minimum size=1pt] (v6) at (3,2) {\footnotesize $v_{6}$};
\node [draw=black,circle,inner sep=0pt,minimum size=1pt] (v2) at (0,2) {\footnotesize $v_{2}$};
\node [draw=black,circle,inner sep=0pt,minimum size=1pt] (v1) at (-0.75,2) {\footnotesize $v_{1}$};
\node [draw=black,circle,inner sep=0pt,minimum size=1pt] (v7) at (3.75,2) {\footnotesize $v_{7}$};

\node (aux1T2) at (-1.25,2) {};
\node (aux2T2) at (4.25,2) {};

\draw (v1) edge (v2);
\draw (v2) edge (v3);
\draw (v3) edge (v4);
\draw (v4) edge (v5);
\draw (v5) edge (v6);
\draw (v6) edge (v7);
\draw (aux1T2) edge (v1);
\draw (aux2T2) edge (v7);
\node at (-2.5,2) {$T_1:$};
\end{tikzpicture}
    \caption{Pair of tours $(T_1, T_2)$ generated by Algorithm \ref{algo.approxNaive} for $n = 7$ (odd case).}
    \label{fig:AlgoToursN7}
  \end{minipage}%
  \hfill
  \begin{minipage}[t]{0.45\textwidth}
    \begin{tikzpicture}[scale=0.8]
\node [draw=black,circle,inner sep=0pt,minimum size=1pt] (v3) at (0.75,1.25) {\footnotesize $v_{3}$};
\node [draw=black,circle,inner sep=0pt,minimum size=1pt] (v4) at (1.5,1.25) {\footnotesize $v_{4}$};
\node [draw=black,circle,inner sep=0pt,minimum size=1pt] (v5) at (2.25,1.25) {\footnotesize $v_{5}$};
\node [draw=black,circle,inner sep=0pt,minimum size=1pt] (v6) at (3,1.25) {\footnotesize $v_{6}$};
\node [draw=black,circle,inner sep=0pt,minimum size=1pt] (v2) at (0,1.25) {\footnotesize $v_{2}$};
\node [draw=black,circle,inner sep=0pt,minimum size=1pt] (v1) at (-0.75,1.25) {\footnotesize $v_{1}$};
\node [draw=black,circle,inner sep=0pt,minimum size=1pt] (v7) at (3.75,1.25) {\footnotesize $v_{7}$};
\node [draw=black,circle,inner sep=0pt,minimum size=1pt] (v8) at (4.5,1.25) {\footnotesize $v_{8}$};
\draw  (v3) edge (v4);
\draw  (v2) edge (v3);
\draw  (v4) edge (v5);
\draw  (v5) edge (v6);
\draw  (v6) edge (v7);
\node (t1) at (-2.5,1.25) {$T_1: $};
\node (aux1T1) at (-1.25,1.25) {};
\node (aux2T1) at (-1.25,1.5) {};
\node (aux4T1) at (5.125,1.25) {};
\node (aux5T1) at (5.125,1.625) {};
\node (aux7T1) at (5.125,0.875) {};
\node (aux9T1) at (-1.25,0.75) {};

\draw (v1) edge (aux1T1);
\draw (v2) edge[out=130,  in=30] (aux2T1);

\draw (v8) edge (aux4T1);
\draw (v8) edge[out=50,  in=185] (aux5T1);

\draw (v7) edge[out=320,  in=200] (aux7T1);

\draw (v1) edge[out=230,  in=40] (aux9T1);

\node [draw=black,circle,inner sep=0pt,minimum size=1pt] (v3) at (0.75,-0.25) {\footnotesize $v_{3}$};
\node [draw=black,circle,inner sep=0pt,minimum size=1pt] (v4) at (1.5,-0.25) {\footnotesize $v_{4}$};
\node [draw=black,circle,inner sep=0pt,minimum size=1pt] (v5) at (2.25,-0.25) {\footnotesize $v_{5}$};
\node [draw=black,circle,inner sep=0pt,minimum size=1pt] (v6) at (3,-0.25) {\footnotesize $v_{6}$};
\node [draw=black,circle,inner sep=0pt,minimum size=1pt] (v2) at (0,-0.25) {\footnotesize $v_{2}$};
\node [draw=black,circle,inner sep=0pt,minimum size=1pt] (v1) at (-0.75,-0.25) {\footnotesize $v_{1}$};
\node [draw=black,circle,inner sep=0pt,minimum size=1pt] (v7) at (3.75,-0.25) {\footnotesize $v_{7}$};
\node [draw=black,circle,inner sep=0pt,minimum size=1pt] (v8) at (4.5,-0.25) {\footnotesize $v_{8}$};
\draw  (v1) edge[bend left=45] (v3);
\draw (v1) edge (v2);
\draw  (v2) edge[bend right=45] (v4);
\draw  (v3) edge[bend left=45] (v5);
\draw  (v5) edge[bend left=45] (v7);
\draw (v7) edge (v8);
\draw  (v4) edge[bend right=45] (v6);
\draw  (v6) edge[bend right=45] (v8);
\node at (-2.5,-0.25) {$T_2:$};
\end{tikzpicture}
    \caption{Pair of tours $(T_1, T_2)$ generated by Algorithm \ref{algo.approxNaive} for $n = 8$ (even case).}
    \label{fig:AlgoToursN8}
  \end{minipage}
\end{figure}

\section{Proofs of Section \ref{sec.2}} \label{appendix.sec3}
\subsection{Proof of Lemma \ref{lemma:3}} \label{appendix.sec3.Lemma.6}
\begin{proof}
    For the sake of contradiction, suppose that there is a tour $T$ such that there are segments of $C_n$ with even 
    depth and segments with odd depth w.r.t. $T$. Then, there must exist two segments $s_1, s_2 \in C_n$, one with even depth and another with odd depth, that are adjacent. W.l.o.g., let $s_1 = (u, v)$ and $s_2 = (v, w)$ have even and odd depth, respectively. Consider the shared vertex $v$. Since $T$ is a cycle, $v$ must be an endpoint of exactly two edges $e, f \in T$. There are two different ways in which $e$ and $f$ contribute towards the depths of segments $s_1$ and $s_2$: (1) $e$ and $f$ cover exclusively one of $s_1$ and $s_2$, or (2) $e$ and $f$ both cover segment $s_1$, or both cover segment $s_2$. We now prove that both cases lead to a contradiction. 

    Consider case (1). W.l.o.g., assume that the edge $e$ covers segment $s_1$ and edge $f$ covers segment $s_2$. Since $v$ is an endpoint of both $e$ and $f$, it cannot be connected to any more edges in tour $T$ and thus, any other edge of $T$ that covers one of $s_1$ or $s_2$, must also cover the other. Therefore, we get that $\mu_T(s_1) = \mu_T(s_2)$. This is a contradiction, since we assumed that $\mu_T(s_1)$ and $\mu_T(s_2)$ were even and odd, respectively. 

    Consider case (2) and assume w.l.o.g. that $e$ and $f$ cover segment $s_1$. Similar to the previous case, we know that any edge $g \in T$ where $g \neq e, f$, that covers $s_1$ must also cover $s_2$ and vice versa, as $v$ already has degree 2. Since there are $\mu_T(s_1)$ edges of $T$ that cover segment $s_1$, we have that $\mu_T(s_2) = 2 + \mu_T(s_1)$. Then, since we assumed $\mu_c(s_1)$ was even, $\mu_T(s_2)$ would be even, which is a contradiction. \hfill$\qed$
 \end{proof}

\subsection{Proof of Lemma \ref{lemma:4}} \label{appendix.sec3.Lemma.7}

We begin by stating a useful observation for odd-depth tours along the same vein as Observation \ref{claim:1}.

\begin{observation} \label{claim:5}
Consider the uniform metric graph $\uniformCircleGeomGraph$ for some $n \leq 8$. Then (i) there is no pair of edge-disjoint tours when $n \leq 4$, and (ii) there is no pair of odd-depth edge-disjoint tours with total cost less than $\frac{16}{5}n$ when $n \in \{5, 6, 7, 8\}$. 
\end{observation}
\begin{proof}
    Note that $\frac{16}{5}n = 3n + \frac{n}{5} < 3n+2$ since $n \leq 8$. Hence, any pair $(T_1, T_2)$ of disjoint tours of total cost less than $\frac{16}{5}n$ must be of one of the following three types: 
    \begin{itemize}
        \item[$\bullet$] $T_1 \cup T_2$ contains $n$ edges of length 1 and $n$ edges of length 2. 
        \item[$\bullet$] $T_1 \cup T_2$ contains $n-1$ edges of length 1 and $n+1$ edges of length 2. 
        \item[$\bullet$] $T_1 \cup T_2$ contains $n$ edges of length 1, $n-1$ edges of length 2, and a single edge of length 3. 
    \end{itemize}
    We have verified by computer that no such pairs of tours exist for $5 \leq n \leq 8$, and that no pair of disjoint tours exists when $n \leq 4$.\footnote{Code available at \hyperlink{https://github.com/andoresu47/Disjoint-Tours-and-the-PoD}{https://github.com/andoresu47/Disjoint-Tours-and-the-PoD}.}\hfill$\blacksquare$
\end{proof}

Next, notice that the extended notions of \textit{cut-point} and \textit{$\ell$-piece} from Hamiltonian paths to tours imply direct analogs of Claims \ref{claim:6} and \ref{claim:7} for odd-depth tours. To avoid redundancy, we omit the explicit statements of these tour-specific claims and instead use them to establish the following result, which mirrors Claim~\ref{lemma.1} from the path setting.

\begin{claim} \label{lemma.1.1}
    Any pair of disjoint tours on the uniform metric graph $\uniformCircleGeomGraph$ has a cut-point if there is a 3-piece with a total depth of less than 10. 
\end{claim}
\begin{proof}
    The proof of Claim~\ref{lemma.1} carries over directly by substituting Claims~\ref{claim:6} and~\ref{claim:7} with their analogs for odd-depth tours, and replacing Observation~\ref{fact:2} with the definition of an odd-depth tour. With these adjustments, the argument proceeds unchanged.\hfill$\blacksquare$
\end{proof}

The proof of Lemma \ref{lemma:4} follows the same structure as that of Lemma~\ref{lemma.2}, but uses Observation~\ref{claim:5} instead of Observation~\ref{claim:1}, and Claim~\ref{lemma.1.1} in place of Claim~\ref{lemma.1}. \hfill$\qed$

\subsection{Detailed description of Algorithm \texttt{Tours}} \label{appendix.algoToursDescription}
\begin{algorithm}[H]
\caption[Caption for LOF]{Algorithm \texttt{Tours}} \label{algo.approxTours}
\vspace{.5em}
\textbf{Input:} Uniform metric graph $G = (V, E)$ in \realCircle{}. \\
\textbf{Output:} A pair $(T_1, T_2)$ of disjoint tours. 
\vspace{.5em}
\begin{algorithmic}[1]
\newcommand\NoDo{\renewcommand\algorithmicdo{}}
\newcommand\ReDo{\renewcommand\algorithmicdo{\textbf{do}}}
\algrenewcommand\algorithmiccomment[1]{\hfill {\color{blue} \(\triangleright\) #1}}

\algdef{SE}[SUBALG]{Indent}{EndIndent}{}{\algorithmicend\ }%
\algtext*{Indent}
\algtext*{EndIndent}

\State Let $G'$ be a uniform metric graph in \realLine{} on the same set of vertices as $G$ plus an additional vertex $t$. 
\State Let $s$ and $t$ be the endpoints of the path graph of size $m = n+1$ formed by the vertex set and the segments of $G'$. 
\State With Algorithm \ref{algo.approxPaths}, find a pair $(H_1, H_2)$ of disjoint Hamiltonian ($s$, $t$)-paths in $G'$. 
\State Construct tour $T_1$ by contracting the vertex $t$ of $H_1$ into vertex $s$.
\State Construct tour $T_2$ by contracting the vertex $t$ of $H_2$ into vertex $s$.
\State Return $(T_1, T_2)$.
\end{algorithmic}
\vspace{.5em}
\end{algorithm}

\section{Proofs of Section \ref{sec.3}} \label{appendix.sec4}
\subsection{Proof of Claim \ref{claim:2}} \label{appendix.sec4.Claim.1}
\begin{proof}
We prove this only for segment $(v_2, v_3)$ since the argument applies symmetrically for $(v_{n-2}, v_{n-1})$. 
By Observation \ref{claim:0}, the first segment $s_1 = (v_1, v_2)$ of $P_n$ must have depths $\mu_{H_1}(s_1) = \mu_{H_2}(s_1) = 1$. This implies that at least one of the two paths $H_1$ or $H_2$ covers vertex $v_2$. But $v_2$ must have a degree of 2 in both paths. Hence, the second segment $s_2 = (v_2, v_3)$ has a depth $\mu_{H_1}(s_2) > 1$ or $\mu_{H_2}(s_2) > 1$. Moreover, by part (i) of Observation \ref{claim:1}, we have $\mu_{H_1}(s_2) \geq 3$ or $\mu_{H_2}(s_2) \geq 3$. \hfill$\qed$
\end{proof}

\subsection{Proof of Lemma \ref{lemma:poDofTSP2}} \label{appendix.proofOfLemmaPoDTSP2}

To prove the lemma, we first establish five structural properties concerning the depth of segments with respect to a pair of disjoint tours in $\mathbb{S}^1$. Let $G$ be a (not necessarily uniform) geometric graph in $\mathbb{S}^1$. Consider an arbitrary sequence $(s_1, s_2, s_3, s_4, s_5)$ of five consecutive segments of $G$. We begin with two claims concerning even-depth tours. 

\begin{claim} \label{observation.new1}
    Let $T$ be an even-depth tour in $G$. If $\mu_{T}(s_3) = 0$ and $s_2$ (resp. $s_4$) is not an edge of $T$, then $\mu_T(s_1)$ (resp. $\mu_T(s_5)$) is at least 4. 
\end{claim}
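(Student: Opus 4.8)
The plan is to argue by a parity/degree count exactly as in Claims \ref{claim:6} and \ref{claim:7}. Fix the tour $T$ and consider the vertices $v_2, v_3, v_4$ that are the internal endpoints of the segments $s_1=(v_1,v_2)$, $s_2=(v_2,v_3)$, $s_3=(v_3,v_4)$, $s_4=(v_4,v_5)$, $s_5=(v_5,v_6)$. We prove the statement for $s_1$; the statement for $s_5$ is symmetric. Since $\mu_T(s_3)=0$, no edge of $T$ crosses from $\{\dots,v_3\}$ to $\{v_4,\dots\}$; in particular $v_3$'s two tour-edges both have their other endpoint in $\{v_1,v_2,v_3\}$'s side, i.e. neither goes to the right of $v_3$.

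The key step is to pin down how $v_3$ is attached. Because $s_2=(v_2,v_3)$ is assumed \emph{not} to be an edge of $T$, and $v_3$ cannot be joined to anything at or to the right of $v_4$ (that would make $\mu_T(s_3)\geq 1$), both tour-edges at $v_3$ must go strictly to the left of $v_2$; that is, each of the two edges of $T$ incident to $v_3$ covers the segment $s_1=(v_1,v_2)$. So $\mu_T(s_1)\geq 2$ already from $v_3$ alone. I then invoke Lemma \ref{lemma:3}: in any tour of a geometric graph in $\mathbb{S}^1$, all segments have depths of the same parity. Since $\mu_T(s_3)=0$ is even, $\mu_T(s_1)$ must also be even; combined with $\mu_T(s_1)\geq 2$ this forces $\mu_T(s_1)\in\{2,4,6,\dots\}$, which is not yet enough. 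To push the bound to $4$, I observe that the two edges of $T$ at $v_2$ must also be accounted for: $v_2$ has degree $2$ in $T$, and at least one of its edges is not one of the two $v_3$-edges already counted (indeed none of them is, since $(v_2,v_3)\notin T$), so $v_2$ contributes further coverings of $s_1$ unless both of $v_2$'s edges go to the right of $v_2$ — but an edge of $v_2$ going to the right of $v_2$ would have to stay left of $v_3$ (else it crosses $s_2$ an even total count... ) — here I need to be careful: an edge from $v_2$ to the right must land on $v_3$ (impossible) or jump over $v_3$, which would cross $s_3$ and violate $\mu_T(s_3)=0$. Hence both edges at $v_2$ go strictly left of $v_1$ as well, each covering $s_1$, giving two more coverings; since the $v_2$-edges and the $v_3$-edges are four distinct edge-slots but possibly shared edges, I count distinct edges: an edge covering $s_1$ incident to both $v_2$ and $v_3$ is exactly the edge $(v_2,v_3)$, which is excluded, so the (at least two) $v_2$-edges and the (exactly two) $v_3$-edges covering $s_1$ are all distinct, yielding $\mu_T(s_1)\geq 4$.

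The main obstacle I anticipate is the bookkeeping in the last step: ruling out that an edge incident to $v_2$ or $v_3$ goes to the right without crossing $s_3$, and confirming that the edges counted from $v_2$ and from $v_3$ are genuinely distinct (so the two contributions of $2$ add rather than overlap). This is where the hypothesis ``$s_2$ is not an edge of $T$'' does the real work — it is precisely what prevents the $v_2$-edge and $v_3$-edge from coinciding — and where $\mu_T(s_3)=0$ is used twice (once for each of $v_2,v_3$) to force edges leftward. Once distinctness is secured, $\mu_T(s_1)\geq 4$ follows immediately, and the parity appeal to Lemma \ref{lemma:3} is not even strictly needed for the ``$\geq 4$'' conclusion, though it confirms consistency; I would phrase the final argument purely through the degree/crossing count to keep it self-contained.
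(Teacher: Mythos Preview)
Your argument is correct and is essentially the same as the paper's: both proofs show that the two tour-edges at $v_{i+2}$ and the two at $v_{i+1}$ must all go to the left (since $\mu_T(s_3)=0$ blocks rightward edges and $s_2\notin T$ blocks the edge between them), and that these four edges are pairwise distinct precisely because $s_2\notin T$, giving $\mu_T(s_1)\ge 4$. Two small cleanups: your phrase ``strictly left of $v_1$'' for the $v_2$-edges should be ``to $v_1$ or strictly left of $v_1$'' (the edge $s_1$ itself is allowed and still covers $s_1$), and the parity detour via Lemma~\ref{lemma:3} can be dropped entirely, as you yourself note.
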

\begin{proof}
    We prove the case where $s_2 \notin T$; the argument for $s_4 \notin T$ is symmetric. Let $s_1 = (v_i, v_{i+1})$, $s_2 = (v_{i+1}, v_{i+2})$, and $s_3 = (v_{i+2}, v_{i+3})$. Since $\mu_T(s_3) = 0$, the two edges $e_1, e_2 \in T$ incident to $v_{i+2}$ cannot cover $s_3$, so both must cover $s_1$. Moreover, because $s_2 \notin T$, the two edges $e_3, e_4 \in T$ incident to $v_{i+1}$ must be distinct from $e_1$ and $e_2$. Additionally, these edges also cannot cover $s_2$ or $s_3$ (since $\mu_T(s_3) = 0$), so they too must cover $s_1$. Hence, $s_1$ is covered by at least four edges of $T$, and $\mu_T(s_1) \geq 4$. \hfill$\qed$
\end{proof}

The following claim follows directly from Claim \ref{observation.new1}, noting that in any pair $(T_1, T_2)$ of edge-disjoint tours such that $\mu_{T_1}(s_3) = \mu_{T_2} = 0$, segment $s_2$ (resp. $s_4$) can appear as an edge in at most one tour; see Figure \ref{fig:aFivePiece} for an illustration. 

\begin{figure}[!ht]
    \centering
    \begin{tikzpicture}
\node[draw=black,circle,inner sep=0pt,minimum size=1.8em] (v0) at (-2.5,2) {\footnotesize $v_i$};
\node[draw=black,circle,inner sep=0pt,minimum size=1.8em] (v1) at (-1.5,2) {\footnotesize $v_{i+1}$};
\node[draw=black,circle,inner sep=0pt,minimum size=1.8em] (v2) at (-0.5,2) {\footnotesize $v_{i+2}$};
\node[draw=black,circle,inner sep=0pt,minimum size=1.8em] (v3) at (0.5,2) {\footnotesize $v_{i+3}$};
\node[draw=black,circle,inner sep=0pt,minimum size=1.8em] (v4) at (1.5,2) {\footnotesize $v_{i+4}$};
\node[draw=black,circle,inner sep=0pt,minimum size=1.8em] (v5) at (2.5,2) {\footnotesize $v_{i+5}$};

\node (t1) at (-3.8,2) {\footnotesize $T_1: $};

\node[draw=black,circle,inner sep=0pt,minimum size=1.8em] (v0) at (-2.5,0.3) {\footnotesize $v_i$};
\node[draw=black,circle,inner sep=0pt,minimum size=1.8em] (v1) at (-1.5,0.3) {\footnotesize $v_{i+1}$};
\node[draw=black,circle,inner sep=0pt,minimum size=1.8em] (v2) at (-0.5,0.3) {\footnotesize $v_{i+2}$};
\node[draw=black,circle,inner sep=0pt,minimum size=1.8em] (v3) at (0.5,0.3) {\footnotesize $v_{i+3}$};
\node[draw=black,circle,inner sep=0pt,minimum size=1.8em] (v4) at (1.5,0.3) {\footnotesize $v_{i+4}$};
\node[draw=black,circle,inner sep=0pt,minimum size=1.8em] (v5) at (2.5,0.3) {\footnotesize $v_{i+5}$};

\node (t1) at (-3.8,0.3) {\footnotesize $T_2: $};

\draw (0.5,1.4) -- (2.5,1.4);
\draw (-0.5,1.5) -- (-2.5,1.5);
\draw (-0.5,1.4) -- (-2.5,1.4);
\draw (-1.5,2.5) -- (-2.5,2.5);
\draw (-1.5,2.6) -- (-2.5,2.6);
\draw (1.5,2.5) -- (2.5,2.5);

\draw (0.5,1.5) -- (1.5,1.5);

\draw (-2.5,-0.3) -- (-0.5,-0.3);
\draw (2.5,-0.2) -- (0.5,-0.2);
\draw (2.5,-0.3) -- (0.5,-0.3);
\draw (-1.5,0.8) -- (-2.5,0.8);

\draw (1.5,0.8) -- (2.5,0.8);
\draw (1.5,0.9) -- (2.5,0.9);
\draw (-1.5,-0.2) -- (-0.5,-0.2);
\end{tikzpicture}
    \caption{A sequence $(s_1, s_2, s_3, s_4, s_5)$ of five consecutive segments, where the middle segment $s_3=(v_{i+2}, v_{i+3})$ has depth 0 w.r.t. two edge-disjoint tours $T_1$ and $T_2$. For simplicity, tour edges are omitted; in their place, intervals are drawn to denote the vertices and segments covered by the tours.}
    \label{fig:aFivePiece}
\end{figure}

\begin{claim} \label{corollary:1.new}
    In any pair $(T_1, T_2)$ of disjoint even-depth tours such that $\mu_{T_1}(s_3) = \mu_{T_2}(s_3) = 0$, the depth of $s_1$ and $s_5$ w.r.t. one of the tours is at least 4. 
\end{claim} 

Claim \ref{corollary:1.new} holds for the case of two disjoint even-depth tours. Consider now a pair of disjoint tours, where one is an even-depth tour and the other is an odd-depth tour. We make the following claim. 

\begin{claim} \label{claim.new}
    Let $(T_1, T_2)$ be a pair of disjoint tours in $G$ such that $T_1$ is an even-depth tour and $T_2$ is an odd-depth tour. If $\mu_{T_1}(s_3) = 0$, then at least one of the following holds:
    \begin{enumerate}
        \item $\mu_{T_1}(s_3) + \mu_{T_2}(s_3) \geq 3$, or
        \item $\mu_{T_1}(s_1) + \mu_{T_2}(s_1) \geq 5$ and $\mu_{T_1}(s_5) + \mu_{T_2}(s_5) \geq 5$.
    \end{enumerate}
\end{claim}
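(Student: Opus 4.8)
The plan is to first dispose of the trivial case in which alternative~1 already holds, and otherwise (assuming it fails) to prove the two inequalities of alternative~2. Since $\mu_{T_1}(s_3)=0$, alternative~1 amounts to $\mu_{T_2}(s_3)\ge 3$; if it fails then $\mu_{T_2}(s_3)\le 2$, and because $T_2$ is an odd-depth tour $\mu_{T_2}(s_3)$ is odd, so in fact $\mu_{T_2}(s_3)=1$. Write $s_1=(v_i,v_{i+1})$, $s_2=(v_{i+1},v_{i+2})$, $s_3=(v_{i+2},v_{i+3})$; I will say that an edge of a tour incident to $v_{i+1}$ or $v_{i+2}$ goes \emph{left} or \emph{right} according to the direction of its shorter arc, so that (as in the proof of Claim~\ref{observation.new1}) a left edge at $v_{i+1}$ covers $s_1$, a right edge at $v_{i+1}$ other than $s_2$ covers $s_3$, a left edge at $v_{i+2}$ other than $s_2$ covers $s_1$, and every right edge at $v_{i+2}$ covers $s_3$. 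By the left--right symmetry of the window it suffices to prove $\mu_{T_1}(s_1)+\mu_{T_2}(s_1)\ge 5$, the bound for $s_5$ following from the mirror-image argument at $v_{i+3},v_{i+4}$.

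The first main step bounds the contribution of the even-depth tour $T_1$, according to whether $s_2\in T_1$. If $s_2\notin T_1$, then Claim~\ref{observation.new1} applied to $T_1$ (using $\mu_{T_1}(s_3)=0$) yields $\mu_{T_1}(s_1)\ge 4$; combined with $\mu_{T_2}(s_1)\ge 1$ (as $T_2$ is odd-depth) this already gives a sum of at least $5$. If $s_2\in T_1$, then since $\mu_{T_1}(s_3)=0$ no $T_1$-edge covers $s_3$, so neither of the two $T_1$-edges at $v_{i+2}$ goes right; one of them is $s_2$ and the other is a left edge distinct from $s_2$, which therefore covers $s_1$. Hence $\mu_{T_1}(s_1)\ge 1$, and by evenness of $T_1$ this improves to $\mu_{T_1}(s_1)\ge 2$.

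The second main step handles this remaining case by showing $\mu_{T_2}(s_1)\ge 3$, using $s_2\notin T_2$ (edge-disjointness) together with $\mu_{T_2}(s_3)=1$. Look at the two $T_2$-edges at $v_{i+1}$. If they both go left, both cover $s_1$, so $\mu_{T_2}(s_1)\ge 2$. Otherwise at least one goes right, and being distinct from $s_2$ it covers $s_3$; since $\mu_{T_2}(s_3)=1$ it is the \emph{only} $T_2$-edge covering $s_3$, so no $T_2$-edge incident to $v_{i+2}$ may cover $s_3$ (such an edge would be distinct from it and force $\mu_{T_2}(s_3)\ge 2$); hence both $T_2$-edges at $v_{i+2}$ go left, and being distinct from $s_2$ they both cover $s_1$, again giving $\mu_{T_2}(s_1)\ge 2$. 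In either sub-case oddness of $T_2$ boosts this to $\mu_{T_2}(s_1)\ge 3$, so $\mu_{T_1}(s_1)+\mu_{T_2}(s_1)\ge 2+3=5$. Together with the first step and the symmetric argument for $s_5$, this proves alternative~2.

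Since the whole argument is just local bookkeeping of edge directions at $v_{i+1}$ and $v_{i+2}$, I expect the only delicate points to be applying the parity ``upgrades'' ($\ge 1\Rightarrow\ge 2$ for the even tour, $\ge 2\Rightarrow\ge 3$ for the odd tour) only after an honest integer bound has been secured, and making sure the five segments can be taken to sit in the interior of the reference cycle $C_n$ (harmless for $n$ large, with the few small cases absorbed by Claim~\ref{claim:5}) so that ``left'' and ``right'' are unambiguous and the linear picture of Claim~\ref{observation.new1}'s proof transfers verbatim.
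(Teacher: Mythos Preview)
Your proof is correct and follows essentially the paper's approach: reduce to $\mu_{T_2}(s_3)=1$, then use $\mu_{T_1}(s_3)=0$ together with edge-disjointness to force the local edge directions at $v_{i+1},v_{i+2}$ (invoking Claim~\ref{observation.new1} when $s_2\notin T_1$, and a direct count when $s_2\in T_1$) and finish with parity upgrades. The only difference is organizational---the paper splits into three cases on the joint membership of $s_2,s_4$ in $T_1$ and treats $s_1$ and $s_5$ together within each case, whereas you handle the $s_1$ bound via two cases on $s_2$ alone and appeal to left--right symmetry for $s_5$; your packaging is slightly cleaner but the underlying local arguments coincide.
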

%
% \begin{claim} \label{claim.new}
%      Let $(T_1, T_2)$ be a pair of disjoint tours in $G$ such that $T_1$ is an even-depth tour and $T_2$ is an odd-depth tour. If $\mu_{T_1}(s_3) = 0$, then at least one of the following holds: (i) $\mu_{T_1}(s_3) + \mu_{T_2}(s_3) \geq 3$, or (ii) $\mu_{T_1}(s_1) + \mu_{T_2}(s_1) \geq 5$ and $\mu_{T_1}(s_5) + \mu_{T_2}(s_5) \geq 5$.
% \end{claim}
%
\begin{proof}
    Since $T_2$ is an odd-depth tour, there are two cases for the depth of segment $s_3$ w.r.t. $T_2$: (i) $\mu_{T_2}(s_3) = 1$, or (ii) $\mu_{T_2}(s_3) \geq 3$. The first part of the claim follows directly from assuming (ii), so we focus on proving the second part by assuming (i). We distinguish the following cases: 
    \begin{description}
        \item[Case 1: $s_2 \not\in T_1 \textnormal{ and } s_4 \not\in T_1$.] Follows directly from Claim \ref{observation.new1} and the trivial bounds $\mu_{T_2}(s_1) \geq 1$ and $\mu_{T_2}(s_5) \geq 1$. 
        \item[Case 2: $s_2 \not\in T_1 \textnormal{ or } s_4 \not\in T_1$.] Assume w.l.o.g. that $s_2 \not\in T_1$. 
        By Claim \ref{observation.new1}, we have that $\mu_{T_1}(s_1) \geq 4$. On the other hand, because edge $s_4$ cannot be used by tour $T_2$, at least one edge of $v_{i+3}$ in $T_2$ must cover segment $s_5$. Moreover, because $\mu_{T_2}(s_3) = 1$, segment $s_5$ must also be covered by both of the edges of $v_{i+4}$ in $T_2$. Hence, $\mu_{T_2}(s_5) \geq 3$. Part 2 of the claim then follows by noting that $\mu_{T_2}(s_1) \geq 1$ and $\mu_{T_1}(s_5) \geq 2$. 
        \item[Case 3: $s_2 \in T_1 \textnormal{ and } s_4 \in T_1$.] Since $\mu_{T_2}(s_3) = 1$, both of the edges of $v_{i+4}$ (resp. $v_{i+1}$) must cross segment $s_5$ (resp. $s_1$). Moreover, since $s_4$ (resp. $s_2$) is inaccessible as an edge to $T_2$, at least one edge of $v_{i+3}$ (resp. $v_{i+1}$) must cross segment $s_5$ (resp. $s_1$). Therefore, $\mu_{T_2}(s_5) \geq 3$ (resp. $\mu_{T_2}(s_1) \geq 3$). Since $\mu_{T_1}(s_5) \geq 2$ and $\mu_{T_1}(s_1) \geq 2$, part 2 of the claim follows. 
    \end{description}
    Since Cases 1-3 cover all possibilities for the inclusion of segments as edges of $T_1$, the claim is proven. \hfill$\qed$
    \end{proof}
    
The next two claims concern only odd-depth tours. Let $C_n$ denote the cycle graph formed by the vertex set and segments of $G$. For a given pair $(T_1, T_2)$ of disjoint odd-depth tours in $G$, we say that a sequence $\sigma = (s_1, s_2, \ldots, s_d)$ of consecutive segments $s_i \in C$ is a  \textit{1-section w.r.t. $T_1$ and $T_2$} if $\mu_{T_1}(s_i) = \mu_{T_2}(s_i) = 1$ for all $i \in \{1, \ldots, d\}$. The set of all 1-sections is denoted by $\Sigma(T_1, T_2)$. %We begin with the following claim. 

\begin{claim} \label{claim:3}
    Any 1-section $\sigma \in \Sigma(T_1, T_2)$ has a length of at most 2.
\end{claim}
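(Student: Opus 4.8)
The plan is a short proof by contradiction. Suppose some $1$-section $\sigma = (s_1, s_2, \ldots, s_d) \in \Sigma(T_1, T_2)$ has length $d \geq 3$; I will exhibit a segment lying in both $T_1$ and $T_2$, contradicting their edge-disjointness. By part~(i) of Claim~\ref{claim:5} we may assume $n \geq 5$ (otherwise no pair of disjoint tours exists and $\Sigma(T_1, T_2) = \emptyset$), so that the first three segments $s_1, s_2, s_3$ of $\sigma$ are three distinct consecutive segments of $C_n$ and hence form a legitimate $3$-piece $\gamma$.

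The crux is the tour-analogue of Claim~\ref{claim:6}, already announced in the discussion preceding the present claim: if a $3$-piece $\gamma = (s_1, s_2, s_3)$ has total depth $3$ with respect to an odd-depth tour $T$ --- equivalently, all three of its segments have depth exactly $1$ w.r.t.\ $T$ --- then its middle segment $s_2$ is an edge of $T$. Since $\sigma$ is a $1$-section, $\mu_{T_1}(s_i) = \mu_{T_2}(s_i) = 1$ for $i \in \{1, 2, 3\}$, so $\gamma$ has total depth $3$ with respect to $T_1$ and also with respect to $T_2$. Applying the analogue of Claim~\ref{claim:6} once to $T_1$ and once to $T_2$ gives $s_2 \in T_1$ and $s_2 \in T_2$, whence $s_2 \in T_1 \cap T_2$ --- the desired contradiction. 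Therefore $d \leq 2$.

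I do not expect a genuine obstacle here; the only point to keep honest is the invoked tour-analogue of Claim~\ref{claim:6}, but its proof is exactly that of Claim~\ref{claim:6} with ``Hamiltonian path'' replaced by ``odd-depth tour'' and Observation~\ref{fact:2} replaced by the odd-depth property: if the middle segment $s_2$ of $\gamma$ were not an edge of $T$, then --- since every vertex of a tour has degree $2$ --- inspecting the two edges of $T$ incident to an interior vertex of $\gamma$ shows that one of the three segments of $\gamma$ must have depth at least $2$ w.r.t.\ $T$, hence (being odd) at least $3$, so the total depth of $\gamma$ would exceed $3$. Should one wish to bypass the analogue altogether, the same contradiction is reached directly: for each of $T_1$ and $T_2$, unless the middle segment itself is used as an edge, two distinct edges of the tour both cover $s_2$, contradicting $\mu(s_2) = 1$.
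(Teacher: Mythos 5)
Your proof is correct and takes essentially the same route as the paper: the paper's own proof also argues by contradiction on a 3-piece inside the 1-section, showing via the degree-2 argument that neither interior vertex can be covered by either tour, so the middle segment must be an edge of both $T_1$ and $T_2$, contradicting edge-disjointness. Your detour through the (announced but unstated) tour-analogue of Claim~\ref{claim:6} is just a repackaging of that same argument, and your fallback direct argument coincides with the paper's.
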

\begin{proof}
     For the sake of contradiction, suppose that $\sigma$ has length 3 or more. Let $(s_1, s_2, s_3)$ denote an arbitrary subsequence of length 3 (i.e., a 3-piece) in $\sigma$, where $s_1 = (v_1, v_2)$, $s_2 = (v_2, v_3)$, and $s_3 = (v_3, v_4)$. Consider such 3-piece w.r.t. tours $T_1$ and $T_2$. We know that the vertices $v_2$ or $v_3$ cannot be covered by either tour, since that would cause segments $s_1$ or $s_3$ to have depth greater than 1 w.r.t. at least one of $T_1$ and $T_2$. Therefore, for the segment $s_2$ to have a total depth of 2, the edge $(v_2, v_3)$ must be present in both tours $T_1$ and $T_2$. But the tours are supposed to be edge-disjoint. Hence, such a 3-piece cannot exist, which gives the required contradiction. In consequence, any 1-section must have a length of at most 2. \hfill$\qed$
\end{proof}

In other words, Claim \ref{claim:3} states that at most two consecutive segments in $C_n$ have each depth 1 w.r.t. both $T_1$ and $T_2$. Now, let $\sigma_1, \sigma_2 \in \Sigma(T_1, T_2)$ be an arbitrary pair of 1-sections. We define the distance between $\sigma_1$ and $\sigma_2$ to be the minimum number of segments separating a segment in $\sigma_1$ from a segment in $\sigma_2$. %We make the following claim.

\begin{claim} \label{claim:4}
    The distance between any two distinct 1-sections $\sigma_1, \sigma_2 \in \Sigma(T_1, T_2)$ is at least 2. 
\end{claim}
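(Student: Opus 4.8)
The plan is to argue by contradiction: assume that two distinct 1-sections $\sigma_1,\sigma_2\in\Sigma(T_1,T_2)$ are at distance exactly $1$ (since distinct 1-sections are separated by at least one segment, ruling this case out establishes the claim). By the definition of distance there are three consecutive segments $s_a,s,s_b$ of $C_n$ with $s_a\in\sigma_1$, $s_b\in\sigma_2$, and $s$ the unique segment separating them. Thus $\gamma:=(s_a,s,s_b)$ is a $3$-piece and $\mu_{T_1}(s_a)=\mu_{T_2}(s_a)=\mu_{T_1}(s_b)=\mu_{T_2}(s_b)=1$. If we also had $\mu_{T_1}(s)=\mu_{T_2}(s)=1$, then $\gamma$ itself would be a $1$-section of length $3$, contradicting Claim~\ref{claim:3}; hence $\mu_{T_j}(s)\ne 1$ for some $j\in\{1,2\}$, and since $T_j$ is an odd-depth tour, Lemma~\ref{lemma:3} gives $\mu_{T_j}(s)\ge 3$.

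The next step is to pin down $\mu_{T_j}(s)$ exactly. I would first record the elementary fact that the depths, w.r.t.\ a single tour, of two segments sharing a vertex $p$ differ by at most $2$: each of the exactly two tour edges incident to $p$ covers exactly one of those two segments, and every tour edge not incident to $p$ covers both of them or neither, so the two depths can differ only through the two edges at $p$. Applied with $\mu_{T_j}(s_a)=1$, this forces $\mu_{T_j}(s)\le 3$, hence $\mu_{T_j}(s)=3$. Consequently $\gamma$ has total depth $1+3+1=5$ w.r.t.\ $T_j$ with its middle segment at depth $3$, so the odd-depth-tour analogue of Claim~\ref{claim:7} forces $s$ to be an edge of $T_j$.

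Finally I would repeat the analysis for the other tour $T_{3-j}$, whose depth on $s$ is odd by Lemma~\ref{lemma:3}. If $\mu_{T_{3-j}}(s)=1$, then $\gamma$ has total depth $1+1+1=3$ w.r.t.\ $T_{3-j}$, so the odd-depth-tour analogue of Claim~\ref{claim:6} makes $s$ an edge of $T_{3-j}$. If instead $\mu_{T_{3-j}}(s)\ge 3$, the same ``differ by at most $2$'' bound (now with $\mu_{T_{3-j}}(s_a)=1$) gives $\mu_{T_{3-j}}(s)=3$, and the analogue of Claim~\ref{claim:7} again makes $s$ an edge of $T_{3-j}$. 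Either way $s$ is an edge of both $T_1$ and $T_2$, contradicting their edge-disjointness. Hence no two distinct 1-sections can be at distance $1$, and the distance is always at least $2$.

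The only genuinely fiddly point I anticipate is the bookkeeping behind the ``adjacent segments differ in depth by at most $2$'' fact, and the sanity check that $\gamma$ is a properly situated $3$-piece (i.e.\ that $s_a$, $s$, $s_b$ really are three consecutive segments, which is exactly what distance $1$ buys us). Given those, the proof is a direct combination of Claim~\ref{claim:3}, Lemma~\ref{lemma:3}, and the already-established odd-depth-tour versions of Claims~\ref{claim:6} and~\ref{claim:7}.
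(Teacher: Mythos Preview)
Your argument is correct and takes a genuinely different route from the paper's. The paper argues directly: assuming distance $1$ and (w.l.o.g.) $\mu_{T_1}(g)>1$, it observes that one of the shared vertices of $g$ must be covered by $T_1$, and then asserts that this forces $\mu_{T_1}(f)>1$ or $\mu_{T_1}(h)>1$, contradicting the $1$-section hypothesis. That last implication, however, is not substantiated and in fact fails as stated: the tour fragment $\ldots,v_0,v_1,v_3,v_2,v_4,v_5,\ldots$ (exactly the depth-$5$ configuration of Figure~\ref{fig:aThreePiece}) has $\mu_{T_1}(f)=\mu_{T_1}(h)=1$ and $\mu_{T_1}(g)=3$, with both inner vertices covered. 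Your approach sidesteps this by bringing the second tour into play: you pin down $\mu_{T_j}(s)=3$ via the ``adjacent depths differ by at most $2$'' observation (which is exactly the content of the case split in the proof of Lemma~\ref{lemma:3}), invoke the odd-depth-tour analogue of Claim~\ref{claim:7} to get $s\in T_j$, and then run the same dichotomy on $T_{3-j}$ using the analogues of Claims~\ref{claim:6} and~\ref{claim:7} to conclude $s\in T_{3-j}$ as well, so that edge-disjointness delivers the contradiction. This is slightly longer but watertight; the paper's one-tour shortcut would need additional justification to be made rigorous.
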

\begin{proof}
    For the sake of contradiction, suppose that the distance between $\sigma_1$ and $\sigma_2$ is exactly 1. (Notice that $\sigma_1$ and $\sigma_2$ having distance 0 is the same as having a longer 1-section resulting from the concatenation of $\sigma_1$ with $\sigma_2$.) That is, there exist three consecutive segments $f, g, h \in C_n$, with $f = (v_1, v_2)$, $g = (v_2, v_3)$, and $h = (v_3, v_4)$, such that $\mu_{T_1}(f) = \mu_{T_2}(f) = 1$, $\mu_{T_1}(h) = \mu_{T_2}(h) = 1$, and $\max(\mu_{T_1}(g), \mu_{T_2}(g)) > 1$. W.l.o.g., suppose that $\mu_{T_1}(g) > 1$. Hence, at least one of $v_2$ or $v_3$ is covered by $T_1$. But this would imply that one of the segments $f$ or $h$ is also covered by $T_1$, meaning that $\mu_{T_1}(f) > 1$, or $\mu_{T_1}(h) > 1$. This violates the condition that all segments in $\sigma_1$ and $\sigma_2$ have depth 1 w.r.t. both tours $T_1$ and $T_2$. Hence, we obtain the required contradiction, and the claim is proven. \hfill$\qed$
\end{proof}

With Claims \ref{corollary:1.new}-\ref{claim:4}, we have the necessary tools to provide a lower bound for the PoD of \disjointTSP{} in a general metric, thus proving Lemma \ref{lemma:poDofTSP2}. 

\poDofTSPTwo*
\begin{proof}
    Consider a metric graph $G = (V, E)$ in $\realCircle$ of size $n > 6$, and let $C = (V, E')$ be the cycle graph whose edge set is formed by the segments of $G$. Let $f, g, h \in E$ be any triple of consecutive edges of $C$. The edge weights of $G$ are defined as follows: $w(e) = 1$ for all $e \in E \setminus\{f, h\}$, and $w(f) = w(h) = W$. 

    Now, let $T_1$ and $T_2$ be a pair of edge-disjoint tours in $G$. There are four possibilities depending on whether the tours are odd- or even-depth tours. First, we consider the case when both $T_1$ and $T_2$ are odd-depth tours. With the aid of Claims \ref{claim:3} and \ref{claim:4}, we show that the depth of at least one of the segments $f$ or $h$ is at least 3 w.r.t. one of the two tours $T_1$ or $T_2$, meaning that the cost of one of the tours is at least $4W$.  
    
    For the sake of contradiction, suppose that segments $f$ and $h$ both satisfy that $\mu_{T_1}(f) = \mu_{T_2}(f) = 1$ and $\mu_{T_1}(h) = \mu_{T_2}(h) = 1$. There are two cases for the depth of segment $g$ w.r.t. $T_1$ and $T_2$: (i) at least one of $\mu_{T_1}(g)$ or $\mu_{T_2}(g)$ is greater than 1, or (ii) $\mu_{T_1}(g) = \mu_{T_2}(g) = 1$. Consider Case (i) first. W.l.o.g., assume that $\mu_{T_1}(g) > 1$. Then, the distance between the 1-sections $\sigma_1$ and $\sigma_2$ such that $f \in \sigma_1$ and $h \in \sigma_2$, is 1. But, by Claim \ref{claim:4}, we know that such distance must be 2 or more. Hence, we have a contradiction. Now, we turn to Case (ii). If $\mu_{T_1}(g) = \mu_{T_2}(g) = 1$, the sequence $\sigma = (f, g, h)$ of contiguous segments of $C$ would be an 1-section of length 3. But, by Claim \ref{claim:3}, we know that to be impossible. Hence, we get the required second contradiction. Having proved Cases (i) and (ii), we conclude that at least one of $\mu_{T_1}(f)$, $\mu_{T_2}(f)$, $\mu_{T_1}(h)$, or $\mu_{T_2}(h)$ is greater than 1. W.l.o.g., assume that $\mu_{T_1}(f) > 1$. Further, because of Lemma \ref{lemma:3}, it must be that $\mu_{T_2}(f) \geq 3$. 
    
    Now, since we are proving a lower bound, we consider the best case scenario when $\mu_{T_1}(h) = \mu_{T_2}(h) = 1$. Then, the cost of $T_1$ is $c(T_1) \geq 4 \cdot W + (n - 2)$. On the other hand, the cost of an optimal tour is $OPT = 2 \cdot W + (n - 2)$. Then, we have: 
    \begin{equation}
        PoD(\disjointSHP{}) \geq \frac{4W+ (n - 2)}{2W + (n - 2)} = 2 - \frac{n - 2}{2W + (n - 2)}. \label{eq.poD.TSP}
    \end{equation}
    Thus, for any $\varepsilon > 0$, we can obtain a PoD of $2 - \varepsilon$ by setting $W:= \frac{(1 - \varepsilon)(n-2)}{2\varepsilon}$. 

    Consider now the case when at least one of the tours is an even-depth tour, which we assume w.l.o.g. to be $T_1$. There are two subcases. First, if $T_1$ covers both segments $f$ and $h$, then $c(T_1) \geq 4 \cdot W + (n - 2)$, while $OPT = 2 \cdot W + (n - 2)$. Hence, we obtain again inequality \eqref{eq.poD.TSP}. The second subcase is when $\mu_{T_1}(f) = 0$ or $\mu_{T_1}(h) = 0$. (Note that both cannot happen simultaneously, as $T_1$ would not be a tour.) W.l.o.g., assume that $\mu_{T_1}(f) = 0$, and consider the sequence of five consecutive segments $(s_1, s_2, s_3, s_4, s_5)$ such that $s_3 = f$, $s_4 = g$, and $s_5 = h$. There are two further cases depending on whether $T_2$ is an even- or odd-depth tour. 
    
    If $T_2$ is an even-depth tour, the only case of interest is when $\mu_{T_2}(f) = 0$ (otherwise, $T_2$ covers both segments $f$ and $h$ twice and $c(T_2) \geq 4 \cdot W + (n - 2)$). By Claim \ref{corollary:1.new}, the depth of segment $h$ w.r.t. one of $T_1$ or $T_2$ is 4; which we assume w.l.o.g. to be $T_2$. Then $c(T_2) \geq 4 \cdot W + (n - 2)$ and we recover inequality \eqref{eq.poD.TSP}. The only case remaining is when $T_2$ is an odd-depth tour, where it follows from Claim \ref{claim.new} that at least one of $T_1$ or $T_2$ has a cost that is at least $4 \cdot W + (n - 2)$, again yielding the lower bound from inequality \eqref{eq.poD.TSP}.

    Since we have covered all possible combinations of tour types (i.e., odd or even), and in each case shown that the ratio in \eqref{eq.poD.TSP} holds, the lemma is proven. \hfill$\qed$
\end{proof}

\subsection{Proof of Lemma \ref{lemma:6}} \label{appendix.sec4.Lemma.12}
\begin{proof}
    If $n$ is odd, then $c(T_1) = c(H)$ because $T_1$ covers each segment of $G$---an edge of $H_\mathrm{opt}$---exactly once. On the other hand, since $T_2$ covers each segment of $G$ exactly twice, we have $c(T_2) = 2 \cdot c(H)$. Thus, $cost(T_1, T_2) / c(H_\mathrm{opt}) = 2$. 
    
    For the case where $n$ is even, $T_1$ covers $n-2$ segments exactly once, and one segment $s$ exactly thrice, giving $c(T_1) = c(H_\mathrm{opt}) + 2 \cdot c_s$, where $c_s$ is the cost of segment $s$. Without loss of generality, we may assume that $s$ is the segment with the lowest cost in $H_\mathrm{opt}$. (Otherwise, simply shift both cycles $T_1$ and $T_2$ until segment $s$ indeed achieves minimum cost.) Hence, we have $n \cdot c_s \leq c(H_\mathrm{opt})$. 
    
    On the other hand, $T_2$ covers every segment except for $s$ exactly twice, resulting in $c(T_2) = 2 \cdot(c(H_\mathrm{opt}) - c_s)$. Because we assumed that $c_s$ is of minimum cost, and there is no pair of edge-disjoint tours for $n < 4$, it must be that $cost(T_1, T_2) = c(T_2)$. Therefore, $cost(T_1, T_2) / c(H_\mathrm{opt}) < 2$ and the lemma is proven.  \hfill$\qed$
\end{proof}

\end{document}